\documentclass[a4paper,onecolumn,11pt,accepted=2025-06-06]{quantumarticle}
\pdfoutput=1
\usepackage[utf8]{inputenc}
\usepackage[english]{babel}
\usepackage[T1]{fontenc}
\usepackage{amsmath}
\usepackage{hyperref}
\usepackage{fixltx2e}
\usepackage[numbers,sort&compress]{natbib}
\usepackage{amssymb,amscd,amsthm, verbatim,amsmath,color,fancyhdr, mathrsfs}
\usepackage{graphicx}
\usepackage{turnstile}
\usepackage[letterpaper, left=2.5cm, right=2.5cm, top=2.5cm,
bottom=2.5cm,dvips]{geometry}
\usepackage{float}
\usepackage{epsfig}
\usepackage{listings}
\usepackage{xcolor}
\usepackage[braket]{qcircuit}
\usepackage{amsmath}
\usepackage{tikz}
\usepackage{cleveref}
\usepackage{lipsum}
\usepackage{tabularx}
\usepackage{array}
\usepackage{booktabs}

\newtheorem{thm}{Theorem}[section]

\newtheorem{lem}[thm]{Lemma}
\newtheorem{rem}[thm]{Remark}
\newtheorem{cor}[thm]{Corollary}

\newtheorem{defn}[thm]{Definition}

\newcommand{\x}{\boldsymbol x}
\newcommand{\A}{\boldsymbol A}
\newcommand{\LL}{\boldsymbol{L}}
\newcommand{\T}{\boldsymbol T}
\newcommand{\W}{\boldsymbol W}

\newcommand{\N}{\boldsymbol N}
\newcommand{\U}{\boldsymbol U}
\newcommand{\D}{\boldsymbol{D}}
\newcommand{\M}{\boldsymbol M}
\newcommand{\C}{\boldsymbol C}
\newcommand{\bb}{\boldsymbol b}
\newcommand{\y}{\boldsymbol y}
\newcommand{\B}{\boldsymbol B}
\newcommand{\CC}{\boldsymbol C}
\newcommand{\I}{\boldsymbol I}
\newcommand{\0}{\boldsymbol O}
\newcommand{\z}{\boldsymbol z}
\newcommand{\e}{\boldsymbol e}
\newcommand{\vv}{\boldsymbol v}
\newcommand{\V}{\boldsymbol V}
\newcommand{\E}{\boldsymbol E}
\newcommand{\G}{\boldsymbol G}
\newcommand{\F}{\boldsymbol F}
\newcommand{\HH}{\boldsymbol H}
\newcommand{\poly}{\mathrm{poly}}
\newcommand{\X}{\boldsymbol X}

\newcommand{\diff}{\mathrm{d}}
\newcommand{\bbC}{\mathbb{C}}
\newcommand{\bbR}{\mathbb{R}}

\newcommand{\calT}{\mathcal{T}}

\begin{document}
\title{A quantum algorithm for linear autonomous differential equations using Pad\'e approximation}

\author{Dekuan Dong}
\email{dkdong21@m.fudan.edu.cn}
\orcid{0009-0001-5553-0231}
\affiliation{School of Mathematical Sciences, Fudan University}

\author{Yingzhou Li}
\email{yingzhouli@fudan.edu.cn}
\orcid{0000-0003-1852-3750}
\thanks{The research was supported in part by NSFC under grant 12271109,
STCSM under grant 22TQ017 and 24DP2600100, and SIMIS under grant
SIMIS-ID-2024-(CN).}
\affiliation{School of Mathematical Sciences, Fudan University}
\affiliation{Shanghai Key Laboratory for Contemporary Applied Mathematics}

\author{Jungong Xue}
\email{xuej@fudan.edu.cn}
\thanks{The research  was supported in part by the National Science Foundation of China Grant 12171101 and the Laboratory of Mathematics for Nonlinear Science, Fudan University.}
\affiliation{School of Mathematical Sciences, Fudan University}

\maketitle

\begin{abstract}
    We propose a novel quantum algorithm for solving linear autonomous ordinary differential equations (ODEs) using the Pad\'e approximation. For linear autonomous ODEs, the discretized solution can be represented by a product of matrix exponentials. The proposed algorithm approximates the matrix exponential by the diagonal Pad\'e approximation, which is then encoded into a large, block-sparse linear system and solved via quantum linear system algorithms (QLSA). The detailed quantum circuit is given based on quantum oracle access to the matrix, the inhomogeneous term, and the initial state. The complexity of the proposed algorithm is analyzed. Compared to the method based on Taylor approximation, which approximates the matrix exponential using a $k$-th order Taylor series, the proposed algorithm improves the approximation order $k$ from two perspectives: 1) the explicit complexity dependency on $k$ is improved, and 2) a smaller $k$ suffices for the same precision. Numerical experiments demonstrate the advantages of the proposed algorithm comparing to other related algorithms.
\end{abstract}


\section{Introduction}

Quantum computing, which fundamentally differs from
classical computing, leverages quantum mechanics to perform computations using
quantum states and quantum gates. For many problems, quantum computing offers a
theoretical advantage over classical computing. In certain cases,
quantum algorithms executed on quantum computers can achieve exponential
speedup compared to their classical
counterparts~\cite{doi:10.1137/S0097539795293172, PhysRevLett.103.150502,
Montanaro2016}. Solving large-scale ordinary differential equations (ODEs) is one such problem where quantum algorithms show significant potential. 

Solving large-scale ODEs is a fundamental problem in science, with numerous real-world applications across various domains, including weather and
climate prediction, chemical reaction modeling, fluid dynamics,
and finance, such as option pricing models, etc. ODEs can generally be divided into
two categories: linear ODEs and nonlinear ODEs. In each category, ODEs can further be divided into autonomous and non-autonomous ODEs. A
general first-order linear ODEs admit,
\begin{equation} \label{eq:1.1}
    \begin{cases}
        \frac{\diff \x}{\diff t} = \A(t) \x(t) + \bb(t), \quad t \in [0,T],\\
        \x(0) = \x_0,
    \end{cases}
\end{equation}
where $\A(t) \in \bbC^{n\times n}$ is a matrix or discretized operator, $\bb(t) \in \bbC^n$ is the inhomogeneous term, $\x_0 \in \mathbb C^n$ is the initial state, $n$ is the dimension of the system, and $T$ is a final time. When both $\A(t)$ and $\bb(t)$ are time-independent, i.e., $\A(t) = \A$ and $\bb(t) = \bb$, the ODE system \eqref{eq:1.1} is referred to as a first-order autonomous ODE system. Given suitable quantum oracles to access $\A$, $\bb$, and $\x_0$, the goal of this paper is to design an efficient quantum algorithm that produces a quantum state that is $\epsilon$-close to the final state $\ket{\x(T)} =
\frac{\x(T)}{\|\x(T)\|_2}$.

Many quantum algorithms for solving linear ODEs \eqref{eq:1.1} have been proposed in the past decades. These algorithms can be grouped into three categories: a) discretized time steps with Quantum Linear System Algorithm~(QLSA); b) time-ordered integral form with Linear Combination of Hamiltonian Simulation~(LCHS); c) Schr{\"o}dingerization with Hamiltonian simulation. The first category differs from the latter two categories in its use of Hamiltonian simulation. Methods in the first category rely on Hamiltonian simulation underlying the QLSA, whereas methods in the latter two categories apply Hamiltonian simulation directly. Mathematically, methods in the latter two categories yield similar final expressions, although their derivation procedures differ significantly.

Methods in the first category consist of three steps: discretizing the time variable, encoding the discretized linear differential equation into an enlarged linear system, and solving the resulting linear system using QLSA. This approach was first introduced by \cite{berry2014high}, which uses linear multi-step methods to discretize the time variable. However, due to the limitations of the discretization method, the query complexity in \cite{berry2014high} is $\poly(1/\epsilon)$ even if the most efficient QLSA is applied. To improve the dependence on the precision $\epsilon$, \citet{berry2017quantum} applies a truncated Taylor series to approximate the propagator $\exp(\A t)$ for autonomous ODEs and encodes the evaluation process of the truncated Taylor series into a linear system. The dependence on $\epsilon$ is improved to $\poly(\log(1/\epsilon))$. The analysis in \cite{berry2017quantum} requires $\A$ to be diagonalizable, i.e., there exists an invertible matrix $\V$ such that $\V^{-1}\A\V$ is a diagonal matrix, and the final complexity depends linearly on the condition number of $\V$, $\kappa_{\V} = \|\V\|_2\cdot \|\V^{-1}\|_2$. To eliminate the dependence on $\kappa_{\V}$, \citet{krovi2023improved} proposes a modified linear system that also encodes the evaluation process of the truncated Taylor series into a linear system, and improves the analysis in \cite{berry2017quantum}. In~\cite{krovi2023improved}, the resulting complexity depends on $C(\A) := \max_{0 \le t \le T} \|\exp(\A t)\|_2$ instead of $\kappa_{\V}$. More recently, \citet{dong2024investigationquantumalgorithmlinear} revisited the analysis in \cite{berry2017quantum} and obtained a similar complexity result to \cite{krovi2023improved} without modifying the linear system. \citet{wu2025structurepreservingquantumalgorithmslinear} proposed structure-preserving quantum algorithms for both linear and nonlinear Hamiltonian systems. These algorithms maintain the symplectic properties of the underlying dynamics, thereby enhancing the robustness and stability of the numerical methods. Additionally, \citet{Childs2019QuantumSM} and \citet{berry2024quantum} generalize methods in this category to address non-autonomous linear ODEs, i.e., time-dependent $\A(t)$ and $\bb(t)$, using spectral method and Dyson's series, respectively.

The second category of methods originates from \cite{PhysRevLett.131.150603}. These methods are based on a novel integral expression for the non-unitary evolution operator $\calT e^{\int_0^t \A(s) \diff s}$, where $\calT$ denotes the time-ordering operator. By decomposing $\A(t)$ into its Hermitian and anti-Hermitian parts, i.e., $\A(t) = \LL(t) + \imath \HH(t)$, where $\LL(t) = \frac{\A(t) + \A^\dagger(t)}{2}$ and $\HH(t) =
\frac{\A(t) - \A^\dagger(t)}{2 \imath}$, the non-unitary evolution operator can be expressed as an integral,
\begin{equation} \label{eq:identity}
    \calT e^{\int_0^t \A(s) \diff s} = \int_{\bbR}
    \frac{1}{\pi (1 + k^2)} \calT e^{\imath \int_0^t
	(\HH(s) + k\LL(s)) \diff s} \diff k.
\end{equation}
This identity holds under the assumption that $\LL(t)$ is negative semi-definite, i.e.,  $\LL(t) \preceq 0$ for all $t\in[0, T]$. Later, in \cite{An2023QuantumAF}, a large family of identities similar to \cref{eq:identity} is unveiled, and the method in this category is improved to achieve near-optimal dependence on all parameters. The extension to the inhomogeneous case is also possible via Duhamel's principle. Compared to the methods in the first category, methods in this category consider the non-autonomous cases directly and achieve the optimal state preparation cost. However, methods in this category only give the terminal state, whereas methods in the first category provide information throughout the evolution. 

The third category of methods is known as Schr\"odingerization \cite{jin2022quantumsimulationpartialdifferential, PhysRevA.108.032603}. These methods use a warped phase transformation to map the linear differential equation into a higher-dimensional system, where it manifests as a Schr\"{o}dinger equation in the Fourier space of the extra variable. In this way, the linear ODEs with non-unitary dynamics are converted into a system that evolves under unitary dynamics, i.e., a Schr\"{o}dinger equation, which can be simulated on quantum computing via Hamiltonian simulation. Schr\"odingerization has been applied to a wide range of problems, including differential equations with various boundary conditions~\cite{hu2023dilationtheoremschrodingerisationapplications,
jin2024schrodingerisationbasedcomputationallystable,
jin2024quantumsimulationfokkerplanckequation,
jin2023quantumsimulationmaxwellsequations, doi:10.1137/23M1563451,
JIN2024112707, doi:10.1137/23M1566340,
cao2023quantumsimulationtimedependenthamiltonians}, iterative methods in numerical linear algebra~\cite{jin2023quantumsimulationdiscretelinear}, and more. In \cite{jin2024schrodingerizationbasedquantumalgorithms}, Schr\"{o}dingerization is extended to solve linear differential equations with inhomogeneous terms, which is the same problem addressed in this paper. The resulting query complexity depends linearly on $1/\epsilon$. More recently, methods in this category have been adapted for analog quantum simulation~\cite{jin2024analogquantumsimulationparabolic, Jin_2024}. Explicit quantum circuit implementations corresponding to Schr\"odingerization have been explored in~\cite{Hu2024quantumcircuits, jin2025quantumcircuitsheatequation}.

We propose a novel quantum algorithm for solving linear autonomous ODEs~\eqref{eq:1.1} in this paper, which falls within the first category of existing quantum algorithms. Our key contributions and innovations can be summarized as follows.
\begin{itemize}
    \item We replace the Taylor series approximation with the Pad\'e approximation of the matrix exponential in related quantum algorithms~\cite{berry2017quantum, krovi2023improved, dong2024investigationquantumalgorithmlinear}. The Pad\'e approximation, a rational function commonly used in classical computing to approximate matrix exponentials~\cite{higham2005scaling, doi:10.1137/09074721X, moler2003nineteen}, achieves similar accuracy to the Taylor approximation but with a lower order, though it requires a matrix inversion. In \cite{wu2025structurepreservingquantumalgorithmslinear}, the Pad\'e approximation is also applied to preserve the symplectic structure. Their results suggest that our method also preserves the symplectic property when applied to the same Hamiltonian systems. However, our encoding of the Pad\'e approximation differs from theirs.

    \item We propose a matrix encoding for the Pad\'e approximation, denoted as $\LL$. The encoding avoids explicit matrix inversion and maintains a comparable level of complexity to the matrix encoding used for Taylor approximation, as presented in~\cite{berry2017quantum, krovi2023improved, dong2024investigationquantumalgorithmlinear}. In contrast to the Taylor approximation, whose corresponding matrix encoding uses a block lower bidiagonal form, the matrix encoding for the Pad\'e relies on a block upper Hessenberg form. 
    
    \item The condition number of the Pad\'e approximation encoded matrix $\LL$ is estimated, which plays a crucial role in the complexity analysis of the algorithm. Additionally, we propose the block-encoding of the Pad\'e approximation encoded matrix, along with its detailed quantum circuit implementations.

    \item We analyze the overall complexities of the proposed quantum algorithm, including both the oracle query complexity and the quantum gate complexity. When $\A$ is Hermitian and negative semi-definite, our analysis relaxes the common restriction $\|\A h\|_2 \le 1$ found in other related works~\cite{berry2024quantum,berry2017quantum,krovi2023improved,dong2024investigationquantumalgorithmlinear}, where $h$ represents the time step. Compared to other methods in the first category, the proposed algorithm outperforms them both theoretically and numerically.
\end{itemize}

Overall, our proposed quantum algorithm and the corresponding analysis address linear autonomous ODEs. Extensions to the non-autonomous and/or mildly nonlinear ODEs are feasible using similar techniques as those in~\cite{dong2024investigationquantumalgorithmlinear,cao2023quantumsimulationtimedependenthamiltonians,Childs2019QuantumSM,berry2024quantum}. The complexity analysis for the extended algorithm will need to be rederived accordingly, and the dependence on parameters requires further investigation.

The rest of the paper is organized as follows. In \Cref{sec:prel}, we briefly review the method proposed in \cite{berry2017quantum}. \Cref{sec:Poly2Linsys} introduces the linear system encoding via Pad\'e approximation, analyzes the condition number of the linear system, and details the block-encoding of the linear system. \Cref{sec:CompAna} presents the complexity analysis of the proposed algorithm. In \Cref{sec:NumExp}, we compare our method to the one based on Taylor approximation \cite{berry2017quantum,krovi2023improved,dong2024investigationquantumalgorithmlinear} and the LCHS-based method~\cite{PhysRevLett.131.150603,An2023QuantumAF} theoretically. Moreover, numerical experiments are conducted to illustrate the advantages of our method over the method based on Taylor approximation. Finally, \Cref{sec:conclusion} concludes the paper with a discussion on future directions.

\section{Preliminaries}
\label{sec:prel}
Consider a linear differential equation of the form:
\begin{equation}\label{eq:main}
    \begin{cases}
        \frac{\diff \x}{\diff t} = \A\x(t) + \bb,\quad t\in[0,T],\\
        \x(0) = \x_0,
    \end{cases}
\end{equation}
where $\A\in \mathbb C^{n\times n}$ and $\bb\in \mathbb C^{n}$ are time-independent. When $\A$ is non-singular, the solution is given by
\[ \x(t) = \exp(\A t) \x(0) + \left(\exp(\A t) - \I_n\right)\A^{-1} \bb, \quad t\in [0, T]. \]
In order to prepare the quantum state $|\x(T)\rangle=\frac{\x(T)}{\|\x(T)\|_2}$, reference \cite{berry2017quantum} approximates the exponential function $\exp(\cdot)$ by its truncated Taylor series 
\[T_k(z) := \sum_{j=0}^k \frac{z^j}{j!}.\]
Given a suitable number of time steps $m\in \mathbb N_+$ and the time step size $h = T / m$, the following procedure is used to approximate $\x(T)$: 
\begin{equation}\label{eq:proc_taylor}
    \begin{aligned}
        \x(h) &\approx T_k(\A h) \x(0) + \left(T_k(\A h) - \I_n\right)\A^{-1}\bb =: \widehat \x(h),\\
        \x(2h) &\approx T_k(\A h) \widehat \x(h) + \left(T_k(\A h) - \I_n\right)\A^{-1}\bb =: \widehat \x(2h),\\
               &\quad \vdots \\
        \x(mh) &\approx T_k(\A h) \widehat \x\left((m-1)h\right) + \left(T_k(\A h) - \I_n\right)\A^{-1}\bb =: \widehat \x(T).\\ 
    \end{aligned}
\end{equation}
Then, reference \cite{berry2017quantum} shows that each step of the above procedure can be encoded into a linear system. The $s$-th step of the above procedure is encoded in the following linear system:
\begin{equation}\label{eq:berry2017}
    \begin{cases}
        \M_k(\A h) \z^{(s)} = \y_s,\\
        \widehat \x(sh) - (\boldsymbol 1^T \otimes \I_n) \z^{(s)} = 0,
    \end{cases}\quad \forall s = 1, \dots, m,
\end{equation}
where 
\[\M_k(\A h):= \begin{bmatrix}
    \I_n\\ -\A h & \I_n \\
    & \ddots & \ddots & \\
    && -\frac{\A h}k& \I_n
\end{bmatrix}, \quad \y_s :=  \begin{bmatrix}\widehat \x((s-1)h)\\ h\bb \\ \boldsymbol 0 \\\vdots \\ \boldsymbol 0\end{bmatrix} \in \mathbb C^{n(k+1)}, \quad \boldsymbol 1 := \begin{bmatrix}1\\ \vdots \\1\end{bmatrix} \in \mathbb R^{k+1}.\]
Combining the $m$ steps and repeating $\widehat \x(T)$ for $p$ times to boost the success probability, we get the linear system 
\begin{equation}\label{eq:full_M}
    \begin{tikzpicture}
        \draw(0,0)node{$\underset{\displaystyle =:\C_{m,k,p}(\A h)}{\underbrace{\left[
            \begin{array}{cccccccc}
                \M_k(\A h) & & & & & & & \\
                \T & \M_k(\A h) & & & & & & \\
                & \ddots & \ddots & & & & &\\
                && \T & \M_k(\A h) & & & &\\
                &&& -{\boldsymbol 1}^T \otimes \I_n& \I_n& & &\\
                &&&& -\I_n & \I_n& &\\
                &&&&& \ddots & \ddots & \\
                &&&&&& -\I_n & \I_n  
            \end{array}
            \right]}} \begin{bmatrix} \z^{(1)} \\ \z^{(2)} \\ \vdots \\ \z^{(m)}\\ \widehat \x \\ \widehat \x\\\vdots \\ \widehat \x\end{bmatrix} = \begin{bmatrix}\y_1 \\\y_0\\\vdots \\\y_0\\ \boldsymbol 0 \\\boldsymbol 0\\\vdots \\ \boldsymbol 0\end{bmatrix}
            $};
        \draw(-6.5,1.5)node[left]{$m \text{ blocks}\left\{\rule{0mm}{10mm}\right.$};
        \draw(-1.8,-0.7)node[left]{$p \text{ blocks}\left\{\rule{0mm}{10mm}\right.$};
    \end{tikzpicture}
\end{equation}
where
\[\T := -\begin{bmatrix}\I_n& \cdots & \I_n\\
\0_n & \cdots & \0_n\\ \vdots && \vdots \\\0_n & \cdots & \0_n\end{bmatrix}\in \mathbb R^{n(k+1)\times n(k+1)}, \quad \y_1 := \begin{bmatrix}\x(0)\\ h\bb \\ \boldsymbol 0 \\\vdots \\ \boldsymbol 0\end{bmatrix} \in \mathbb C^{n(k+1)}, \quad \y_0 := \begin{bmatrix}\boldsymbol 0\\ h\bb \\ \boldsymbol 0 \\\vdots \\ \boldsymbol 0\end{bmatrix} \in \mathbb C^{n(k+1)}.\]
In the rest of this paper, we refer to the coefficient matrix in \cref{eq:full_M} as $\C_{m,k,p}(\A h)$. With appropriately chosen parameters $m$, $k$, and $p$, we can solve the linear system using QLSA~\cite{costa2022optimal}, and then perform measurements on the resulting quantum state. This allows us to obtain a quantum state that approximates $|\x(T)\rangle$ within the required precision, with constant positive probability. The most time-consuming step in this procedure is solving the linear system using QLSA. Assuming the block-encoding of $\C_{m,k,p}(\A h)$ is available, the query complexity of QLSA depends at least linearly on the condition number of $\C_{m,k,p}(\A h)$. Therefore, it is crucial to control the condition number of the resulting linear system.
\section{Quantum algorithm based on Pad\'{e} approximation}\label{sec:Poly2Linsys}
In this section, we present our quantum algorithm for solving linear differential equations using Pad\'{e} approximation. We begin by outlining the Pad\'{e} approximation to matrix exponential.
\begin{defn}
    The $(p,q)$ Pad\'e approximation to $e^{\A}$ is defined by 
    \begin{equation}\label{eq:pade}
        R_{pq}(\A) = [D_{pq}(\A)]^{-1} N_{pq}(\A) = N_{pq}(\A) [D_{pq}(\A)]^{-1},
    \end{equation}
    where 
    \begin{equation}\label{eq:pade_N}
        N_{pq}(\A) = \sum_{j = 0}^p n_j \A^j, \quad n_j = \frac{(p+q - j)! p!}{(p+q)! j! (p - j)!}
    \end{equation}
    and 
    \begin{equation}\label{eq:pade_D}
        D_{pq}(\A) = \sum_{j=0}^q d_j (-\A)^j , \quad d_j = \frac{(p+q - j)! q!}{(p+q)! j! (q - j)!}.
    \end{equation}
    The non-singularity of $D_{pq}(\A)$ is assured if $p$ and $q$ are large enough or if the eigenvalues of $\A$ are all negative \cite{moler2003nineteen,saff1975zeros}. 
\end{defn}
Since Pad\'{e} approximation is a rational function, the matrix encoding method from~\cite{berry2017quantum} cannot be applied directly. We propose a novel approach to encode rational polynomials into linear systems with a special structure. In the case where $\A$ is Hermitian and negative semi-definite, this structure allows us to bound the condition number of the resulting linear system without relying on the usual restriction $\|\A h\|_2 \le 1$, which is typically required in related methods~\cite{berry2014high,berry2017quantum,krovi2023improved,dong2024investigationquantumalgorithmlinear}.

\subsection{Constructing the linear system}
Replacing the Taylor approximation by Pad\'e approximation, the procedure \eqref{eq:proc_taylor} becomes
\begin{equation}\label{eq:proc}
    \begin{aligned}
        \x(h) &\approx R_{pq}(\A h) \x(0) + \left(R_{pq}(\A h) - \I_n\right)\A^{-1}\bb =: \widehat \x(h),\\
        \x(2h) &\approx R_{pq}(\A h) \widehat \x(h) + \left(R_{pq}(\A h) - \I_n\right)\A^{-1}\bb =: \widehat \x(2h),\\
        &\vdots \\
        \x(mh) &\approx R_{pq}(\A h) \widehat \x\left((m-1)h\right) + \left(R_{pq}(\A h) - \I_n\right)\A^{-1}\bb =: \widehat \x(T).\\ 
    \end{aligned}
\end{equation}
We first consider the encoding in one step. Given $\widehat \x((s-1)h)$, the encoding of the $s$-th step is given by,
\begin{equation}\label{eq:first_step}
    \widehat \x(sh) = R_{pq}(\A h) \widehat \x((s-1)h) + \left(R_{pq}(\A h) - \I_n\right)\A^{-1}\bb, \quad \forall s = 1, \dots, m.
\end{equation}
Substituting $R_{pq}(\A h) = N_{pq}(\A h)D_{pq}^{-1}(\A h)$ into \cref{eq:first_step}, we obtain
\[\begin{aligned}
    \widehat \x(sh) &= N_{pq}(\A h)\left(D_{pq}^{-1}(\A h) \widehat \x((s-1)h) + (D^{-1}_{pq}(\A h)-\I_n)\A^{-1}\bb\right) + (N_{pq}(\A h) - \I_n)\A^{-1}\bb,
\end{aligned}\]
Introducing an auxiliary vector
\[\vv := D_{pq}^{-1}(\A h) \widehat \x((s-1)h) + (D_{pq}^{-1}(\A h)-\I_n)\A^{-1}\bb,\]
we can rewrite \cref{eq:first_step} as a pair of equations:
\begin{equation}\label{eq:122}
    \begin{cases}
        \widehat \x((s-1)h) = D_{pq}(\A h) \vv + (D_{pq}(\A h) - \I_n)\A^{-1}\bb,\\
        \widehat \x(sh) = N_{pq}(\A h) \vv + (N_{pq}(\A h) - \I_n)\A^{-1}\bb.
    \end{cases}
\end{equation}
Following the approach from~\cite{berry2017quantum}, we can encode the pair of equations into two linear systems,
\begin{equation}\label{eq:lin1}
    \left[\begin{array}{cccc|c}
        \I_n&&&&\\
        \hline \beta_1\A h & \I_n &&&\\
        & \ddots & \ddots &&\\
        && \beta_q\A h& \I_n&\\
        -\I_n &\cdots & -\I_n& -\I_n&\I_n
    \end{array}\right]\begin{bmatrix} \z_0^{(s)} \\  \z_1^{(s)} \\ \vdots \\ \z_p^{(s)} \\ \widehat \x((s-1)h) \end{bmatrix} =  \begin{bmatrix}\vv\\ -d_1 h\bb\\ \boldsymbol 0\\ \vdots \\ \boldsymbol 0\end{bmatrix}
\end{equation}
and 
\begin{equation}\label{eq:forw}
    \begin{bmatrix}\I_n\\ -\alpha_1\A h & \I_n\\
        & \ddots & \ddots \\
        && -\alpha_p\A h& \I_n\\
        -\I_n &\cdots & -\I_n& -\I_n&\I_n\end{bmatrix}\begin{bmatrix}\widetilde \z_0^{(s)} \\ \widetilde \z_1^{(s)} \\ \vdots \\ \widetilde \z_p^{(s)} \\ \widehat \x(sh) \end{bmatrix} =  \begin{bmatrix}\vv\\ n_1 h\bb\\ \boldsymbol 0\\ \vdots \\ \boldsymbol 0\end{bmatrix},
\end{equation}
respectively, where we use the following notations for convenience,
\begin{equation}\label{eq:alpha_beta}
    \begin{aligned}
        \begin{cases}
            \alpha_0 := 1,\\
            \alpha_{j+1} := \frac{n_{j+1}}{n_j} = \frac{p-j}{(j+1)(p+q-j)}, &j = 0,\dots, p-1,
        \end{cases} \\
        \begin{cases}
            \beta_0 := 1,\\
            \beta_{j+1} := \frac{d_{j+1}}{d_j} = \frac{q-j}{(j+1)(p+q-j)}, &j = 0,\dots, q-1.
        \end{cases}
    \end{aligned} 
\end{equation}
To combine the linear systems from \cref{eq:lin1} and \cref{eq:forw}, we begin by rewriting \cref{eq:lin1} as follows
\begin{equation} \label{eq:back}
    \begin{aligned}
        \begin{cases}
            \z_0^{(s)} = \vv,\\
            \begin{bmatrix} 
                \beta_1\A h& \I_n\\ &\ddots&\ddots \\ &&\beta_q\A h& \I_n\\ -\I_n & \cdots &-\I_n&-\I_n\end{bmatrix} 
            \begin{bmatrix}
                \z_0^{(s)}\\ \z_1^{(s)}\\ \vdots \\ \z_q^{(s)}
            \end{bmatrix} = \begin{bmatrix}
                -d_1h\bb \\ \boldsymbol 0 \\ \vdots  \\ \boldsymbol 0 \\ -\widehat \x((s-1)h)
            \end{bmatrix},
        \end{cases}
    \end{aligned}
\end{equation}
which symbolically treats $\vv$ as an unknown and $\widehat \x((s-1)h)$ as a known component. Next, combining \cref{eq:forw} with the second system from \cref{eq:back}, we obtain the following unified system
\begin{equation}\label{eq:backfor}
    \begin{bmatrix}
        \I_n & \I_n &\cdots &\I_n&\\ \I_n & \beta_q\A h&&&\\ &\ddots&\ddots& \\ && \I_n&\beta_1\A h&\\
        &&& -\alpha_1 \A h & \I_n\\
        &&&& \ddots & \ddots\\
        &&&&& -\alpha_p \A h & \I_n\\
        &&& -\I_n & \cdots & -\I_n & -\I_n & \I_n
    \end{bmatrix}\begin{bmatrix}\z_q^{(s)} \\ \z_{q-1}^{(s)}\\ \vdots\\ \z_0^{(s)} = \vv \\ \widetilde \z_1^{(s)} \\ \vdots \\ \widetilde \z_p^{(s)} \\\widehat \x(sh)\end{bmatrix} = \begin{bmatrix}\widehat \x((s-1)h)\\ \boldsymbol 0\\ \vdots \\ -d_1 h\bb \\ n_1 h \bb \\ \boldsymbol 0 \\ \vdots \\ \boldsymbol 0\end{bmatrix}.
\end{equation}
Since the diagonal Pad\'{e} approximation ($p=q$) is usually preferred over the off-diagonal cases ($p\not=q$) \cite{moler2003nineteen}, we consider the case $p=q=k$ in the rest of the paper. We present the following finding, whose proof is provided in \Cref{pf:lem3.2}.
\begin{lem}\label{obs:zz}
    In the case when $p = q = k$, we have $\z_j^{(s)} = (-1)^j \widetilde \z_j^{(s)}$ for all $j = 1, \dots, k$ and $s = 1, \dots, m$.
\end{lem}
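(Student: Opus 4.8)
The plan is to exploit the defining symmetry of the \emph{diagonal} Pad\'e table --- that its numerator and denominator carry the same coefficients up to the sign of the argument --- and then to run a one-line induction on the block index $j$. The only substantive input is the following algebraic fact, which I would record first: setting $p=q=k$ in \cref{eq:pade_N,eq:pade_D} gives $n_j=d_j=\frac{(2k-j)!\,k!}{(2k)!\,j!\,(k-j)!}$ for every $j$, and hence, by the definitions in \cref{eq:alpha_beta}, $\alpha_j=\beta_j$ for all $j=1,\dots,k$. Everything after this is bookkeeping.

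Next I would fix a step $s$ and read off the two one-step recurrences. From \cref{eq:forw}, the blocks of the forward solution satisfy $\widetilde\z_0^{(s)}=\vv$, $\widetilde\z_1^{(s)}=\alpha_1\A h\,\widetilde\z_0^{(s)}+n_1 h\bb$, and $\widetilde\z_j^{(s)}=\alpha_j\A h\,\widetilde\z_{j-1}^{(s)}$ for $j=2,\dots,k$. From \cref{eq:lin1} (equivalently \cref{eq:back}) the blocks of the backward solution satisfy $\z_0^{(s)}=\vv$, $\z_1^{(s)}=-\beta_1\A h\,\z_0^{(s)}-d_1 h\bb$, and $\z_j^{(s)}=-\beta_j\A h\,\z_{j-1}^{(s)}$ for $j=2,\dots,k$. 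The point to flag is that the auxiliary vector $\vv$ appearing as the leading right-hand block is literally the same object in both systems (it is defined once, via \cref{eq:122}, from $\widehat\x((s-1)h)$), so the two recurrences are launched from the identical seed $\z_0^{(s)}=\widetilde\z_0^{(s)}=\vv$.

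I would then close the argument by induction on $j$. The base case $j=0$ is the shared seed. For $j=1$, substituting $\z_0^{(s)}=\widetilde\z_0^{(s)}$, $\beta_1=\alpha_1$, and $d_1=n_1$ gives $\z_1^{(s)}=-(\alpha_1\A h\,\widetilde\z_0^{(s)}+n_1 h\bb)=-\widetilde\z_1^{(s)}$. For the inductive step with $j\ge 1$: if $\z_j^{(s)}=(-1)^j\widetilde\z_j^{(s)}$, then $\z_{j+1}^{(s)}=-\beta_{j+1}\A h\,\z_j^{(s)}=(-1)^{j+1}\alpha_{j+1}\A h\,\widetilde\z_j^{(s)}=(-1)^{j+1}\widetilde\z_{j+1}^{(s)}$, using $\beta_{j+1}=\alpha_{j+1}$. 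Since $s$ was arbitrary, this yields $\z_j^{(s)}=(-1)^j\widetilde\z_j^{(s)}$ for all $j=0,\dots,k$, in particular for $j=1,\dots,k$, which is the claim.

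I do not expect a real obstacle here. The one place to be careful is the $j=1$ block, where the inhomogeneous term $h\bb$ enters with coefficient $d_1$ on one side and $n_1$ on the other, so the step genuinely needs $n_1=d_1$ and not merely $\alpha_1=\beta_1$; everywhere else only the ratios $\alpha_j=\beta_j$ are used. The conceptual content worth stating explicitly before the induction is precisely the coefficient symmetry $n_j=d_j$ of the diagonal Pad\'e approximant, which is what makes the upper-Hessenberg encoding collapse into the compact form \cref{eq:backfor} used in the sequel.
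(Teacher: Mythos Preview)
Your proof is correct and follows essentially the same approach as the paper's own proof: both read off the two one-step recurrences (the paper from \cref{eq:backfor}, you from \cref{eq:forw} and \cref{eq:lin1}, which amounts to the same thing), invoke the diagonal-Pad\'e coefficient symmetry $\alpha_j=\beta_j$ and $n_1=d_1$, and conclude by a short induction on $j$ starting from the shared seed $\z_0^{(s)}=\widetilde\z_0^{(s)}=\vv$. Your write-up is simply a bit more explicit about the base case and the role of $n_1=d_1$ in the $j=1$ step.
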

Using \Cref{obs:zz}, we can omit the computation of the terms $\widetilde \z_j^{(s)}$, simplifying the linear system in \cref{eq:backfor} to
\begin{equation}\label{eq:sim_backfor}
    \begin{bmatrix}
        \I_n & \I_n &\cdots &\I_n&\\ \I_n & \beta_k\A h&&&\\ &\ddots&\ddots&&\\ && \I_n&\beta_1\A h&\\
        (-\I_n)^{k+1}& (-\I_n)^k&\cdots& -\I_n& \I_n
    \end{bmatrix}\begin{bmatrix}\z_k^{(s)} \\ \z_{k-1}^{(s)}\\ \vdots\\       \z_0^{(s)}\\ \widehat \x(s h)\end{bmatrix} = \begin{bmatrix}\widehat \x((s-1)h)\\ \boldsymbol 0\\ \vdots \\\boldsymbol 0\\ -d_1 h\bb \\ \boldsymbol 0 \end{bmatrix}.
\end{equation}
To ensure that the norm of the coefficient matrix in \cref{eq:sim_backfor} is bounded independent of $k$, we multiply both the first and the last block rows of the matrix by the factor $\frac1{\sqrt{k+1}}$. This modification also benefits the block-encoding and the condition number of the final linear system. With this modification, the $s$-th step \cref{eq:first_step} of the procedure is encoded into the following linear system
\begin{equation}
    \begin{cases}
        \W_k(\A h) \z^{(s)} = \widetilde \y_s,\\
        \frac{\widehat \x(sh)}{\sqrt{k+1}} + \frac{\widetilde{\boldsymbol 1}^T \otimes \I_n}{\sqrt{k+1}} \z^{(s)} = \boldsymbol 0,
    \end{cases}
\end{equation}
where 
\begin{equation}\label{eq:WT}
    \W_k(\A h):= \begin{bmatrix} \frac{1}{\sqrt{k+1}}\I_n & \frac{1}{\sqrt{k+1}}\I_n &\cdots &\frac{1}{\sqrt{k+1}}\I_n\\ \I_n & \beta_k\A h\\ &\ddots&\ddots \\ && \I_n&\beta_1\A h \end{bmatrix}
\end{equation}
and
\begin{equation}\label{eq:zy1}
    \z^{(s)} = \begin{bmatrix}
        \z_k^{(s)}\\ \z_{k-1}^{(s)}\\ \vdots \\ \z_0^{(s)}
    \end{bmatrix}\in \mathbb C^{n(k+1)}, \quad \widetilde \y_s := \begin{bmatrix}\frac{\widehat \x((s-1)h)}{\sqrt{k+1}}\\\boldsymbol 0\\ \vdots \\ \boldsymbol 0\\ -d_1 h \bb\end{bmatrix} \in \mathbb C^{n(k+1)}, \quad \boldsymbol {\widetilde 1} := \begin{bmatrix}(-1)^{k+1}\\(-1)^k\\ \vdots \\ -1\end{bmatrix} \in \mathbb R^{k+1}.
\end{equation}
Finally, we combine $m$ steps in the procedure \cref{eq:proc} together and obtain the entire linear system
\begin{equation}\label{eq:full_W}
    \begin{tikzpicture}
        \draw(0,0)node{$\underbrace{\left[
            \begin{array}{cccccccc}
                \W_k(\A h) & & & & & & & \\
                \widetilde \T & \W_k(\A h) & & & & & & \\
                & \ddots & \ddots & & & & &\\
                && \widetilde \T & \W_k(\A h) & & & &\\
                &&& \frac{\widetilde {\boldsymbol 1}^T \otimes \I_n}{\sqrt{k+1}}& \frac{\I_n}{\sqrt{k+1}}& & &\\
                &&&& -\I_n & \I_n& &\\
                &&&&& \ddots & \ddots & \\
                &&&&&& -\I_n & \I_n  
            \end{array}
            \right]}_{\displaystyle =:\LL_{m,k,p}(\A h)} \begin{bmatrix} \z^{(1)} \\ \z^{(2)} \\ \vdots \\ \z^{(m)}\\ \widehat \x \\ \widehat \x\\\vdots \\ \widehat \x\end{bmatrix} = \begin{bmatrix}\widetilde \y_1 \\\widetilde \y_0\\\vdots \\\widetilde \y_0\\ \boldsymbol 0 \\\boldsymbol 0\\\vdots \\ \boldsymbol 0\end{bmatrix}
            $};
        \draw(-6.5,1.5)node[left]{$m \text{ blocks}\left\{\rule{0mm}{10mm}\right.$};
        \draw(-1.8,-0.7)node[left]{$p \text{ blocks}\left\{\rule{0mm}{10mm}\right.$};
    \end{tikzpicture}
\end{equation}
where the vector $\widehat \x$ is repeated $p$ times to boost the success probability, and
\begin{equation}\label{eq:1111}
    \quad \widetilde \T := \frac{\left(\e_1 \boldsymbol {\widetilde 1}^T\right) \otimes\I_n}{\sqrt{k+1}}, \quad \e_1 := \begin{bmatrix}1\\0\\\vdots \\0\end{bmatrix}\in \mathbb R^{k+1}, \quad \widetilde \y_1 := \begin{bmatrix}\frac{\x(0)}{\sqrt{k+1}}\\\boldsymbol 0\\ \vdots \\ \boldsymbol 0\\ -d_1 h \bb\end{bmatrix} \in \mathbb C^{n(k+1)}, \quad \widetilde \y_0 := \begin{bmatrix}\boldsymbol 0\\ \boldsymbol 0\\ \vdots \\ \boldsymbol 0\\ -d_1 h \bb\end{bmatrix}\in \mathbb C^{n(k+1)}. 
\end{equation}
In the following discussion, we use $\LL_{m,k,p}(\A h)$ to refer to the coefficient matrix in the linear system given by \cref{eq:full_W}. It is important to note that $\LL_{m,k,p}(\A h)$ and $\CC_{m,k,p}(\A h)$, as defined in \cref{eq:full_M}, share the same block structure and block size. 
\subsection{Upper bounds of condition number}\label{sec:cond_main}
In this section, we provide upper bounds for the condition number of $\LL_{m,k,p}(\A h)$, under the assumption that either $\A$ is Hermitian and negative semi-definite, or $\|\A h\|_2 \le 1$. The key step is to derive an upper bound for the inverse of $\W_k(\A h)$. The proofs of the relevant lemmas are provided in Appendices \ref{sec:construction} and \ref{pf:lem3.45}. To simplicity, we drop $h$ in $\W_k(\A h)$ in these lemmas. 
\begin{lem}\label{lem:invert}
    The matrix $\W_k(\A)$ is non-singular if and only if $D_{kk}(\A)$ is non-singular.
\end{lem}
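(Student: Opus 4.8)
The plan is to analyze the block structure of $\W_k(\A)$ directly and reduce the question of its invertibility to that of $D_{kk}(\A)$ by a sequence of elementary block row/column operations that do not affect singularity. Recall from \cref{eq:WT} that
\[
\W_k(\A) = \begin{bmatrix}
\frac{1}{\sqrt{k+1}}\I_n & \frac{1}{\sqrt{k+1}}\I_n & \cdots & \frac{1}{\sqrt{k+1}}\I_n\\
\I_n & \beta_k\A & & \\
& \ddots & \ddots & \\
& & \I_n & \beta_1\A
\end{bmatrix},
\]
which is block upper Hessenberg. The subdiagonal blocks on rows $2,\dots,k+1$ are all $\I_n$, so these rows immediately let us express, by back-substitution from the bottom, each intermediate unknown block in terms of a single "head" block; concretely, the $j$-th of these rows reads $\z_{k-j+2}^{(s)} + \beta_{k-j+2}\A\,\z_{k-j+1}^{(s)} = \boldsymbol 0$ in the homogeneous case, so setting $\z_0^{(s)}$ as the free parameter forces $\z_j^{(s)} = (-1)^{j}\beta_1\beta_2\cdots\beta_j\,\A^{j}\z_0^{(s)}$ for $j=1,\dots,k$. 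Up to the nonzero scalars $\beta_1\cdots\beta_j$ this is exactly $(-\A)^j$ times $\z_0^{(s)}$, matching the coefficients $d_j$ in $D_{kk}(\A)$ because $\beta_{j} = d_{j}/d_{j-1}$ by \cref{eq:alpha_beta} and $d_0 = 1$.

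First I would make this precise as a block LU-type factorization: write $\W_k(\A) = \P\, \B$ (or handle it via block Gaussian elimination) where the lower-triangular-with-unit-subdiagonal part is manifestly invertible, so that $\W_k(\A)$ is nonsingular exactly when the resulting Schur complement in the top-left block is. Eliminating rows $2$ through $k+1$ against the first block row replaces the first row's entries: each $\tfrac{1}{\sqrt{k+1}}\I_n$ in column $j$ gets combined with the contributions coming up from below, and the accumulated coefficient multiplying $\z_0^{(s)}$ becomes $\tfrac{1}{\sqrt{k+1}}\sum_{j=0}^{k} d_j(-\A)^j = \tfrac{1}{\sqrt{k+1}}D_{kk}(\A)$. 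Hence the $1\times 1$ block Schur complement is $\tfrac{1}{\sqrt{k+1}}D_{kk}(\A)$, which is invertible iff $D_{kk}(\A)$ is. Since the eliminations are multiplications by unit-block-triangular (hence invertible) matrices, $\det$-nonvanishing is preserved, giving both directions of the "if and only if."

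The converse direction is essentially automatic from the same computation: if $\W_k(\A)$ were nonsingular while $D_{kk}(\A)$ were singular, pick $\boldsymbol u \neq \boldsymbol 0$ with $D_{kk}(\A)\boldsymbol u = \boldsymbol 0$, set $\z_0^{(s)} = \boldsymbol u$ and $\z_j^{(s)} = (-1)^j(\beta_1\cdots\beta_j)\A^j\boldsymbol u$ for $j\ge 1$; this vector is nonzero and, by the telescoping identity $\beta_1\cdots\beta_j = d_j$, lies in the kernel of $\W_k(\A)$ — contradiction. The only genuinely delicate point is bookkeeping the index reversal (the unknown block ordering is $\z_k^{(s)},\z_{k-1}^{(s)},\dots,\z_0^{(s)}$, so "subdiagonal" and "from the bottom" must be tracked carefully) and confirming that the scalar products of the $\beta$'s reconstitute precisely the Pad\'e denominator coefficients $d_j$; I expect that telescoping check, together with keeping the $\frac{1}{\sqrt{k+1}}$ normalization from interfering, to be the main (though routine) obstacle. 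Everything else is standard block elimination.
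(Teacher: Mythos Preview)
Your proposal is correct and takes essentially the same approach as the paper. The paper writes down an explicit factorization of (the unscaled) $\W_k(\A)$ as a product of a unit upper-block-triangular matrix and a matrix whose first block row is $[0,\dots,0,D_{kk}(\A)]$ with the remaining rows unchanged; your block elimination using the subdiagonal $\I_n$'s as pivots is exactly the row-operation description of that same factorization, and your kernel argument is the equivalent statement read off from the resulting reduced system.
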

\begin{lem}\label{lem:W_inv_norm}
    Suppose $\A \in \mathbb C^{n\times n}$ is Hermitian and negative semi-definite, then 
    \begin{equation}\label{eq:W_inv_norm}
        \left\|\W_k(\A)^{-1}\right\|_2 \le \sqrt{(k+1)(4\log(k+1) + 1)}.
    \end{equation}
\end{lem}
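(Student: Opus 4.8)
The plan is to estimate $\|\W_k(\A)^{-1}\|_2$ by first reducing to the scalar case using the spectral theorem, then carefully analyzing the resulting scalar matrix inverse. Since $\A$ is Hermitian and negative semi-definite, write $\A = \U \LLambda \U^\dagger$ with $\U$ unitary and $\LLambda = \mathrm{diag}(\lambda_1, \dots, \lambda_n)$, $\lambda_i \le 0$. The matrix $\W_k(\A)$ has a block structure in which every block is a polynomial in $\A$ (indeed each block is a scalar multiple of $\I_n$ or of $\A$), so conjugating by $\I_{k+1} \otimes \U$ block-diagonalizes $\W_k(\A)$ into $n$ copies of the $(k+1) \times (k+1)$ scalar matrices $W_k(\lambda_i)$, where $W_k(x)$ is obtained from \cref{eq:WT} by replacing $\A h$ with $x$ (with $h$ absorbed, as the lemma drops $h$). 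Consequently $\|\W_k(\A)^{-1}\|_2 = \max_i \|W_k(\lambda_i)^{-1}\|_2 \le \sup_{x \le 0} \|W_k(x)^{-1}\|_2$, and it suffices to bound the norm of the inverse of the scalar matrix $W_k(x)$ uniformly over $x \in (-\infty, 0]$.

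Next I would obtain an explicit formula for $W_k(x)^{-1}$, or at least for its action on a vector. The matrix $W_k(x)$ is upper Hessenberg with a very sparse pattern: the first row is $\frac{1}{\sqrt{k+1}}(1, 1, \dots, 1)$, and rows $2$ through $k+1$ form a lower bidiagonal block with $\I$'s on the subdiagonal-shifted position and $\beta_{k+1-j} x$ on the diagonal. I expect that solving $W_k(x) \z = \bb$ can be done by forward/back substitution: the bidiagonal part lets one express $\z_{j-1}$ in terms of $\z_j$ (up to the $\beta$ factors and $x$), propagating from one end, and then the first row equation fixes the remaining degree of freedom. This should yield each component of $\z$ as an explicit rational expression in $x$ whose denominator is essentially $D_{kk}(x) = \sum_{j=0}^k d_j (-x)^j$ (consistent with \cref{lem:invert}), with numerators built from partial sums $\sum_{j=0}^\ell d_j(-x)^j$ and the products $\prod \beta$. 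Since $x \le 0$ and $d_j > 0$, all these quantities are positive, which makes the estimates tractable: each partial sum is dominated by $D_{kk}(x)$, so the ratios are bounded by $1$, and the residual $k$-dependence should come from summing $k+1$ such terms and from the $\frac{1}{\sqrt{k+1}}$ normalization, together with a logarithmic factor arising from a sum like $\sum_{j} \frac{1}{\text{(something)}}$ tied to the coefficients $d_j$.

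The key quantitative input will be a bound on ratios of the Pad\'e denominator coefficients, namely controlling $\frac{d_j(-x)^j}{D_{kk}(x)}$ and sums thereof uniformly in $x \le 0$. Using $\beta_{j+1} = \frac{d_{j+1}}{d_j}$ from \cref{eq:alpha_beta}, the telescoping products $\prod_{i} \beta_i$ collapse to ratios $d_\ell/d_{\ell'}$, so the whole inverse is expressible purely through the $d_j$'s and the single variable $y := -x \ge 0$; setting $D(y) = \sum_{j=0}^k d_j y^j$, one needs estimates like $\sum_{\ell=0}^k \frac{1}{D(y)}\sum_{j=0}^\ell d_j y^j \le C$ and similar. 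These reduce to the combinatorial fact that the sequence $d_j$ decays fast enough (roughly $d_j \sim \frac{k!^2}{(2k)!}\binom{k}{j}\cdot\frac{(2k-j)!}{k!}$, i.e. $d_{j+1}/d_j = \frac{k-j}{(j+1)(2k-j)} \le \frac{1}{j+1}$) so that $\sum_j d_j$ and related weighted sums behave like $O(\log k)$ relative to the dominant term. I expect the main obstacle to be exactly this step: extracting the clean $\sqrt{(k+1)(4\log(k+1)+1)}$ bound requires a tight, $x$-uniform estimate on these coefficient ratios — handling both the regime $y$ small (where $D(y) \approx d_0 = 1$ and only a few terms matter) and $y$ large (where the top term $d_k y^k$ dominates) — and proving the $4\log(k+1)+1$ constant rather than a cruder $O(\text{poly}(k))$ bound. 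The block-diagonalization and the Hessenberg elimination are routine; pinning down the logarithm with the stated constant is where the real work lies, and I would likely isolate it as a separate technical lemma on the partial sums of the $d_j$'s, which is presumably what Appendix \ref{pf:lem3.45} carries out.
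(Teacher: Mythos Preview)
Your plan is essentially the paper's approach: it too reduces via the spectral theorem to the scalar matrix $\W_k(-\lambda)$ for $\lambda\ge 0$, computes the inverse explicitly (via a block Schur complement, with denominator $D_{kk}(-\lambda)$ and numerators given by partial sums of the $d_j\lambda^j$), and then shows the $\log(k+1)$ factor emerges from harmonic-type sums tied to the monotonicity $\beta_i/\beta_j\le j/i$. The one concrete device you do not name, and which makes the constant come out cleanly, is that the paper first proves each entry of the inverse (in the first $k$ columns) satisfies $|\widetilde W^{-1}_{ij}|\le 1$, so that $\|\widetilde W^{-1}\|_2\le\|\widetilde W^{-1}\|_F$ can be bounded by the \emph{sum of absolute values} rather than the sum of squares; it is this column-sum $\sum_{i,j}|\widetilde W^{-1}_{ij}|$ that, after a double-sum reindexing, collapses to $(k+1)$ times a pair of truncated harmonic sums, yielding the $4(k+1)\log(k+1)$ term, with the last column contributing the remaining $k+1$.
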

\begin{lem}\label{lem:le1_W_inv_norm}
    Suppose $\A\in \mathbb C^{n\times n}$ satisfies $\|\A\|_2\le 1$, then
    \[\left\|\W_k(\A)^{-1}\right\|_2\le \frac{2\sqrt e}{3-e}\sqrt{(k+1)(4\log(k+1) + 1)}.\]
\end{lem}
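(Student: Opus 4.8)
The plan is to reduce the case $\|\A\|_2 \le 1$ to the Hermitian negative semi-definite case already handled in Lemma~\ref{lem:W_inv_norm}, by a perturbation/scaling argument rather than redoing the spectral analysis from scratch. The structure of $\W_k(\A)$ in \cref{eq:WT} is a block upper Hessenberg matrix built from $\I_n$'s and the scaled blocks $\beta_j \A$. Writing $\W_k(\A) = \W_k(\0) + \E$ where $\E$ collects the off-diagonal blocks $\beta_j \A$, I would first control $\|\W_k(\0)^{-1}\|_2$ — which is exactly the $\A=\0$ instance, so $D_{kk}(\0)=\I_n$ is trivially non-singular and Lemma~\ref{lem:W_inv_norm} already gives the bound $\sqrt{(k+1)(4\log(k+1)+1)}$ — and then bound $\|\E\|_2 \le (\max_j \beta_j)\,\|\A\|_2 \le \max_j \beta_j$ using $\|\A\|_2\le 1$. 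Since from \cref{eq:alpha_beta} the ratios satisfy $\beta_{j+1} = \frac{q-j}{(j+1)(p+q-j)} \le 1$ for $p=q=k$, the partial products $d_j$ (and hence the relevant block norms, after accounting for how $\E$ is assembled over a Hessenberg pattern) are uniformly small; the key quantitative input is that $\sum_j d_j \|(-\A)^j\|$-type quantities governing $\|D_{kk}(\A) - \I_n\|$ stay strictly below a constant. Concretely, $\|D_{kk}(\A) - \I_n\|_2 \le \sum_{j=1}^k d_j \le \sum_{j=1}^\infty \frac{1}{2^j j!}\cdot(\text{something}) $ — more carefully, using $d_j \le \frac{1}{2^j j!}$ (which follows from $d_j = \frac{(2k-j)!\,k!}{(2k)!\,j!\,(k-j)!}$ and elementary binomial estimates), one gets $\sum_{j=1}^k d_j \le e^{1/2} - 1 = \sqrt e - 1$.

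The second step is to pass from a bound on $\|D_{kk}(\A)^{-1}\|_2$ to a bound on $\|\W_k(\A)^{-1}\|_2$. By Lemma~\ref{lem:invert}, $\W_k(\A)$ is invertible iff $D_{kk}(\A)$ is, and $\|D_{kk}(\A) - \I_n\|_2 \le \sqrt e - 1 < 1$ gives invertibility via Neumann series together with $\|D_{kk}(\A)^{-1}\|_2 \le \frac{1}{1 - (\sqrt e - 1)} = \frac{1}{2 - \sqrt e}$. I would then look inside the proof of Lemma~\ref{lem:W_inv_norm} (in Appendix~\ref{pf:lem3.45}) to see how $\|\W_k(\A)^{-1}\|_2$ is expressed in terms of $D_{kk}(\A)^{-1}$: the block Hessenberg structure of \cref{eq:WT} admits an explicit block $LU$-type / Schur-complement factorization in which the Schur complement onto the first block row is (a scalar multiple of) $\frac{1}{k+1}D_{kk}(\A)$, and the remaining factors are unit block-triangular with entries that are partial products of the $\beta_j$'s — hence bounded by $1$ in the appropriate sense. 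Tracking the norms through that factorization, the $\|\W_k(\A)^{-1}\|_2$ bound acquires exactly one extra factor of $\|D_{kk}(\A)^{-1}\|_2 \le \frac{1}{2-\sqrt e}$ relative to (a bound of the same shape as) the negative semi-definite case, plus possibly a bounded contribution from the triangular factors that can be absorbed; reconciling the stated constant $\frac{2\sqrt e}{3 - e}$ then amounts to noting $\frac{2\sqrt e}{3-e} = \frac{2\sqrt e}{(\sqrt e + \sqrt 3)(\sqrt 3 - \sqrt e)}$ — or more likely it arises simply as $\frac{c}{2-\sqrt e}$ for the explicit $c$ produced by the factorization bookkeeping, since $3 - e = (2-\sqrt e)(1+\sqrt e) - (\text{small})$; I would verify the exact arithmetic identity $\frac{2\sqrt e}{3-e}$ at the end against whatever constant the factorization yields.

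The main obstacle I anticipate is the second step: the Hermitian negative semi-definite proof of Lemma~\ref{lem:W_inv_norm} very likely exploits that $\A$ has real nonpositive eigenvalues to get sharp control on $\|D_{kk}(\A)^{-1}\|_2$ and on the triangular factors via a clean eigenvalue/functional-calculus argument, and that machinery is unavailable when $\A$ is only a contraction (its numerical range can be anything inside the unit disk). So I cannot simply invoke Lemma~\ref{lem:W_inv_norm} as a black box with $\A$ replaced by something; instead I need the \emph{factorization} underlying it — the decomposition of $\W_k(\A)^{-1}$ into a $D_{kk}(\A)^{-1}$ factor times structural triangular factors whose norms depend only on the $\beta_j$ — and then feed in the contraction bound $\|D_{kk}(\A)^{-1}\|_2 \le \frac{1}{2-\sqrt e}$ in place of the spectral bound. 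The residual risk is that the triangular factors' norms, which in the negative semi-definite case contribute the $\sqrt{(k+1)(4\log(k+1)+1)}$, genuinely use nonpositivity of the spectrum; if so I would instead bound those factors crudely by their entrywise $\beta_j$-products (each $\le 1$) times the number of blocks, check whether that still closes with only a constant loss, and if not, fall back to a direct Neumann-type expansion $\W_k(\A)^{-1} = \sum_{\ell\ge 0}(-\W_k(\0)^{-1}\E)^\ell \W_k(\0)^{-1}$, bounding $\|\W_k(\0)^{-1}\E\|_2$ strictly below $1$ using $\|\A\|_2\le 1$ and the smallness of the $\beta_j$-products — this automatically yields a constant-factor blow-up over the $\A=\0$ bound, which is the shape of the claimed inequality.
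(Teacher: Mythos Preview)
Your high-level instinct is right: the paper also pulls out a factor of $D_{kk}(\A)^{-1}$ and bounds it by a Neumann series using $\|\A\|_2\le 1$ (getting $\|D_{kk}(\A)^{-1}\|_2\le \frac{2}{3-e}$, which is where the $3-e$ in the constant comes from). But both of your proposed ways to handle the \emph{remaining} factor have real problems, and you are missing the paper's actual device.

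First, your Neumann fallback $\W_k(\A)^{-1}=\sum_{\ell\ge 0}(-\W_k(\0)^{-1}\E)^\ell\,\W_k(\0)^{-1}$ does not converge uniformly in $k$: $\|\E\|_2\le \beta_1=\tfrac12$, but $\|\W_k(\0)^{-1}\|_2$ is already of order $\sqrt{(k+1)\log(k+1)}$ by the $\A=\0$ instance of Lemma~\ref{lem:W_inv_norm}, so $\|\W_k(\0)^{-1}\E\|_2$ is not bounded below $1$. Second, your ``crude'' entrywise bound on the triangular pieces gives only $O(k)$, not $O(\sqrt{k\log k})$, so it does not close with merely a constant loss.

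What the paper does instead is this. It writes down the explicit block formula for $\W_k(\A)^{-1}$ (each block is $D_{kk}(\A)^{-1}$ times a polynomial in $\A$ with nonnegative coefficients from the $d_j$'s), and then applies a block-norm majorization lemma: for any block matrix $\M=[\M_{ij}]$, one has $\|\M\|_2\le \|\widehat\M\|_2$ where $\widehat\M_{ij}\ge \|\M_{ij}\|_2$ entrywise. Using $\|\A^j\|_2\le 1$, each block norm is bounded by the sum of the absolute coefficients, which produces a \emph{scalar} $(k+1)\times(k+1)$ majorant $\widehat\W$. The key observation is that $\widehat\W$ coincides, up to the factor $D_{kk}(-1)$, with the absolute-value matrix of $\W_k(-\lambda)^{-1}$ at $\lambda=1$, for which the delicate $\sqrt{(k+1)(4\log(k+1)+1)}$ bound has already been established in the scalar lemma underlying Lemma~\ref{lem:W_inv_norm}. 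That recycling of the $\lambda=1$ scalar analysis is the missing idea in your proposal; without it you cannot get the $\sqrt{k\log k}$ shape with only a constant-factor loss. The final constant is then exactly $\frac{2}{3-e}\cdot D_{kk}(-1)\le \frac{2\sqrt e}{3-e}$, which also explains the arithmetic you were struggling to reconcile.
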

Using the above lemmas, we derive upper bounds for the inverse and the condition number of $\LL_{m,k,p}(\A h)$ in the following theorem, whose proof is given in \Cref{sec:cond}.
\begin{thm}\label{thm:cond}
    Let $\A\in \mathbb C^{n\times n}$, $T>0$, and the parameters $m$ and $k\ge 3$ be chosen such that 
    \begin{equation}\label{eq:cond_cond}
        \left\|\I - e^{-i\A h}R_{kk}^i(\A h)\right\|_2 \le 1, \quad \forall i=1, \dots, m,
    \end{equation}
    where $h = T/m$. Denoting by $\kappa$ the condition number of $\LL_{m,k,p}(\A h)$. Then
    \begin{enumerate}
        \item if $\A$ is Hermitian and negative semi-definite, then
        \begin{equation}
            \left\|\LL_{m,k,p}(\A h)^{-1}\right\|_2 \le 6(m+p)\sqrt {k\log  k}
        \end{equation}
        and
        \begin{equation}
            \kappa \le 3(m+p) \sqrt{k\log  k}\left(6 + \|\A h\|_2 \right).
        \end{equation}
        \item if $\|\A h\|_2 \le 1$, then
        \[ \left\|\LL_{m,k,p}(\A h)^{-1}\right\|_2 = \mathcal O\left(C(\A)(m+p)\sqrt {k\log  k}\right), \]
        and 
        \[\kappa = \mathcal O\left(C(\A) (m+p) \sqrt{k \log  k}\right),\]
        where $C(\A) := \max_{t\in[0, T]}\left\|\exp(\A t)\right\|_2$.
    \end{enumerate}
    In summary, both of the two cases lead to
    \[\kappa = \mathcal O\left(C(\A) (m + p)\sqrt{k \log k} \|\A h\|_2 \right).\]
\end{thm}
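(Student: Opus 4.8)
Throughout I write $\boldsymbol P := \W_k(\A h)^{-1}\widetilde\T$ and $\boldsymbol w^{T} := \tfrac{1}{\sqrt{k+1}}(\widetilde{\boldsymbol 1}^{T}\otimes\I_n)$, so that $\widetilde\T = (\e_1\otimes\I_n)\,\boldsymbol w^{T}$.

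The plan is to exploit the block lower-triangular structure of $\LL_{m,k,p}(\A h)$, write its inverse block by block, and bound each block using the inverse estimates for $\W_k(\A h)$ from Lemmas~\ref{lem:invert}--\ref{lem:le1_W_inv_norm} together with the hypothesis~\eqref{eq:cond_cond}. Partition $\LL_{m,k,p}(\A h)$ into its $m$ leading blocks of size $n(k+1)$ --- the copies of $\W_k(\A h)$, coupled on the subdiagonal by $\widetilde\T$ --- and its $p$ trailing blocks of size $n$, the identity chain. Solving $\LL_{m,k,p}(\A h)\z = \boldsymbol r$ for an arbitrary right-hand side $\boldsymbol r = (\boldsymbol r_1,\dots,\boldsymbol r_{m+p})$ by forward substitution gives $\z^{(s)} = \W_k(\A h)^{-1}\boldsymbol r_s - \boldsymbol P\z^{(s-1)}$, hence $\z^{(s)} = \sum_{j\le s}(-\boldsymbol P)^{s-j}\W_k(\A h)^{-1}\boldsymbol r_j$ and an analogous telescoping sum for the trailing blocks; equivalently, $\big(\LL_{m,k,p}(\A h)^{-1}\big)_{s,j} = (-\boldsymbol P)^{s-j}\W_k(\A h)^{-1}$ for $j\le s\le m$, the remaining blocks being produced by one further level of substitution through the identity chain.

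The algebraic key is that $\widetilde\T = (\e_1\otimes\I_n)\boldsymbol w^{T}$ has rank $n$, so $\boldsymbol P = \big(\W_k(\A h)^{-1}(\e_1\otimes\I_n)\big)\boldsymbol w^{T}$ has rank $n$ and its powers collapse. Unwinding the bidiagonal recursion inside $\W_k(\A h)$ (using $p=q=k$, so $n_j=d_j$ and $\beta_j=d_j/d_{j-1}$), I would verify the identity $\boldsymbol w^{T}\W_k(\A h)^{-1}(\e_1\otimes\I_n) = -R_{kk}(\A h)$; consequently, for $\ell\ge1$,
\[
(-\boldsymbol P)^{\ell}\W_k(\A h)^{-1} = -\,\big(\W_k(\A h)^{-1}(\e_1\otimes\I_n)\big)\,R_{kk}(\A h)^{\ell-1}\,\big(\boldsymbol w^{T}\W_k(\A h)^{-1}\big).
\]
Thus every off-diagonal block of $\LL_{m,k,p}(\A h)^{-1}$ is a product of three factors: $\W_k(\A h)^{-1}(\e_1\otimes\I_n)$, of norm $\le\|\W_k(\A h)^{-1}\|_2$; a power $R_{kk}(\A h)^{i}$; and the ``one-step transfer'' row $\boldsymbol w^{T}\W_k(\A h)^{-1}$. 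The powers of the Padé propagator are bounded directly from~\eqref{eq:cond_cond}: since $\exp(\A h)$ commutes with $R_{kk}(\A h)$,
\[
\big\|R_{kk}^{i}(\A h)\big\|_2 \le \big\|e^{\A(ih)}\big\|_2\Big(1 + \big\|\I - e^{-i\A h}R_{kk}^{i}(\A h)\big\|_2\Big) \le 2C(\A),\qquad 1\le i\le m,
\]
with $C(\A)=1$ in the Hermitian negative semi-definite case.

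The step I expect to be the main obstacle is showing that the transfer row $\boldsymbol w^{T}\W_k(\A h)^{-1}$ has norm $\mathcal O(1)$ --- equivalently that $(\widetilde{\boldsymbol 1}^{T}\otimes\I_n)\W_k(\A h)^{-1}$ has norm $\mathcal O(\sqrt k)$ --- rather than the trivial $\mathcal O(\sqrt{k\log k})$ coming from $\|\widetilde{\boldsymbol 1}^{T}\otimes\I_n\|_2\,\|\W_k(\A h)^{-1}\|_2 = \mathcal O(\sqrt{k}\cdot\sqrt{k\log k})$; this gap of $\sqrt{k\log k}$ is exactly what separates the target bound from what the naive estimate gives. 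To gain this factor I would expand $\boldsymbol w^{T}\W_k(\A h)^{-1}$ as a block row $\big[-R_{kk}(\A h)\ \big|\ \cdots\big]$ whose remaining entries are correction terms of the form $\tfrac{1}{d_j}(\A h)^{-j}\big(N_{kk}^{<j}(\A h) - R_{kk}(\A h)D_{kk}^{<j}(\A h)\big)$, where $N_{kk}^{<j},D_{kk}^{<j}$ are the degree-$(j-1)$ truncations of $N_{kk},D_{kk}$. The required estimate uses: (i) the difference $N_{kk}^{<j}(x) - e^{x}D_{kk}^{<j}(x)$ vanishes to order $j$ at $x=0$, a consequence of the Padé relation $N_{kk}(x)-e^{x}D_{kk}(x)=O(x^{2k+1})$, so the quotient above is analytic; (ii) the $A$-stability of the diagonal Padé approximant, $|R_{kk}(x)|\le1$ for $\mathrm{Re}\,x\le0$; and (iii) sharp bounds on the Padé coefficients $d_j$. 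Together these yield $\sum_{j}\|\text{correction term }j\|_2^2 = \mathcal O(k)$ uniformly for $\mathrm{Re}\,x\le0$, whence the block row has norm $\mathcal O(\sqrt k)$ and $\boldsymbol w^{T}\W_k(\A h)^{-1}$ has norm $\mathcal O(1)$. This delicate estimate is the companion of Lemma~\ref{lem:W_inv_norm}; with it, everything downstream is bookkeeping.

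With the three-factor bound in place, the matrix of block norms of $\LL_{m,k,p}(\A h)^{-1}$ is block lower-triangular with every entry $\mathcal O\big(C(\A)\sqrt{k\log k}\big)$: the diagonal blocks $\W_k(\A h)^{-1}$ set the scale via Lemmas~\ref{lem:W_inv_norm}/\ref{lem:le1_W_inv_norm}, while the trailing identity chain contributes only factors $\sqrt{k+1}$ and $1$. Its maximum block-row sum and maximum block-column sum are both $\mathcal O\big(C(\A)(m+p)\sqrt{k\log k}\big)$, so $\|\LL_{m,k,p}(\A h)^{-1}\|_2 \le \sqrt{\|\cdot\|_1\,\|\cdot\|_\infty} = \mathcal O\big(C(\A)(m+p)\sqrt{k\log k}\big)$; carrying the constants (with $k\ge3$ used to bound $4\log(k+1)+1$ and $(k+1)\log(k+1)$ by constant multiples of $\log k$ and $k\log k$) gives the explicit factor $6$ in the Hermitian case. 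For the condition number, bound $\|\LL_{m,k,p}(\A h)\|_2 \le \|\W_k(\A h)\|_2 + 1 \le 3 + \tfrac12\|\A h\|_2$ by reading off the block-bidiagonal-plus-one-dense-row form of $\W_k(\A h)$ (dense first row of norm $1$, subdiagonal identities of norm $1$, superdiagonal $\beta_j\A h$ of norm $\le\tfrac12\|\A h\|_2$ since $\beta_1=\tfrac12$ is the largest ratio); multiplying the two bounds gives $\kappa \le 3(m+p)\sqrt{k\log k}\,(6 + \|\A h\|_2)$. The case $\|\A h\|_2\le1$ runs identically with Lemma~\ref{lem:le1_W_inv_norm} and with $\|R_{kk}(\A h)\|_2 = \mathcal O(1)$ (a bounded rational function of a matrix of norm $\le1$, with $D_{kk}(\A h)$ invertible for $k\ge3$), so no extra power of $C(\A)$ enters; the unified estimate then follows on combining the two cases.
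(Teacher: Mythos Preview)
Your overall strategy matches the paper's: exploit the block lower-triangular structure of $\LL$, identify the rank-$n$ factorization $\widetilde\T=(\e_1\otimes\I_n)\boldsymbol w^{T}$, derive the identity $\boldsymbol w^{T}\W_k(\A h)^{-1}(\e_1\otimes\I_n)=-R_{kk}(\A h)$ (which the paper writes as $\widetilde\E^{T}\widetilde\V=R_{kk}(\A h)$), bound $\|R_{kk}^{i}\|_2\le 2C(\A)$ from hypothesis~\eqref{eq:cond_cond}, and bound $\|\LL\|_2$ by splitting $\W_k$ into its three obvious pieces. The paper packages the inversion as $\LL^{-1}=\sum_{j}\N^{j}\D^{-1}$ with $\N$ nilpotent, but this is exactly the forward substitution you describe.

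The one substantive difference is how you propose to prove the auxiliary estimate $\|(\widetilde{\boldsymbol 1}^{T}\otimes\I_n)\W_k(\A h)^{-1}\|_2=\mathcal O(\sqrt k)$, which you correctly flag as the main obstacle. The paper proves this as a separate lemma (Lemma~\ref{lem:1W_inv_norm} in the Hermitian case, Lemma~\ref{lem:le11W_inv_norm} for $\|\A h\|_2\le1$): it solves $\W_k^{T}\x=\widetilde{\boldsymbol 1}$ explicitly, establishes via a combinatorial sign argument that each component satisfies $(-1)^{j+1}x_j\ge0$, and then the defining recurrence forces $|x_j|\le2$, giving $\|\x\|_2\le\sqrt{5k+1}$. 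Your analytic route --- recognizing the entries as Pad\'e-truncation residuals $\tfrac{1}{d_j}(\A h)^{-j}\big(N_{kk}^{<j}-R_{kk}D_{kk}^{<j}\big)$ and invoking $A$-stability --- is plausible in spirit, but be aware that in the Hermitian case there is \emph{no} bound on $\|\A h\|_2$, so uniform control of these expressions as $\lambda\to\infty$ does not follow merely from analyticity at $0$ plus $|R_{kk}|\le1$ on the left half-plane; the paper's sign argument is precisely what pins down the cancellation there. If your analytic estimate can be made uniform it would be a tidier alternative; otherwise the paper's explicit computation is the safe route.
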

\begin{rem}\label{rem:cond}
    One advantage of the Pad\'e approximation over the Taylor approximation is that when $\A$ is Hermitian and negative semi-definite, the spectral norm of the inverse of $\W_k(\A h)$ and $\LL_{m,k,p}(\A h)$ are independent of $\|\A h\|_2$. In contrast, for their counterparts, $\M_k(\A h)$ and $\C_{m,k,p}(\A h)$, the spectral norms may grow exponentially with $\|\A h\|_2$. To illustrate this, observe that
        \[ \left\|\C_{m,k,p}^{-1}(\A h)\right\|_2 \ge \left\|\M_k^{-1}(\A h)\right\|_2, \]
        and
        \[ \M_k^{-1}(\A h)\cdot\left(e_1\otimes \I_n\right) = \begin{bmatrix}
            \I_n\\
            \A h\\
            \frac{(\A h)^2}{2!}\\
            \vdots\\
            \frac{(\A h)^k}{k!}
        \end{bmatrix}.\]
        When $\A$ is Hermitian, we then have 
        \begin{equation}\label{eq:exp}
            \|\M_k^{-1}(\A h)\|_2 \ge \sqrt{\sum_{j=0}^k \frac{\|\A h\|_2^{2j}}{(j!)^2}}\ge \frac1{\sqrt{k+1}} \sum_{j=0}^k \frac{\|\A h\|_2^{j}}{j!} \approx \frac{\exp\left(\|\A h\|_2\right)}{\sqrt{k+1}}.
        \end{equation}
\end{rem}

\subsection{Block-encoding of \texorpdfstring{$\LL_{m,k,p}(\A h)$}{}}\label{sec:block_encode}
In this section, we implement the block-encoding of $\LL_{m,k,p}(\A h)$. The definition of block-encoding is provided in \Cref{sec:lem_block_encoding}. We assume:
\begin{itemize}
    \item The parameters $n$, $m$, $k+1$, and $p$ are powers of two, specifically $n = 2^{\mathfrak n}$, $m = 2^{\mathfrak m}$, $k + 1 = 2^{\mathfrak k}$, and $p = m \cdot(k+1) = 2^{\mathfrak m + \mathfrak k}$. The corresponding registers are denoted as $|\cdot\rangle_{\mathfrak n}$, $|\cdot\rangle_{\mathfrak m}$, and $|\cdot\rangle_{\mathfrak k}$, where the subscript indicates the number of qubits in this register. Additionally, a superscript $a$ denotes that the register contains ancilla qubits.
    \item An $(\alpha,\mathfrak d)$-block-encoding of $\A$ is available, which is denoted by $U_{\A}$.
\end{itemize}
Before constructing the block-encoding, we summarize the total cost in the following theorem.
\begin{thm}\label{thm:block_encoding}
    Given an $(\alpha, \mathfrak d)$-block-encoding of $\A$, there exists a $\left(4\cdot \max\{\alpha h, 1\}, \mathfrak d + 5\right)$-block-encoding of $\LL_{m,k,p}(\A h)$. The gate complexity is
    \begin{equation}\label{eq:gate_comp}
        \mathcal O\left(k + \poly  \log \left(m\cdot  k\right)\right),
    \end{equation}
    and it requires one query to the block-encoding of $\A$.
\end{thm}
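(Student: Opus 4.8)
# Proof Proposal for Theorem 3.7 (Block-encoding of $\LL_{m,k,p}(\A h)$)

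The plan is to decompose $\LL_{m,k,p}(\A h)$ into a small number of structurally simple pieces, block-encode each piece using standard primitives, and then combine them with a linear-combination-of-unitaries (LCU) argument together with the product/composition rules for block-encodings. Examining the matrix in \cref{eq:full_W}, its nonzero blocks fall into three groups: (i) the diagonal and bidiagonal identity-type blocks with scalar coefficients ($\frac{1}{\sqrt{k+1}}\I_n$ in the first block row of each $\W_k$, the $\I_n$ on the subdiagonal of each $\W_k$, the $\pm\I_n$ entries in the $p$ ``copy'' rows and in $\widetilde\T$, and the $\frac{1}{\sqrt{k+1}}$-scaled identity row); (ii) the blocks $\beta_j \A h$ appearing on the superdiagonal of each $\W_k$ block; and (iii) the coupling block $\widetilde\T = \frac{(\e_1\widetilde{\boldsymbol 1}^T)\otimes\I_n}{\sqrt{k+1}}$ between consecutive time steps. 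I would write $\LL_{m,k,p}(\A h) = P_{\A} + P_{\I}$, where $P_{\A}$ collects all $\A$-dependent blocks (group (ii)) and $P_{\I}$ collects all the $\A$-independent blocks (groups (i) and (iii)). The overall block-encoding is then obtained by LCU of a block-encoding of $P_{\A}$ (normalization $\sim \alpha h$, since $|\beta_j|\le 1$) and a block-encoding of $P_{\I}$ (normalization $O(1)$); the sum of subnormalizations is $O(\max\{\alpha h,1\})$, and tracking constants carefully should yield the factor $4$.

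The key steps, in order, are as follows. First, I would set up the index registers: the row/column index of $\LL_{m,k,p}$ splits as (time-step index in $|\cdot\rangle_{\mathfrak m}$) $\otimes$ (intra-block index in $|\cdot\rangle_{\mathfrak k}$) $\otimes$ (physical index in $|\cdot\rangle_{\mathfrak n}$), with a couple of extra qubits to address the last $p$ ``copy'' block rows. Second, for $P_{\A}$: the $\A$-dependent part acts as $\big(\text{shift/projector on the }\mathfrak m,\mathfrak k\text{ registers}\big)\otimes \A$, weighted by the state-dependent coefficients $\beta_j h$; I would prepare the coefficient state on an ancilla (a $\poly\log$-depth state preparation since the $\beta_j$ are given by the explicit formula in \cref{eq:alpha_beta}), apply the single query to $U_{\A}$ on the $\mathfrak n$ register controlled appropriately, and realize the ``$j\mapsto j+1$'' block shift within each $\W_k$ block and the block-diagonal placement across the $m$ steps by arithmetic/comparator circuits on the $\mathfrak m,\mathfrak k$ registers — cost $O(\poly\log(mk))$ beyond the one oracle query. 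Third, for $P_{\I}$: this is a fixed sparse matrix with entries in $\{\pm 1, \pm\tfrac{1}{\sqrt{k+1}}\}$ independent of $\A$; its sparsity is $O(k)$ per row (because of the dense first block row $[\tfrac{1}{\sqrt{k+1}}\I_n \cdots \tfrac{1}{\sqrt{k+1}}\I_n]$ of $\W_k$ and the dense last row with entries $(-1)^{j}$), so a sparse-access block-encoding costs $O(k)$ gates — this is exactly the source of the $O(k)$ term in \cref{eq:gate_comp}. Fourth, I would invoke the standard LCU lemma (stated in the appendix referenced as \Cref{sec:lem_block_encoding}) to combine the two block-encodings, using one extra ancilla qubit, and verify that the resulting pair $(4\max\{\alpha h,1\},\, \mathfrak d+5)$ is achieved by counting: $\mathfrak d$ from $U_{\A}$, plus a handful of ancillas for coefficient preparation, the sparse oracle for $P_{\I}$, the block-shift arithmetic, and the LCU selector.

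The main obstacle I anticipate is the careful bookkeeping needed to show that the dense rows of $\W_k$ (the $\frac{1}{\sqrt{k+1}}$-weighted all-ones first block row, and the last block row with alternating signs in $\widetilde{\boldsymbol 1}$) can be block-encoded with subnormalization $O(1)$ rather than $O(\sqrt{k})$. The point of the $\frac{1}{\sqrt{k+1}}$ prefactor — introduced deliberately just before \cref{eq:WT} — is precisely that the $\ell_2$ norm of each such row is $O(1)$, so a uniform-superposition state preparation on the $\mathfrak k$ register (Hadamards, $O(\mathfrak k)=O(\log k)$ gates) gives the right amplitudes; I would need to confirm that this, combined with the sparse-access structure for the remaining $O(1)$-sparse-per-row identity blocks, indeed yields total subnormalization $O(1)$ for $P_{\I}$ and that the ``$O(k)$'' gate count comes only from iterating over the $k+1$ diagonal identity positions (not from any amplitude blow-up). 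A secondary subtlety is ensuring the single-query claim: all $m$ occurrences of $\A h$ inside the block-diagonal $\W_k$'s, plus the $\beta_j$-scalings, must be served by one controlled use of $U_{\A}$, which works because $P_{\A} = (\text{fixed sparse pattern on }\mathfrak m,\mathfrak k) \otimes \A$ factors the $\A$ cleanly out of the index arithmetic.
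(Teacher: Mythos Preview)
Your high-level strategy---split $\LL_{m,k,p}(\A h)$ into an $\A$-dependent piece and an $\A$-independent piece, block-encode each, and combine by LCU with a single call to $U_{\A}$---is correct and is essentially what the paper does. The paper's split is slightly different in form: it writes $\LL_{m,k,p}(\A)=\LL_1+\LL_2$ by block-diagonal versus sub-block-diagonal position, then inside $\LL_1$ factors $\W=\M_1\otimes\I_n+\M_2\otimes\I_n+\M_3\otimes\A$, which isolates the single $\A$-tensor factor exactly as your $P_{\A}$ does. The Hadamard trick you anticipate for the dense $\tfrac{1}{\sqrt{k+1}}$-weighted rows is precisely the paper's implementation of $\M_2$ and $\M_6$ (relying on $k+1=2^{\mathfrak k}$), and it costs $O(\log k)$ gates with subnormalization $1$, not $O(k)$.

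Where your bookkeeping goes astray is the source of the $O(k)$ term. You assert that the $\beta_j$ coefficient preparation is $\poly\log$-depth and attribute the $O(k)$ instead to ``iterating over the $k+1$ diagonal identity positions'' in $P_{\I}$. But those subdiagonal $\I_n$ blocks are realized by a single modular-increment circuit in $O(\log k)$ gates, and the dense rows by Hadamards; so all of $P_{\I}$ is $O(\poly\log(mk))$. In the paper the $O(k)$ comes exactly from the $\beta_j$'s: the diagonal matrix $\M_3=\mathrm{diag}(0,\beta_k,\ldots,\beta_1)$ is block-encoded by a uniformly controlled rotation $|0\rangle|j\rangle\mapsto(\beta_{k+1-j}|0\rangle+\sqrt{1-\beta_{k+1-j}^2}\,|1\rangle)|j\rangle$, which needs $k+1$ single-qubit rotations and $k+1$ CNOTs. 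Your $\poly\log$ claim for this step would require a reversible arithmetic circuit that computes $\arccos\beta_j$ to sufficient precision from the bits of $j$; that is not impossible, but it is nontrivial, unjustified in your sketch, and if carried through would actually \emph{improve} \cref{eq:gate_comp} to $O(\poly\log(mk))$ rather than reproduce it.
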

Compared to the block-encoding presented in \cite{krovi2023improved}, ours is simpler, with both the normalization factor and the number of additional qubits being independent of $k$ and $m$. With minor modifications, the block-encoding proposed in this paper can also be applied to $\C_{m,k,p}(\A h)$, which shares a similar block structure to $\LL_{m,k,p}(\A h)$.

We begin by constructing the block-encoding for the case $h = 1$ and $\alpha = 1$, and will later generalize it for arbitrary $h > 0$ and $\alpha \ge \|\A \|_2$. To start, we rewrite the matrix as a sum of two matrices:
\begin{equation}\label{eq:L}
    \begin{aligned}
        \LL_{m,k,p}(\A) &={\footnotesize \begin{bmatrix}\W\\&\ddots\\&&\W \\&&&\frac{\I_n}{\sqrt{k+1}}\\&&& -\I_n & \I_n\\&&&&\ddots&\ddots\\ &&&&&-\I_n &\I_n \end{bmatrix} + \begin{bmatrix}
            \0\\\widetilde \T&\0\\&\ddots &\ddots \\&& \widetilde \T&\0\\&&& \frac{\widetilde {\boldsymbol 1}^T \otimes \I_n }{\sqrt{k+1}}& \0_n\\
            &&&&\0_n&\0_n\\&&&&&\ddots&\ddots\\ &&&&&& \0_n&\0_n  \end{bmatrix}}\\
        &=: \LL_1 + \LL_2,
    \end{aligned}
\end{equation}
where we use $\W$ to represent $\W_k(\A)$ for simplicity. Note that the nonzero components of $\widetilde \T$ are confined to its first block row, which equals $\frac{\widetilde {\boldsymbol 1}^T \otimes \I_n }{\sqrt{k+1}}$, allowing us to express $\LL_2$ as a tensor product. In the following two subsections, We will discuss the block-encoding of $\LL_1$ and $\LL_2$, and thus the block-encoding of $\LL_{m,k,p}(\A)$ can be implemented using LCU (Linear Combination of Unitaries).
\subsubsection{Block-encoding of \texorpdfstring{$\LL_1$}{}}
The first term $\LL_1$ can be written as
\[ \LL_1 = \begin{bmatrix}
    \mathcal W\\
    & \mathcal B    
\end{bmatrix}, \]
where $\mathcal W, \mathcal B \in \mathbb R^{mn(k+1)\times mn(k+1)}$, and explicitly
\begin{equation}\label{eq:WB}
    \mathcal W = \I_{m} \otimes \W, \quad \mathcal B = \begin{bmatrix}\frac{1}{\sqrt{k+1}}\\-1&1\\&\ddots&\ddots\\&&-1&1\end{bmatrix}\otimes \I_n . 
\end{equation}
The matrix $\W$ can be decomposed as
\[ \W = \M_1\otimes \I_n + \M_2\otimes \I_n + \M_3 \otimes \A, \]
where
\[ \M_1 = \begin{bmatrix}
    0\\
    1&0\\
    &\ddots & \ddots\\
    &&1&0
\end{bmatrix}, \quad \M_2 = \begin{bmatrix}
    \frac{1}{\sqrt{k+1}}&\frac{1}{\sqrt{k+1}}&\cdots & \frac{1}{\sqrt{k+1}}\\
    0&0\\
    &\ddots & \ddots\\
    &&0&0
\end{bmatrix}, \quad \M_3 = \begin{bmatrix}0\\&\beta_k\\ &&\ddots \\ &&& \beta_1\end{bmatrix}. \]
We now consider the block-encoding for $\M_1$, $\M_2$, and $\M_3$:
\begin{itemize}
    \item For matrix $\M_1$, its block-encoding $U_{\M_1}$ should satisfy
    \[ U_{\M_1}|0\rangle_1^a |j\rangle_{\mathfrak{k}} = \begin{cases}
        |0\rangle_1^a \left|(j+1)\bmod (k+1)\right\rangle_{\mathfrak k}, & j\not= k,\\
        |1\rangle_1^a \left|(j+1)\bmod (k+1)\right\rangle_{\mathfrak k}, & j = k. 
    \end{cases} \]    
    The corresponding quantum circuit is shown in \Cref{bc:m1}.
    \begin{figure}[H]
        \centering
        \[\Qcircuit @C=1em @R=.7em{
            \lstick{\ket{\cdot}_1^a}& \gate{X} & \qw & \qw\\
            \lstick{\ket{\cdot}_{\mathfrak k}}& \ctrl{-1} & \gate{ADD}& \qw
        }\]
        \caption{A quantum circuit implementing the block-encoding of $\M_1$, denoted by $U_{\M_1}$.}
        \label{bc:m1}
    \end{figure}
    \noindent The addition module performs $(j+1)\bmod (k+1)$, which is a unitary operation. Thus, we can verify that 
    \[ \langle 0|_1^a\langle i|_{\mathfrak{k}} U_{\M_1}|0\rangle_1^a |j\rangle_{\mathfrak{k}} = \begin{cases}0& i\not = (j+1)\bmod (k+1),\\
        0& i=0, j=k,\\
        1& i=j+1, j\not= k.\end{cases} \]
    Therefore, $U_{\M_1}$ is a $(1,1)$-block-encoding of $\M_1$, and the gate complexity for implementing $U_{\M_1}$ is $O(\log(k))$.
    \item For matrix $\M_2$, we have 
    \[ \M_2 = \begin{bmatrix}1\\&0\\&&\ddots\\ &&& 0\end{bmatrix} \begin{bmatrix}\frac{1}{\sqrt{k+1}}&\frac{1}{\sqrt{k+1}}& \cdots & \frac{1}{\sqrt{k+1}}\\ *&*&\cdots&*\\\vdots&\vdots&\cdots&\vdots\\ *&*&\cdots&* \end{bmatrix}. \] 
    Since $k+1 = 2^{\mathfrak k}$ is a power of $2$, the second matrix can be chosen as $H^{\otimes \mathfrak{k}}$, where $H$ is the Hadamard gate. The block-encoding of the first matrix is straightforward to implement in quantum circuits. Therefore, the full block-encoding $U_{\M_2}$ can be constructed using \Cref{lem:prod}, and the corresponding quantum circuit is shown in \Cref{bc:m2}. It is a $(1,1)$-block-encoding of $\M_2$, with gate complexity $O(\log(k))$.
    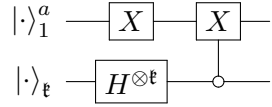
\begin{figure}[H]
        \centering
        \[\Qcircuit @C=1em @R=.7em{
            \lstick{\ket{\cdot}_1^a} & \gate{X} & \gate{X} & \qw\\
            \lstick{\ket{\cdot}_{\mathfrak k}} & \gate{H^{\otimes \mathfrak k}} & \ctrlo{-1}& \qw
        }\]
        \caption{A quantum circuit that implements the block-encoding of $\M_2$, denoted by $U_{\M_2}$.}
        \label{bc:m2}
    \end{figure}
    \item For matrix $\M_3$, note that $|\beta_j| < 1$, and we can compute $\beta_j$ explicitly. In fact, 
    \[ \beta_j = \frac{k-j+1}{j (2k-j+1)}, \quad j = 1, \dots, k+1. \]
    Thus, we can construct a quantum circuit to perform
    \[\begin{aligned}
        O_\beta |0\rangle_1^a |j\rangle_{\mathfrak k} &= \left(\beta_{k+1-j} |0\rangle_1^a + \sqrt{1 -\beta_{k+1-j}^2} |1\rangle_1^a \right) |j\rangle_{\mathfrak k}, \forall j = 0, \dots, k,
    \end{aligned}
    \]
    using a uniformly controlled rotation. This operation can be further simplified using the method proposed in \cite{PhysRevLett.93.130502,9951292}. Specifically, we need $k+1$ single-qubit gates and $k+1$ CNOT gates. It follows that $O_\beta$ is indeed a block-encoding of $\M_3$, as
    \[ \langle 0|_1^a\langle i|_{\mathfrak k} O_\beta |0\rangle_1^a |j\rangle_{\mathfrak k} = \delta_{ij} \beta_{k+1-j}. \]
    Setting $U_{\M_3} = O_{\beta}$, we conclude that $U_{\M_3}$ is a $(1, 1)$-block-encoding of $\M_3$ with gate complexity $O(k)$.
\end{itemize}
Using the unitaries $U_{\M_1}$, $U_{\M_2}$ and $U_{\M_3}$, we obtain a $(3, \mathfrak d + 3)$-block-encoding of $\W$ through LCU, as illustrated in \Cref{bc:W}, where $\zeta = 2\arccos \frac{\sqrt 6}{3}$. The gate complexity of this block-encoding is $O(k)$, except for the implementation of the controlled $U_{\A}$.
\begin{figure}[H]
    \centering
    \[\Qcircuit @C=1em @R=.7em{
        \lstick{\ket{\cdot}_1^a}&\gate{Z} & \gate{R_Y(\zeta)} & \ctrlo{1} & \ctrlo{1} & \ctrl{2} & \ctrl{4} & \gate{Z} & \gate{R_Y(\zeta)} & \qw\\
        \lstick{\ket{\cdot}_1^a}&\qw & \gate{H} & \ctrlo{1} & \ctrl{1} &\qw & \qw & \qw& \gate{H} & \qw\\
        \lstick{\ket{\cdot}_1^a}&\qw &\qw & \multigate{1}{U_{\M_1}} & \multigate{1}{U_{\M_2}} & \multigate{1}{U_{\M_3}} &\qw &\qw &\qw &\qw\\
        \lstick{\ket{\cdot}_{\mathfrak k}}&\qw &\qw & \ghost{U_{\M_1}} & \ghost{U_{\M_2}} & \ghost{U_{\M_3}} & \qw &\qw &\qw & \qw\\
        \lstick{\ket{\cdot}_{\mathfrak d}^a}&\qw &\qw &\qw &\qw &\qw & \multigate{1}{U_{\A}} &\qw &\qw & \qw\\
        \lstick{\ket{\cdot}_{\mathfrak n}}&\qw &\qw &\qw &\qw &\qw & \ghost{U_{\A}} &\qw &\qw & \qw
    }\]
    \caption{A quantum circuit that implements the block-encoding of $\W$, denoted by $U_{\W}$.}
    \label{bc:W}
\end{figure}
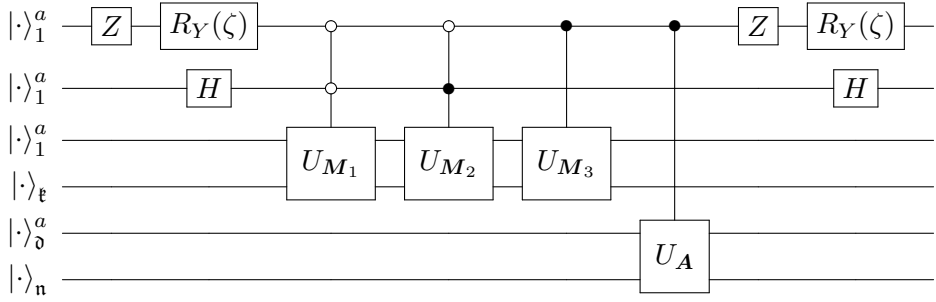

Next, we consider the block-encoding of $\mathcal B$ in \cref{eq:WB}.
The bi-diagonal matrix can be decomposed as
\[  \begin{bmatrix}\frac{1}{\sqrt{k+1}}\\-1&1\\&\ddots&\ddots\\&&-1&1\end{bmatrix} =  \begin{bmatrix}0\\-1&0\\&\ddots&\ddots\\&&-1&0\end{bmatrix} + \begin{bmatrix}\frac{1}{\sqrt{k+1}}\\&1\\&&\ddots\\&&&1\end{bmatrix} =: \M_4 + \M_5,\]
where $\M_4\in \mathbb R^{p\times p}$ is similar to $\M_1$ but with different size and opposite signs. To introduce the negative sign, we use a negative Pauli gate $Z$. The quantum circuit for $U_{\M_4}$ is shown in \Cref{bc:m4}
\begin{figure}[H]
    \centering
    \[\Qcircuit @C=1em @R=.7em{
        \lstick{\ket{0}_1^a} & \gate{X} & \gate{-Z} & \qw\\
        \lstick{\ket{\cdot}_{\mathfrak k + \mathfrak m}} & \ctrl{-1} & \gate{ADD} & \qw
    }\]
    \caption{A quantum circuit that implements the block-encoding of $\M_4$, denoted by $U_{\M_4}$.}
    \label{bc:m4}
\end{figure}
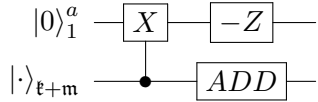
\noindent Thus, $U_{\M_4}$ is a $(1,1)$-block-encoding of matrix $\M_4$ with gate complexity $O(\log(m\cdot k))$.

The second matrix is diagonal and can be implemented using the circuit in \Cref{bc:m5}.
\begin{figure}[H]
    \centering
    \[\Qcircuit @C=1em @R=.7em{
        \lstick{\ket{0}_1^a} & \gate{R_Y(\theta_0)} & \qw\\
        \lstick{\ket{\cdot}_{\mathfrak k + \mathfrak m}} & \ctrlo{-1} & \qw
    }\]
    \caption{A quantum circuit that implements the block-encoding of $\M_5$, denoted by $U_{\M_5}$.}
    \label{bc:m5}
\end{figure}
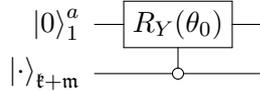
\noindent The parameter $\theta_0 = 2\arccos \frac{1}{\sqrt{k+1}}$, and $U_{\M_5}$ is a $(1,1)$-block-encoding of $\M_5$ with gate complexity $O(\log(m\cdot k))$. Next, using LCU, we obtain the block-encoding of $U_{\M_4} + U_{\M_5}$, which provides a $(2,2)$-block-encoding of $\mathcal B$. To match the $(3,\mathfrak d + 3)$-block-encoding of $\W$, we add a qubit, yielding a $(3, 3)$-block-encoding of $\mathcal B$ using \Cref{lem:adjust}, as shown in \Cref{bc:B}, where $\theta_1 = 2 \arccos \frac23$ such that $\cos \frac{\theta_1}2 = \frac23$. The gate complexity for the block-encoding of $\mathcal B$ is also $O(\log(m\cdot k))$.
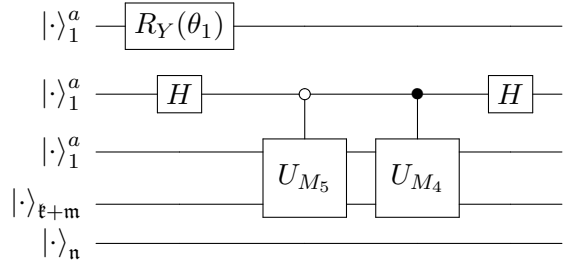
\begin{figure}[H]
    \centering
    \[\Qcircuit @C=1em @R=.9em{
        \lstick{\ket{\cdot}_1^a}& \gate{R_{Y}(\theta_1)} & \qw &\qw &\qw &\qw \\
        \lstick{\ket{\cdot}_1^a}& \gate{H} & \ctrlo{1} & \ctrl{1}& \gate{H} & \qw\\
        \lstick{\ket{\cdot}_1^a}& \qw & \multigate{1}{U_{M_5}} & \multigate{1}{U_{M_4}}& \qw & \qw \\
        \lstick{\ket{\cdot}_{\mathfrak k + \mathfrak m}}& \qw & \ghost{U_{M_5}} & \ghost{U_{M_4}} & \qw & \qw\\
        \lstick{\ket{\cdot}_{\mathfrak n}}& \qw & \qw &\qw &\qw &\qw
    }\]
    \caption{A quantum circuit that implements the block-encoding of $\mathcal B$, denoted by $U_{\mathcal B}$.}
    \label{bc:B}
\end{figure}
Combining the block-encoding of $\mathcal W$ and $\mathcal B$, we obtain a $(3, \mathfrak d + 3)$-block-encoding of $\LL_1$ with gate complexity $O(k + \log(m\cdot k))$. The corresponding quantum circuit is shown in \Cref{bc:l1}.
\begin{figure}[H]
    \centering
    \[\Qcircuit @C=1em @R=.7em{
        \lstick{\ket{\cdot}_{1}}& \ctrlo{1} & \ctrl{1} & \qw\\
        \lstick{\ket{\cdot}_{\mathfrak d + 3}^a} & \multigate{1}{U_{\mathcal W}} & \multigate{1}{U_{\mathcal B}} & \qw\\
        \lstick{\ket{\cdot}_{\mathfrak k + \mathfrak m + \mathfrak n}} & \ghost{U_{\mathcal W}} & \ghost{U_{\mathcal B}} & \qw
    }\]
    \caption{A quantum circuit that implements the block-encoding of $\LL_1$, denoted by $U_{\LL_1}$.}
    \label{bc:l1}
\end{figure}

\subsubsection{Block-encoding of \texorpdfstring{$\LL_2$}{}}
The second term $\LL_2$ in \cref{eq:L} can be written as 
\[ \begin{aligned}
    \LL_2 
    &= \left(\begin{bmatrix}
        \I_{m}\\&\0_{m}
    \end{bmatrix}\otimes \widetilde \T\right)\cdot \left(\begin{bmatrix}0&&&1\\ 1 & 0\\ & \ddots & \ddots \\ && 1 & 0\end{bmatrix}\otimes \I_{n(k+1)}\right) =: \LL_{2}^{(1)} \LL_2^{(2)}
\end{aligned}, \]
where 
\[ \widetilde \T = \frac1{\sqrt{k+1}}\begin{bmatrix}(-1)^{k+1}& (-1)^{k}&\cdots & -1\\0&0&\cdots&0\\\vdots&\vdots&\cdots&\vdots\\ 0&0&\cdots&0\end{bmatrix}\otimes \I_n=: \M_6\otimes \I_n, \quad \text{and}\quad  \M_7 := \begin{bmatrix}\I_m\\& \0_m\end{bmatrix}. \]
Similar to $\M_2$, we have 
\[ \M_6 = \begin{bmatrix}
    1\\&0\\&&\ddots \\&&&0
\end{bmatrix} \begin{bmatrix}\frac{(-1)^{k+1}}{\sqrt{k+1}}&\frac{(-1)^k}{\sqrt{k+1}}& \cdots & \frac{-1}{\sqrt{k+1}}\\ *&*&\cdots&*\\\vdots&\vdots&\cdots&\vdots\\ *&*&\cdots&* \end{bmatrix}\]
Since $k+1 = 2^{\mathfrak k}$, we have $(-1)^{k+1} = 1$, and the following observation.
\begin{rem}
    The second row of $H^{\otimes \mathfrak k}$ is of the form 
    \[\begin{bmatrix}
        \frac{1}{\sqrt{k+1}}& \frac{-1}{\sqrt{k+1}}&\cdots &\frac{1}{\sqrt{k+1}}& \frac{-1}{\sqrt{k+1}}
    \end{bmatrix}. \]
\end{rem}
Using this observation, we can express $\M_6$ as 
\[ \M_6 = \begin{bmatrix}1\\&0\\&&\ddots\\&&&0\end{bmatrix} \left(H^{\otimes (\mathfrak k - 1)} \otimes (XH)\right). \]
Define $\widetilde H_{\mathfrak k} := \left(H^{\otimes (\mathfrak k - 1)} \otimes (XH)\right)$, the quantum circuit in \Cref{bc:m6} implements a $(1, 1)$-block-encoding of $\M_6$ with gate complexity $O(\log(k))$.
\begin{figure}[H]
    \centering
    \[\Qcircuit @C=1em @R=.7em{
        \lstick{\ket{\cdot}_1^a}& \gate{X} & \gate{X} & \qw\\
        \lstick{\ket{\cdot}_{\mathfrak k}}& \gate{\widetilde H_{\mathfrak k}} & \ctrlo{-1}& \qw
    }\]
    \caption{A quantum circuit that implements the block-encoding of $\M_6$, denoted by $U_{\M_6}$.}
    \label{bc:m6}
\end{figure}
Moreover, the matrix $\M_7$ can be viewed as the top-left block of the following matrix 
\[ \left[\begin{array}{cc|cc}
    \I_m & & \0_m &\\
    & \0_m & & \I_m\\
    \hline
    \0_m & & \I_m & \\
    & \I_m & & \0_m
\end{array}\right] = \begin{bmatrix}1&0&0&0\\0&0&0&1\\0&0&1&0\\0&1&0&0\end{bmatrix} \otimes \I_m, \]
which can be implemented using a CNOT gate. The corresponding quantum circuit is shown in \Cref{bc:m7}.
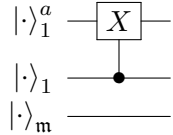
\begin{figure}[H]
    \centering
    \[ \Qcircuit @C=1em @R=1.2em{
        \lstick{\ket{\cdot}_1^a}& \gate{X} & \qw\\
        \lstick{\ket{\cdot}_1}& \ctrl{-1} & \qw\\
        \lstick{\ket{\cdot}_{\mathfrak m}}& \qw & \qw\\
    }\]
    \caption{A quantum circuit that implements the block-encoding of $\M_7$, denoted by $U_{\M_7}$.}
    \label{bc:m7}
\end{figure}
With the block-encodings of $\M_6$ and $\M_7$, we can obtain the block-encoding of matrix $\LL_2^{(1)}$ using \Cref{lem:2.1}. Noting that $\LL_2^{(2)}$ can be implemented by an addition module on the register $|\cdot\rangle_{\mathfrak m + 1}$, and applying \Cref{lem:prod}, the matrix $\LL_2$ can be implemented using the quantum circuit shown in \Cref{bc:l2}. This circuit provides a $(1, 2)$-block-encoding with gate complexity $O(\log(m\cdot k))$.
\begin{figure}[H]
    \centering
    \[\Qcircuit @C=1em @R=.7em{
        \lstick{\ket{\cdot}_1^a}& \qw &\qw \barrier[-1.8em]{6} & \qw &\qw \barrier[-1.8em]{6}& \gate{X} & \gate{X}& \qw& \qw \\
        \lstick{\ket{\cdot}_1^a}& \qw &\qw & \gate{X} & \qw & \qw & \qw &\qw & \qw\\
        \lstick{\ket{\cdot}_1}& \multigate{1}{ADD} &\qw & \ctrl{-1}& \qw &\qw & \qw &\qw & \qw\\
        \lstick{\ket{\cdot}_{\mathfrak m}}& \ghost{ADD}&\qw & \qw & \qw &\qw&\qw & \qw & \qw\\
        \lstick{\ket{\cdot}_{\mathfrak k}}& \qw &\qw & \qw &\qw & \gate{\widetilde H_{\mathfrak k}}& \ctrlo{-4} & \qw& \qw\\
        \lstick{\ket{\cdot}_{\mathfrak n}} &\qw &\qw & \qw & \qw&\qw & \qw & \qw& \qw\\
        & \dstick{U_{\LL_2^{(1)}}} & & \dstick{U_{\M_7}} & & & \dstick{U_{\M_6}}
    }\]
    \caption{A quantum circuit that implements the block-encoding of $\LL_2$, denoted by $U_{\LL_2}$.}
    \label{bc:l2}
\end{figure}
\subsubsection{Block-encoding of \texorpdfstring{$\LL_{m,k,p}(\A h)$}{}}
In summary, we now have a $(3,\mathfrak d + 3)$-block-encoding of $\LL_1$ and a $(1, 2)$-block-encoding of $\LL_2$. Using the following circuit, we can obtain a $(4, \mathfrak d + 4)$-block-encoding of $\LL_{m,k,p}(\A) = \LL_1 + \LL_2$ with gate complexity $O(k + \log(m\cdot k))$.
\begin{figure}
    \centering
    \[\Qcircuit @C=1em @R=.7em{
    \lstick{\ket{\cdot}_1^a}& \gate{Z} & \gate{R_{Y}(\theta_2)}& \ctrlo{1}& \ctrl{1}& \gate{Z}& \gate{R_Y(\theta_2)}& \qw\\
    \lstick{\ket{\cdot}^a_{\mathfrak d + 3}}& \qw & \qw & \multigate{1}{U_{\LL_1}}& \multigate{1}{U_{\LL_2}}& \qw & \qw& \qw\\
    \lstick{\ket{\cdot}_{\mathfrak k + \mathfrak m + \mathfrak n + 1}}& \qw & \qw & \ghost{U_{\LL_1}}& \ghost{U_{\LL_2}}& \qw & \qw &\qw}   
    \]
    \caption{A quantum circuit that implements the block-encoding of $\LL_{m,k,p}(\A)$.}
    \label{bc:L}
\end{figure}
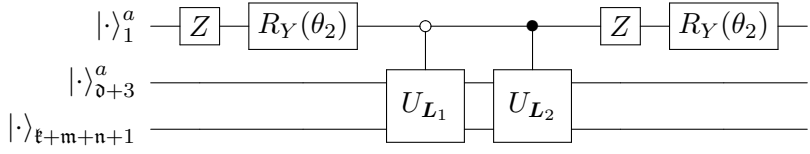
We choose $\theta_2 = \pi / 3$ such that 
\[ R_Y(\theta_2) Z = \begin{bmatrix}\frac{\sqrt3}{2} &-\frac12 \\ \frac12&\frac{\sqrt3}{2}\end{bmatrix}\begin{bmatrix}1&\\&-1\end{bmatrix} =\begin{bmatrix}\frac{\sqrt3}{2} &\frac12 \\ \frac12&-\frac{\sqrt3}{2}\end{bmatrix}. \]
Currently, we have implemented the block-encoding of $\LL_{m,k,p}(\A)$ based on a given $(1,\mathfrak d)$-block-encoding of $\A$. However, in our application, we need to implement the block-encoding of $\LL_{m,k,p}(\A h)$, and we may only have an $(\alpha, \mathfrak d)$-block-encoding of $\A$. If $h = \frac1\alpha$, we can use this block-encoding as before. However, if $h \not =  1/\alpha$, some modifications are needed:
\begin{itemize}
    \item If $h < \frac1\alpha$, using \Cref{lem:adjust}, we obtain a $(\frac1h, \mathfrak d + 1)$-block-encoding of $\A$, which can be directly applied in the framework discussed above. As a result, we will achieve a $(4, \mathfrak d + 5)$-block-encoding of $\LL_{m,k,p}(\A h)$.
    \item If $h > \frac1 \alpha$, we apply \Cref{lem:adjust} to all of the block-encodings in the above construction, except $U_{\A}$, scaling them by a factor of $\frac1{\alpha h}$. As a result, we will obtain a $(4\alpha h, \mathfrak d + 5)$-block-encoding of $\LL_{m,k,p}(\A h)$.
\end{itemize}
Hence, we have completed the constructive proof of \Cref{thm:block_encoding}.

\section{Complexity analysis}\label{sec:CompAna}
With the help of the upper bound of the condition number of $\LL_{m,k,p}(\A h)$, we proceed with the complexity analysis of the designed algorithm. We begin by discussing how to choose the parameters $m$ and $k$ to control the absolute error of the resulting approximation. Then, given fixed values of $m$ and $k$, we explore how to select $p$ such that the algorithm succeeds with constant positive probability.

\subsection{Approximation accuracy}\label{sec:accuracy}
Given $m$ and $k$, we aim to bound the errors
\begin{equation}\label{eq:abs_err}
    \left\|\widehat \x(ih) - \x(ih)\right\|_2, \quad \forall i = 1, \dots, m, 
\end{equation}
where $\x(ih)$ is the exact solution to \cref{eq:main} at time $ih$, and $\widehat \x(ih)$ is the approximation defined in \cref{eq:proc}. Throughout this section, we assume that $\|\A T\|_2 \ge 1$. We first bound these errors in a manner similar to the analysis in~\cite{berry2017quantum,krovi2023improved,dong2024investigationquantumalgorithmlinear}, for the case where $m = \lceil \|\A T\|_2\rceil$, implying that $\|\A h\|_2\le 1$. The result is summarized in the following lemma, with the proof provided in \Cref{pf:lem4.12}.
\begin{lem}\label{lem:sol_err0}
    Suppose $\|\A T\|_2 \ge 1$. Let $m = \left\lceil \|\A T\|_2 \right \rceil$ and $\delta\in \left(0,\frac{1}{m}\right)$. If $k$ satisfies  
    \begin{equation}\label{eq:cond_k}
        \frac{k! k!}{(2k)!(2k+1)!}\le \frac{\delta}{100},
    \end{equation}
    then we have 
    \begin{equation}
        \|\I - e^{-i\A h}R_{kk}^i(\A h)\|_2 \le \delta \|\A T\|_2, \quad \forall i =  1, \dots, m,
    \end{equation}
    and 
    \begin{equation}\label{eq:sol_err0}
        \left\|\widehat \x(ih) - \x(ih)\right\|_2 \le \delta T  \cdot \left( \|\A\|_2 \|\x(ih)\|_2 +  \|\bb\|_2\right), \quad \forall i = 1, \dots, m.
    \end{equation}
\end{lem}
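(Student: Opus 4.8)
The plan is to track the global error $\boldsymbol{g}_i:=\widehat\x(ih)-\x(ih)$ through a one–step recursion, collapse that recursion to a closed form, and then estimate the closed form with the classical integral remainder of the diagonal Pad\'e approximant, exploiting the restriction $\|\A h\|_2\le 1$ (which holds since $h=T/\lceil\|\A T\|_2\rceil$). Subtracting the exact update $\x(ih)=e^{\A h}\x((i-1)h)+(e^{\A h}-\I_n)\A^{-1}\bb$ from the Pad\'e update \eqref{eq:proc} (with $p=q=k$) gives $\boldsymbol{g}_i=R_{kk}(\A h)\boldsymbol{g}_{i-1}+\bigl(R_{kk}(\A h)-e^{\A h}\bigr)\bigl(\x((i-1)h)+\A^{-1}\bb\bigr)$ with $\boldsymbol{g}_0=\0$. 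Since $\x((i-1)h)+\A^{-1}\bb=e^{\A(i-1)h}(\x_0+\A^{-1}\bb)$, a one–line induction collapses this to $\boldsymbol{g}_i=\bigl(R_{kk}(\A h)^i-e^{\A i h}\bigr)(\x_0+\A^{-1}\bb)$. The key manoeuvre is then: using $e^{\A i h}(\x_0+\A^{-1}\bb)=\x(ih)+\A^{-1}\bb$ and the fact that all matrices in sight are functions of $\A$ and hence commute, I set $P:=R_{kk}(\A h)e^{-\A h}=\I_n+e^{-\A h}E_k$ with $E_k:=R_{kk}(\A h)-e^{\A h}$, so that $R_{kk}(\A h)^i e^{-\A ih}=P^i$ and
\[
  \boldsymbol{g}_i=(P^i-\I_n)\bigl(\x(ih)+\A^{-1}\bb\bigr)=(P^i-\I_n)\x(ih)+(P^i-\I_n)\A^{-1}\bb.
\]
Conjugating the (possibly expansive) propagator $R_{kk}(\A h)$ by $e^{\A h}$ turns it into $\I_n$ plus a tiny perturbation, which is what prevents a $\|R_{kk}(\A h)\|_2^i$ or $C(\A)$ blow-up and makes $\|\x(ih)\|_2$ (rather than $\|\x_0\|_2$) appear on the right.

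Next I bound $E_k$. By the classical integral remainder for the diagonal Pad\'e approximant (see \cite{moler2003nineteen}), $E_k=\tfrac{(-1)^{k+1}(\A h)^{2k+1}}{(2k)!}D_{kk}(\A h)^{-1}\int_0^1 e^{t\A h}t^k(1-t)^k\,\diff t$; with $\|e^{t\A h}\|_2\le e^{\|\A h\|_2}\le e$ on $[0,1]$ and $\int_0^1 t^k(1-t)^k\,\diff t=(k!)^2/(2k+1)!$, it remains to control $\|D_{kk}(\A h)^{-1}\|_2$. For this I use $d_j=\tfrac1{j!}\prod_{i=0}^{j-1}\tfrac{k-i}{2k-i}\le\tfrac1{2^j j!}$, so that $\|D_{kk}(\A h)-\I_n\|_2\le\sum_{j\ge1}d_j\le\sqrt e-1<1$ and hence $\|D_{kk}(\A h)^{-1}\|_2\le(2-\sqrt e)^{-1}$. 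This yields $\|E_k\|_2\le\tfrac{e}{2-\sqrt e}\cdot\tfrac{(k!)^2}{(2k)!(2k+1)!}\|\A h\|_2^{2k+1}$, and dividing the same representation by $\A$ gives $E_k\A^{-1}=h\widetilde G_k(\A h)$ with $\|\widetilde G_k(\A h)\|_2\le\tfrac{e}{2-\sqrt e}\cdot\tfrac{(k!)^2}{(2k)!(2k+1)!}\|\A h\|_2^{2k}$ — crucially involving no factor $\|\A^{-1}\|_2$.

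Then I assemble. From \eqref{eq:cond_k}, $\delta<1/m$ and $\|\A h\|_2\le1$ one checks $m\,e\,\|e^{-\A h}E_k\|_2\le\tfrac{e^3}{100(2-\sqrt e)}<1$; expanding $P^i-\I_n=\sum_{j=1}^i\binom ij(e^{-\A h}E_k)^j$ (legitimate since the factors commute) gives $\|P^i-\I_n\|_2\le(1+\|e^{-\A h}E_k\|_2)^i-1\le e^2 i\|E_k\|_2$, and likewise $(P^i-\I_n)\A^{-1}=h\sum_{j=1}^i\binom ij(e^{-\A h}E_k)^{j-1}e^{-\A h}\widetilde G_k(\A h)$ satisfies $\|(P^i-\I_n)\A^{-1}\|_2\le e^2 i h\|\widetilde G_k(\A h)\|_2$. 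Substituting into the splitting, using $i\le m$, $m\|\A h\|_2^{2k+1}=\|\A T\|_2\|\A h\|_2^{2k}\le T\|\A\|_2$, $mh=T$, and $\|\A h\|_2^{2k}\le1$, I obtain
\[
  \|\boldsymbol{g}_i\|_2\le\frac{e^3}{2-\sqrt e}\cdot\frac{k!\,k!}{(2k)!(2k+1)!}\,T\bigl(\|\A\|_2\|\x(ih)\|_2+\|\bb\|_2\bigr),
\]
and since $e^3/(2-\sqrt e)<100$ while $k!\,k!/((2k)!(2k+1)!)\le\delta/100$ by hypothesis, the right side is $\le\delta T(\|\A\|_2\|\x(ih)\|_2+\|\bb\|_2)$, which is \eqref{eq:sol_err0}.

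The main obstacle is the estimate of $E_k$: it must be tight in $k$ (to cash in the hypothesis on $k!\,k!/((2k)!(2k+1)!)$), uniform in $k$, and free of both $\|\A^{-1}\|_2$ and $C(\A)$ for a general, possibly non-normal $\A$ with $\|\A h\|_2\le1$ — which rules out a naive spectral bound and forces use of the explicit integral representation together with the uniform estimate $\|D_{kk}(\A h)^{-1}\|_2\le(2-\sqrt e)^{-1}$ coming from $d_j\le 2^{-j}/j!$. Everything else is bookkeeping of $O(1)$ constants, and the factor $100$ in \eqref{eq:cond_k} is exactly the slack needed to absorb $e^3/(2-\sqrt e)\approx 57$ together with the auxiliary factors above; the assumption $k\ge3$ plays only a secondary role here, being inherited from the condition-number analysis of \cref{thm:cond}.
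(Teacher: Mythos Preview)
Your proof is correct and follows essentially the same route as the paper's: both reduce the error to $(P^i-\I_n)(\x(ih)+\A^{-1}\bb)$ with $P=e^{-\A h}R_{kk}(\A h)=\I_n+\G$ (your $e^{-\A h}E_k$ is exactly the paper's $\G$), then bound $\|\G\|_2$ and $\|\G\A^{-1}\|_2$ via the integral Pad\'e remainder together with a uniform estimate on $\|D_{kk}(\A h)^{-1}\|_2$. Your constants are in fact slightly sharper (you use $\|D_{kk}^{-1}\|_2\le(2-\sqrt e)^{-1}$ whereas the paper uses the weaker $\tfrac{2}{3-e}$, and you use the exact identity $m\|\A h\|_2=\|\A T\|_2$ rather than $m\le 2\|\A T\|_2$), but the structure and the key ideas coincide.
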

Next, we bound these errors from another perspective. Given $k$ and the desired approximation accuracy, we aim to determine how large the step size $h$ can be. This may relax the condition that $\|\A h\|_2\le 1$. To achieve this, we draw a tighter bound on the remainder of Pad\'e approximation. As pointed out in \cite{higham2005scaling}, the remainder can be expressed as the following series,
\begin{equation}\label{eq:rho}
    \rho_k(x) := e^{-x} R_{kk}(x) - 1 = \sum_{j=2k+1}^\infty c_j x^j,
\end{equation}
where the series converges absolutely for $|x|<\min\{|t|: D_{kk}(t)=0\}=:\nu_k$, which is the smallest absolute value among the zero points of $D_{kk}(\cdot)$. Therefore, let 
\[\G := e^{-\A h}R_{kk}(\A h) - \I,\]
and we obtain the following bound
\begin{equation}\label{eq:cond_high-1}
    \left\|\G\right\|_2 = \left\|\rho_k\left(\A h\right)\right\|_2 \le \sum_{j=2k+1}^\infty |c_j| \theta^j =: f_k(\theta),
\end{equation}
where $\theta:=\|\A h\|_2 < \nu_k$. It is clear that if $\A$ is a general matrix and only $\|\A\|$ is known, then \cref{eq:cond_high-1} provides the tightest bound for $\|\G\|_2$. We state the following lemma, whose proof is provided in \Cref{pf:lem4.12}.

\begin{lem}\label{lem:sol_err}
    Let $\delta \in \left(0,\frac{1}{\|\A T\|_2}\right)$, with $\|\A T\|_2 \ge 1$, and let $m$ be chosen such that
    \begin{equation}\label{eq:cond_f}
        \frac{f_k(\theta)}{\theta} \le \frac{\delta}{e-1},
    \end{equation}
    where $\theta:= \|\A h\|_2$ and $f_k(\cdot)$ is defined in \cref{eq:cond_high-1}. Then, we have
    \begin{equation}
        \|\I - e^{-i\A h}R_{kk}^i(\A h)\|_2 \le \delta \|\A T\|_2, \quad \forall i =  1, \dots, m,
    \end{equation}
    and
    \begin{equation}\label{eq:sol_err}
        \left\|\widehat \x(ih) - \x(ih)\right\|_2 \le \delta T  \cdot \left( \|\A\|_2 \|\x(ih)\|_2 +  \|\bb\|_2\right), \quad \forall i = 1, \dots, m.
    \end{equation}
\end{lem}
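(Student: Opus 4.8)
The plan is to write the $i$-step Pad\'e iteration \eqref{eq:proc} in closed form, re-center the accumulated error around the exact solution $\x(ih)$, and then control the two resulting operator norms by means of the remainder series of \eqref{eq:rho} together with the hypothesis \eqref{eq:cond_f}. The argument runs parallel to \cref{lem:sol_err0}, but replaces the crude factorial remainder bound by the sharp series estimate $f_k$, which is what allows $\|\A h\|_2$ to exceed $1$.

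Iterating \eqref{eq:proc} gives, with $R := R_{kk}(\A h)$, the closed form $\widehat\x(ih) = R^i\x(0) + (R^i - \I)\A^{-1}\bb$; subtracting $\x(ih) = e^{\A ih}\x(0) + (e^{\A ih} - \I)\A^{-1}\bb$ and using $e^{\A ih} = (e^{\A h})^i$ yields $\widehat\x(ih) - \x(ih) = (R^i - e^{\A ih})(\x(0) + \A^{-1}\bb)$. Since all matrices here are functions of $\A$, hence commute, and $\x(0) + \A^{-1}\bb = e^{-\A ih}(\x(ih) + \A^{-1}\bb)$, this becomes
\[ \widehat\x(ih) - \x(ih) = (P_i - \I)(\x(ih) + \A^{-1}\bb), \qquad P_i := e^{-\A ih}R^i = (\I + \G)^i, \]
with $\G = e^{-\A h}R_{kk}(\A h) - \I = \rho_k(\A h)$. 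Expanding $P_i - \I = \sum_{l=1}^i \binom{i}{l}\G^l$ and factoring a single $\G$ out of each term gives $\|P_i - \I\|_2 \le (1+\|\G\|_2)^i - 1$ and $\|(P_i - \I)\A^{-1}\|_2 \le \|\G\A^{-1}\|_2\big((1+\|\G\|_2)^i - 1\big)/\|\G\|_2$, and hence $\big\|\widehat\x(ih) - \x(ih)\big\|_2 \le \|P_i - \I\|_2\,\|\x(ih)\|_2 + \|(P_i - \I)\A^{-1}\|_2\,\|\bb\|_2$.

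It remains to feed in the two norm estimates. The bound $\|\G\|_2 \le f_k(\theta)$ (recall $\theta = \|\A h\|_2$) is exactly \eqref{eq:cond_high-1}; and writing $\rho_k(\A h)\A^{-1} = h\sum_{j\ge 2k+1} c_j(\A h)^{j-1}$ — a series converging absolutely because $\theta < \nu_k$, which \eqref{eq:cond_f} implicitly forces and which also makes $D_{kk}(\A h)$ nonsingular so that $R$ and $\rho_k(\A h)$ are well defined — gives $\|\G\A^{-1}\|_2 \le h f_k(\theta)/\theta$. Now invoke \eqref{eq:cond_f}, i.e. $f_k(\theta)/\theta \le \delta/(e-1)$, together with $\delta\|\A T\|_2 < 1$: since $m\theta = \|\A T\|_2$, one has $i\|\G\|_2 \le m f_k(\theta) \le \|\A T\|_2\,\delta/(e-1) < 1$, so the elementary inequality $e^y - 1 \le (e-1)y$ valid on $[0,1]$ applies to $y = i\|\G\|_2$. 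This yields $(1+\|\G\|_2)^i - 1 \le (e-1)\,m f_k(\theta) \le m\theta\,\delta = T\|\A\|_2\,\delta$ and $\big((1+\|\G\|_2)^i - 1\big)/\|\G\|_2 \le (e-1)m$, hence $\|(P_i - \I)\A^{-1}\|_2 \le (e-1)m\cdot h f_k(\theta)/\theta = (e-1)T f_k(\theta)/\theta \le T\delta$. Substituting both into the displayed inequality gives $\|\widehat\x(ih) - \x(ih)\|_2 \le \delta T(\|\A\|_2\|\x(ih)\|_2 + \|\bb\|_2)$, which is \eqref{eq:sol_err}.

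The closed-form recursion and the commuting functional-calculus bookkeeping are routine; the delicate point is matching the constant exactly. Obtaining $\delta T$ with no extra factor hinges on pairing the sharp bound $e^y - 1 \le (e-1)y$ on $[0,1]$ with the factor $(e-1)^{-1}$ deliberately placed in \eqref{eq:cond_f}, and on the fact that $\delta\|\A T\|_2 < 1$ keeps $y = i\|\G\|_2$ inside $[0,1]$; any cruder exponential estimate would leave slack. The remaining care goes into the absolute convergence of the series for $\rho_k(\A h)\A^{-1}$ and the well-definedness of $R_{kk}(\A h)$, both of which rest on $\theta < \nu_k$, precisely the relaxation of the usual restriction $\|\A h\|_2 \le 1$ required by \cref{lem:sol_err0}. (As a byproduct, $i\|\G\|_2 < (e-1)^{-1}$ gives $\|\I - e^{-i\A h}R_{kk}^i(\A h)\|_2 < 1$, so the hypotheses of this lemma also imply condition \eqref{eq:cond_cond} of \cref{thm:cond}.)
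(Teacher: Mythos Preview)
Your proof is correct and follows essentially the same route as the paper's: both derive the identity $\widehat\x(ih)-\x(ih)=(P_i-\I)(\x(ih)+\A^{-1}\bb)$ with $P_i=(\I+\G)^i$, bound $\|\G\|_2\le f_k(\theta)$ and $\|\G\A^{-1}\|_2\le h f_k(\theta)/\theta$ via the series \eqref{eq:rho}, and then use $(1+\|\G\|_2)^i-1\le e^{m\|\G\|_2}-1\le (e-1)m\|\G\|_2$ together with $m\theta=\|\A T\|_2$ and $\delta\|\A T\|_2<1$ to land on \eqref{eq:sol_err}. The only cosmetic difference is that the paper obtains the setup by referring back to the proof of \cref{lem:sol_err0}, whereas you spell out the binomial factorization directly; your added remarks on $\theta<\nu_k$ and the implication for \eqref{eq:cond_cond} are correct side observations.
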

In practice, we can evaluate $f_k(\theta)$ to high precision using symbolic computation, as demonstrated in \cite{higham2005scaling}. Let $\delta = 10^{-8}$. We apply the bisection method to find the largest $\theta$ that satisfies \cref{eq:cond_f}, denoted by $\theta_k$. The results, shown in \Cref{tab:theta_k}, indicate that $\|\A h\|_2$ can be significantly larger than $1$. Therefore, if the approximation order $k$ is predetermined, \Cref{lem:sol_err} allows us to choose $h$ without the restriction $\|\A h\|_2\le 1$. This is especially advantageous when $\A$ is Hermitian and negative semi-definite, as the condition number of $\LL_{m,k,p}(\A h)$ can be bounded without the $\|\A h\|_2\le 1$ condition.
\begin{table}[t]
    \renewcommand{\arraystretch}{2}
    \setlength{\tabcolsep}{4pt}
    \centering
    \begin{tabular}{l|c|c|c|c|c|c|c|c|c|c|c|c|c|c}\hline\hline
        $k$ & $5$ & $6$ & $7$ & $8$ & $9$ & $10$ & $11$ & $12$ & $13$ & $14$ & $15$ & $16$ & $17$ & $18$\\\hline 
        $\theta_k$ & $1.49$ & $2.36$ & $3.34$ & $4.40$ & $5.53$ & $6.69$ & $7.89$ & $9.11$ & $10.35$ & $11.61$ & $12.88$ & $14.16$ & $15.45$ & $16.74$ 
        \\\hline\hline
    \end{tabular}
    \caption{Maximal values $\theta_k$ of $\|\A h\|_2$ such that the condition \cref{eq:cond_f} is satisfied with $\delta = 10^{-8}$.}
    \label{tab:theta_k}
\end{table}

\subsection{Success probability}
In quantum computation, we are concerned with the success probability of the algorithm. In the case of solving \cref{eq:full_W}, the resulting vector is stored in a quantum superposition state. After measuring this superposition, we obtain the desired quantum state with a certain probability. The equation below illustrates how the resulting vector is stored in a quantum state,
\begin{equation}\label{eq:sol}
    \begin{aligned}
        \frac1C\begin{bmatrix} \z^{(1)} \\ \z^{(2)} \\ \vdots \\ \z^{(m)}\\ \widehat \x \\ \widehat \x \\ \vdots \\ \widehat \x
        \end{bmatrix} = \sum_{i=0}^{m-1}\sum_{j=0}^{k} \frac{\left\|\z_j^{(i)}\right\|_2}C \left|i(k+1) + j\right\rangle\left|\z_j^{(i)}\right\rangle + \sum_{a = 0}^{p-1} \frac{\left\|\widehat \x\right\|_2}C \left|m(k+1) + a\right\rangle \left|\widehat \x\right\rangle,
    \end{aligned}
\end{equation}
where $C = \sqrt{\sum_{i=1}^m \left\| \z^{(i)}\right\|_2^2 + p \|\widehat \x\|_2^2}$ is a normalization factor. In the rest of this paper, we refer to the qubits in the first ket as the index register, while the qubits in the second ket as the value register. We measure the index register and obtain a result greater than $m(k+1) - 1$ with probability 
\[\mathbb P_{succ} := \frac{ p \|\widehat \x\|_2^2}{\sum_{i=1}^m \left\| \z^{(i)}\right\|_2^2 + p \|\widehat \x\|_2^2}. \]
Conditioned on the measuring result of the index register being greater than $m(k+1)-1$, the quantum state of the value register is $|\widehat \x\rangle$, and the algorithm successfully provides the desired result.
\begin{thm}\label{thm:succ_prob}
    Let $\A\in \mathbb C^{n\times n}$, $T > 0$, and $\delta > 0$ satisfy the condition  
    \[\delta ' :=   \delta T \cdot \left(\left\|\A\right\|_2 + \frac{\|\bb\|_2}{\|\x(T)\|_2}\right) < \frac18.\]
    Then, the following results hold:
    \begin{enumerate}
        \item if $\A$ is Hermitian and negative semi-definite, and the conditions of \Cref{lem:sol_err} are satisfied, we have 
        \begin{equation}\label{eq:succ1}
            \mathbb P_{succ} \ge \frac12 \cdot \frac{p}{6 m g^2\left(h^2 + 1\right) + p},
        \end{equation}
        where 
        \[g := \frac{\max\left\{\max_{0\le t\le T}\left\|\x(t)\right\|_2, \|\bb\|_2\right\}}{\left\|\x(T)\right\|_2}.\]
        \item if $\A$ is an arbitrary matrix, $m$ is chosen such that $\|\A h\|_2 \le 1$, and the conditions of \Cref{lem:sol_err0} are satisfied, we have
        \begin{equation}\label{eq:succ2}
            \mathbb P_{succ} \ge \frac12 \cdot \frac{p}{204 m g^2 (h^2 + 1) + p}.
        \end{equation}
    \end{enumerate}
    In summary, for both scenarios, let $p = \left\lceil 6 m\left(1 + h^2 \right)\right\rceil$, then we have
    \[\mathbb P_{succ} \ge \frac12 \cdot \frac{6}{204 g^2 +6} \ge \frac{1}{70g^2}.\] 
\end{thm}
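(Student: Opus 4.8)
The plan is to prove \Cref{thm:succ_prob} by controlling the ratio $\sum_{i=1}^m \|\z^{(i)}\|_2^2 / \|\widehat\x\|_2^2$ appearing in the denominator of $\mathbb P_{succ}$, and this reduces to two subgoals: lower-bounding $\|\widehat\x\|_2 = \|\widehat\x(T)\|_2$ in terms of $\|\x(T)\|_2$, and upper-bounding each $\|\z^{(i)}\|_2^2$ in terms of the quantities $\max_{0\le t\le T}\|\x(t)\|_2$ and $\|\bb\|_2$ that define $g$. First I would use \Cref{lem:sol_err} (resp.\ \Cref{lem:sol_err0}) together with the hypothesis $\delta' < 1/8$: since $\|\widehat\x(ih) - \x(ih)\|_2 \le \delta T(\|\A\|_2\|\x(ih)\|_2 + \|\bb\|_2)$, at $i=m$ this gives $\|\widehat\x(T) - \x(T)\|_2 \le \delta'\|\x(T)\|_2 < \tfrac18\|\x(T)\|_2$, hence $\|\widehat\x(T)\|_2 \ge \tfrac78\|\x(T)\|_2$; more importantly, combined with the same bound at each intermediate time and a telescoping/triangle-inequality argument one gets $\|\widehat\x(ih)\|_2 \le (1+\delta')\max_{0\le t\le T}\|\x(t)\|_2 + (\text{small})$, so every $\|\widehat\x(ih)\|_2$ is comparable to the numerator of $g$ (up to a universal constant and up to $\|\bb\|_2$ terms).

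Next I would bound the block components $\z^{(i)} = (\z_k^{(i)},\dots,\z_0^{(i)})^T$ of the solution of \cref{eq:full_W}. The structure of $\W_k(\A h)$ from \cref{eq:WT} and the derivation leading to \cref{eq:sim_backfor} show that $\z_0^{(i)} = \vv$ and the remaining $\z_j^{(i)}$ are obtained by applying the partial products $\prod \beta_\ell (\A h)$ to $\vv$, where $\vv = D_{pq}^{-1}(\A h)\widehat\x((i-1)h) + (D_{pq}^{-1}(\A h) - \I_n)\A^{-1}\bb$; since $|\beta_\ell| < 1$ and (from \Cref{lem:W_inv_norm}, or directly from the negative-semidefinite/$\|\A h\|_2\le1$ assumptions bounding $\|D_{kk}^{-1}(\A h)\|_2$ and $\|(D_{kk}^{-1}(\A h)-\I)\A^{-1}\|_2$) one shows $\|\z_j^{(i)}\|_2 \le \|\vv\|_2 \lesssim \|\widehat\x((i-1)h)\|_2 + h\|\bb\|_2$. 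Summing over $j=0,\dots,k$ is where the factor $k$ could appear; but the intent of the statement is that the bound is $\|\z^{(i)}\|_2^2 \le c\,(h^2+1)\,g^2\|\x(T)\|_2^2$ for a universal $c$ (the $6$ and $204$ in \eqref{eq:succ1}-\eqref{eq:succ2}), so I expect the $\ell^2$ norm over the $k+1$ blocks to be controlled without a $k$ factor — precisely because the $\tfrac1{\sqrt{k+1}}$ normalization in the first and last block rows of $\W_k(\A h)$ was inserted for this purpose, and because the geometric decay of the partial products $\prod\beta_\ell$ makes $\sum_j\|\z_j^{(i)}\|_2^2$ a convergent (in $k$) multiple of $\|\vv\|_2^2$.

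Assembling, I would write $\sum_{i=1}^m\|\z^{(i)}\|_2^2 \le c\,m\,(h^2+1)\,g^2\|\x(T)\|_2^2$ and $p\|\widehat\x\|_2^2 \ge p\cdot\tfrac{49}{64}\|\x(T)\|_2^2 \ge \tfrac12 p\|\x(T)\|_2^2$, so
\[
\mathbb P_{succ} = \frac{p\|\widehat\x\|_2^2}{\sum_{i=1}^m\|\z^{(i)}\|_2^2 + p\|\widehat\x\|_2^2} \ge \frac{\tfrac12 p\|\x(T)\|_2^2}{c\,m(h^2+1)g^2\|\x(T)\|_2^2 + p\|\x(T)\|_2^2} = \frac12\cdot\frac{p}{c\,m(h^2+1)g^2 + p},
\]
with $c = 6$ in the Hermitian negative-semidefinite case and $c = 204$ in the general $\|\A h\|_2\le1$ case (the larger constant absorbing the weaker bounds on $\|D_{kk}^{-1}(\A h)\|_2$ available from \Cref{lem:le1_W_inv_norm} versus the sharp bound in the Hermitian case). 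Finally, substituting $p = \lceil 6m(1+h^2)\rceil \ge 6m(1+h^2)$ into \eqref{eq:succ2} gives $\mathbb P_{succ} \ge \tfrac12\cdot\tfrac{6m(1+h^2)}{204m(h^2+1)g^2 + 6m(1+h^2)} = \tfrac12\cdot\tfrac{6}{204g^2+6} \ge \tfrac{1}{70g^2}$ (using $g\ge1$, which holds since $\max_t\|\x(t)\|_2 \ge \|\x(T)\|_2$), and the Hermitian case only improves this. The main obstacle I anticipate is the second subgoal: showing that $\sum_{j=0}^k\|\z_j^{(i)}\|_2^2$ is bounded by a $k$-independent constant times $\|\vv\|_2^2$, which requires carefully exploiting the explicit decay of $\prod_{\ell}\beta_\ell$ (equivalently the explicit form of the Padé coefficients $d_j$) rather than the crude bound $|\beta_\ell|<1$, and then tracking the universal constants through \Cref{lem:W_inv_norm}/\Cref{lem:le1_W_inv_norm} to land on exactly $6$ and $204$.
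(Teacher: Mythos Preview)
Your overall architecture matches the paper: lower-bound $\|\widehat\x(T)\|_2$ via the error lemmas, upper-bound each $\|\z^{(i)}\|_2^2$, then assemble. The final assembly and the substitution of $p$ are fine, and for case~2 your ``geometric decay of $d_j=\prod_\ell\beta_\ell$'' intuition is essentially what the paper does: with $\|\A h\|_2\le 1$ one has $\|\z_j^{(i)}\|_2 \le \tfrac{2hd_j}{3-e}\|\A\,\widehat\x((i-1)h)+\bb\|_2$, and $\sum_{j\ge 1}d_j^2\le 1$ kills the sum; the constant $204$ is $4\cdot(2/(3-e))^2$.

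The gap is in case~1, where $\|\A h\|_2$ may exceed $1$. Neither mechanism you name works there. The $1/\sqrt{k+1}$ factor rescales two \emph{rows} of $\W_k(\A h)$ together with the matching right-hand-side entries; the solution components $\z_j^{(i)}$ are unchanged, so this plays no role in bounding $\|\z^{(i)}\|_2$. And the decay of $d_j$ is defeated by the growth of $(\A h)^j$: since $\z_j^{(i)} = -d_j\, D_{kk}^{-1}(\A h)\,(-\A h)^{j-1}\bigl(\A h\,\widehat\x((i-1)h)+h\bb\bigr)$, for a single eigenvalue $-\lambda<0$ each ratio $d_j(\lambda h)^j/D_{kk}(-\lambda h)$ is only $\le 1$, yielding a crude $k+1$ factor when you sum squares. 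The paper instead diagonalizes $\A$ and, per eigenvalue, uses the nonnegativity inequality $\sum_{j\ge 1} d_j^2(\lambda h)^{2j} \le \bigl(\sum_{j\ge 1} d_j(\lambda h)^j\bigr)^2 = (D_{kk}(-\lambda h)-1)^2$, then splits into small versus large $|\lambda|$ so as to absorb a factor $(\lambda h)^{\pm 2}$ via the elementary bound $0\le (D_{kk}(-\mu)-1)/(\mu\,D_{kk}(-\mu))\le 1$ for $\mu\ge 0$. This eigenvalue-wise algebraic step, not coefficient decay or row scaling, is what produces the $k$-independent estimate $\sum_{j=1}^k\|\z_j^{(i)}\|_2^2\le 2(h^2+1)\bigl(\|\widehat\x((i-1)h)\|_2^2+\|\bb\|_2^2\bigr)$ and hence the constant $6$.
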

\noindent The proof of \Cref{thm:succ_prob} is provided in \Cref{sec:succ_prob}.
\subsection{Main Result}
Before stating our main result, we first fill a small gap between the QLSA proposed in \cite{costa2022optimal} and our application. The QLSA, as presented in \cite{costa2022optimal}, is as follows.
\begin{thm}[\cite{costa2022optimal}, Theorem 19]\label{thm:qlsa}
    Let $\LL$ be a matrix such that $\|\LL\|_2 = 1$ and $\|\LL^{-1}\|_2 = \kappa$. Given an oracle block-encoding of $\LL$ and an oracle for implementing $|\bb\rangle$, there exists a quantum algorithm that produces the normalized state $|\LL^{-1}\bb\rangle$ within an error $\epsilon$, using $\mathcal O(\kappa \log \frac1\epsilon)$ calls to the oracles.
\end{thm}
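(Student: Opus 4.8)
The statement is the optimal-scaling quantum linear system algorithm of Costa, An, Sanders, Su, Babbush and Berry. The plan is to establish it in two stages: a \emph{discrete adiabatic} stage that prepares $|\LL^{-1}\bb\rangle$ to constant accuracy using $\mathcal O(\kappa)$ oracle calls, followed by an \emph{eigenstate-filtering} stage that sharpens the error to $\epsilon$ at the cost of an extra $\log(1/\epsilon)$ factor, so that the total query count is $\mathcal O(\kappa\log(1/\epsilon))$.

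\emph{Reduction to a null-state problem.} First I would embed the linear system into a Hermitian eigenvalue problem. Introduce a one-parameter family of Hermitian matrices $H(s)$, $s\in[0,1]$, assembled blockwise from $\LL$, the state $|\bb\rangle$, and the projector $Q_b=\I-|\bb\rangle\langle\bb|$, of the schematic form
\begin{equation*}
    A(s)=(1-s)\,Z\otimes Q_b+s\,\sigma_+\otimes\LL\,Q_b,\qquad H(s)=\begin{pmatrix}0 & A(s)\\ A(s)^\dagger & 0\end{pmatrix},
\end{equation*}
arranged so that $H(0)$ has an easily preparable null vector while the null vector of $H(1)$ is proportional to $|0\rangle|\LL^{-1}\bb\rangle$. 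The analytic input needed here is a uniform lower bound on the spectral gap of $H(s)$ at $0$: after the standard reparametrization of the path with $\mathrm d s/\mathrm d\tau\propto\Delta(s)^2$, the effective gap stays $\Omega(1/\kappa)$ along the whole path, and I would derive this from $\|\LL\|_2=1$ and $\|\LL^{-1}\|_2=\kappa$ by an explicit singular-value computation for $A(s)$.

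\emph{Discrete adiabatic evolution.} Using qubitization I would build, from the given block-encoding of $\LL$ (hence a block-encoding of $H(s)$ with $\mathcal O(1)$ ancillas and one query to $U_{\LL}$ per step), a quantum walk operator $W(s)$ whose eigenphases near $0$ track the null space of $H(s)$; the oracle for $|\bb\rangle$ is invoked once to prepare the initial state. I would then apply the discrete adiabatic theorem to the product $W(s_L)\cdots W(s_1)$ over a grid $s_1<\cdots<s_L$ of the reparametrized schedule: the theorem bounds the leakage out of the instantaneous null eigenspace by a quantity that decays in $L$ and is polynomial in $1/\Delta_{\min}$. The main obstacle, and the technical heart of the argument, is to drive this down to the \emph{optimal} count $L=\mathcal O(\kappa)$ for constant error; this needs a careful discrete summation-by-parts analysis (the discrete analogue of integration by parts in the continuous adiabatic theorem) together with a smooth schedule whose derivatives vanish at the endpoints, so that the boundary terms cancel and one avoids the naive $\mathcal O(\kappa^2)$ bound.

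\emph{Error amplification.} After the adiabatic stage we hold a state $\mathcal O(1)$-close to $|0\rangle|\LL^{-1}\bb\rangle$. I would finish by applying a Lin--Tong eigenstate-filtering polynomial to $H(1)$ via quantum singular value transformation: a minimax polynomial of degree $\mathcal O\!\big(\Delta_{\min}^{-1}\log(1/\epsilon)\big)=\mathcal O(\kappa\log(1/\epsilon))$ that is $\approx 1$ at eigenvalue $0$ and at most $\epsilon$ on $[\Delta_{\min},1]$ projects onto the null vector up to error $\epsilon$; since the overlap with the null space is $\Omega(1)$ after the adiabatic stage, a single round of amplitude amplification renders the output $\epsilon$-close deterministically. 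Summing the query counts of the two stages yields the claimed $\mathcal O(\kappa\log(1/\epsilon))$, and matching optimality follows from the known $\Omega(\kappa)$ lower bound for QLSP, which I would not reprove here.
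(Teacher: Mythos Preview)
The paper does not prove this theorem: it is quoted verbatim as Theorem~19 of \cite{costa2022optimal} and used as a black box. There is therefore no ``paper's own proof'' to compare against; your proposal is a faithful high-level outline of the Costa--An--Sanders--Su--Babbush--Berry argument (discrete adiabatic preparation to constant accuracy in $\mathcal O(\kappa)$ queries, followed by eigenstate filtering of degree $\mathcal O(\kappa\log(1/\epsilon))$), which is indeed how that reference establishes the result.
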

\Cref{thm:qlsa} requires $\|\LL\|_2 = 1$ and a $(1,*)$-block-encoding of $\LL$, but the matrix $\LL_{m,k,p}(\A h)$ does not satisfy these conditions. Specifically, its norm can exceed one, and its block-encoding, as given by \Cref{thm:block_encoding}, is not a $(1,*)$-block-encoding. To address this, suppose we have only an $(\alpha, *)$-block-encoding of $\LL$, where $\alpha \ge \|\LL\|_2$. We illustrate the impact of this modification in the case where $\LL$ is Hermitian and positive definite. To perform the adiabatic quantum simulation, which is the key step of the QLSA, the initial Hamiltonian is chosen as $H_0=\begin{bmatrix}0& Q_{\bb}\\ Q_{\bb}& 0\end{bmatrix}$, the same as in \cite{costa2022optimal}, where $Q_{\bb} = I - |\bb\rangle \langle \bb|$. However, the terminal Hamiltonian must be $H_1 = \begin{bmatrix}0& \frac1\alpha \LL Q_{\bb}\\ \frac1\alpha Q_{\bb}\LL& 0\end{bmatrix}$ to match the block-encoding of $\LL$. The Hamiltonian used in the adiabatic quantum computation is then of the form:
\[H(s)= (1-f(s))H_0 + f(s)H_1,\]
where $f(s):[0,1]\to [0,1]$ is the schedule function. Then the quantity $\Delta_0(s)$ (Definition 2 in \cite{costa2022optimal}), which indicates the eigenvalues' gap of $H(s)$, is given by 
\[\Delta_0(s) = 1 - f(s) + \frac{f(s)}{\alpha \|\LL^{-1}\|_2},\]
instead of $\Delta_0(s) = 1 - f(s) + \frac{f(s)}{\kappa}$ in \cite{costa2022optimal}. This shows that $\alpha \|\LL^{-1}\|_2$ plays the role of $\kappa$ in \Cref{thm:qlsa}. Therefore, under these settings, the query complexity for preparing the state $|\LL^{-1}\bb\rangle$ within error $\epsilon$ becomes $\mathcal O(\alpha \|\LL^{-1}\|_2 \log \frac 1\epsilon)$. In the ideal case, where the block-encoding satisfies $\alpha = \|\LL\|_2$, this simplifies to $\alpha \|\LL^{-1}\|_2 = \kappa$, and the query complexity matches the result of \Cref{thm:qlsa}.
\begin{thm}\label{thm:main}
    Suppose $\A\in \mathbb C^{n\times n}$ and an $(\alpha, \mathfrak d)$-block-encoding of $\A$, denoted by $U_{\A}$, is available. Let $\x_0$ and $\bb$ be $n$-dimensional vectors with known norms, and assume access to two controlled oracles, $O_{\x}$ and $O_{\bb}$, which prepare quantum states proportional to $\x_0$ and $\bb$, respectively. Let $\x(t)$ evolve according to the differential equation
    \begin{equation}
        \frac{\diff \x(t)}{\diff t} = \A \x(t) + \bb, \quad  t\in[0, T],
    \end{equation}
    with the initial condition $\x(0)=\x_0$. Without loss of generality, assume that $\|\A T\|_2 \ge 1$, and choose $m = \lceil \|\A T\|_2\rceil$. Under these conditions, the proposed algorithm produces a quantum state that is $\epsilon$-close to $\x(T)/\|\x(T)\|_2$ in $l^2$ norm, with a success probability of $\Omega (1)$, and includes a flag indicating success. The algorithm makes
    \begin{equation}\label{eq:query_complexity}
        \mathcal O\left((h^2 + 1)\cdot \alpha T \cdot C(\A) \cdot g \cdot \sqrt{k \log k} \cdot \log \frac{\sqrt {m(h^2 + 1)} g}{\epsilon}\right)
    \end{equation}
    queries to $U_{\A}$, $O_{\x}$, and $O_{\bb}$, where  
    \[\begin{aligned}
            & C(\A) := \max_{0\le t\le T}\left\|\exp(\A t)\right\|_2, \quad g := \frac{\max\left\{\max_{0\le t\le T}\left\|\x(t)\right\|_2, \|\bb\|_2\right\}}{\left\|\x(T)\right\|_2}, \\
            & k = \left\lceil\frac{\log M}{\log \log M} \right\rceil, \quad M = \frac{401 T }{\epsilon}\left( \|\A\|_2 + \frac{\|\bb\|_2}{\|\x(T)\|_2}\right), \quad \text{and} \quad \epsilon < \frac12. 
    \end{aligned}\]
    The gate complexity of this algorithm is larger than its query complexity by a factor of 
    \begin{equation}\label{eq:gate_complexity}
        \mathcal O\left( k + \poly\log \left(m\cdot k\right)\right).
    \end{equation}
\end{thm}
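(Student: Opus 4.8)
The plan is to assemble the pieces built in Sections~\ref{sec:Poly2Linsys}--\ref{sec:CompAna} into one pipeline and then track three error contributions. The algorithm is: (i) fix the discretization parameters $k$, $m$, $p$; (ii) prepare the right-hand side $\widetilde\y$ of the linear system~\eqref{eq:full_W} using $O_\x$, $O_\bb$ and the known norms of $\x_0$ and $\bb$; (iii) apply the QLSA of \cref{thm:qlsa}, in the $(\alpha,*)$-block-encoding variant discussed just before \cref{thm:main}, to the block-encoding of $\LL_{m,k,p}(\A h)$ from \cref{thm:block_encoding}, obtaining the normalized solution state~\eqref{eq:sol}; (iv) measure the index register and, on an outcome $\ge m(k+1)$, output the value register $\ket{\widehat\x}$ with a success flag; (v) wrap (ii)--(iv) in $O(1/\sqrt{\mathbb P_{succ}})=O(g)$ rounds of (fixed-point) amplitude amplification to raise the success probability to $\Omega(1)$. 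Correctness then reduces to controlling the discretization error $\widehat\x(T)-\x(T)$, the QLSA solution error, and the distortion these induce on the post-selected, renormalized state.

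First I would pin down the parameters. Choosing $\delta$ so that $\delta':=\delta T(\|\A\|_2+\|\bb\|_2/\|\x(T)\|_2)$ is a suitable constant multiple of $\epsilon$ forces $M=\Theta\!\big(\tfrac{T}{\epsilon}(\|\A\|_2+\|\bb\|_2/\|\x(T)\|_2)\big)$, and Stirling's formula applied to $\tfrac{k!\,k!}{(2k)!(2k+1)!}$ shows $k=\lceil\log M/\log\log M\rceil$ satisfies hypothesis~\eqref{eq:cond_k} of \cref{lem:sol_err0} (and $k\ge 3$ once $\epsilon$ is below an absolute constant). I would take $m=\lceil\|\A T\|_2\rceil$, so $\|\A h\|_2\le 1$ (in the Hermitian negative semi-definite case one may instead invoke \cref{lem:sol_err} with a larger step $h\le\theta_k$ and a smaller $m$), and set $p=\lceil 6m(1+h^2)\rceil$ as in \cref{thm:succ_prob}. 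The same smallness makes $\|\G\|_2=\|\rho_k(\A h)\|_2\le\delta/100<1/(100m)$; since $\G=\rho_k(\A h)$ commutes with $e^{\A h}$ we get $e^{-i\A h}R_{kk}^i(\A h)-\I=(\I+\G)^i-\I$, whose norm is below $1$, so condition~\eqref{eq:cond_cond} of \cref{thm:cond} holds and its condition-number bounds are available. From \cref{lem:sol_err0} (or \cref{lem:sol_err}) the relative error $\|\widehat\x(T)-\x(T)\|_2/\|\x(T)\|_2\le\delta'$, and the elementary bound $\|\,\ket{u}-\ket{v}\,\|_2\le 2\|u-v\|_2/\|v\|_2$ yields $\|\,\ket{\widehat\x(T)}-\ket{\x(T)}\,\|_2\le 2\delta'$, which I would allocate to be $\le\epsilon/2$.

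Then I would do the complexity bookkeeping. Combining \cref{thm:block_encoding} (normalization $4\max\{\alpha h,1\}$, $\mathfrak d+5$ ancillas, one query to $U_\A$, $O(k+\poly\log(mk))$ gates) with \cref{thm:cond}, the modified QLSA of the paragraph preceding \cref{thm:main} runs with effective parameter $\alpha_\LL\|\LL^{-1}\|_2=O\!\big(\max\{\alpha h,1\}\,C(\A)(m+p)\sqrt{k\log k}\big)$; using $m+p=O(m(1+h^2))$, $mh=T$ and $m=O(\alpha T)$ (valid since $\|\A T\|_2\ge 1$) this is $O\!\big((h^2+1)\,\alpha T\,C(\A)\sqrt{k\log k}\big)$. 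One QLSA run costs $O\!\big(\alpha_\LL\|\LL^{-1}\|_2\log(1/\epsilon_{\mathrm{QLSA}})\big)$ queries; post-selection onto outcomes $\ge m(k+1)$ has amplitude $\sqrt{\mathbb P_{succ}}=\Omega(1/g)$ and inflates the QLSA error by $1/\sqrt{\mathbb P_{succ}}$, so I would set $\epsilon_{\mathrm{QLSA}}=\Theta\!\big(\epsilon\sqrt{\mathbb P_{succ}}/(C/\|\widehat\x\|_2)\big)$; tracking $C/\|\widehat\x\|_2=O(\sqrt{m(1+h^2)}\,g)$ produces the logarithmic argument $\log\!\big(\sqrt{m(h^2+1)}\,g/\epsilon\big)$. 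Finally the $O(g)$ amplitude-amplification rounds multiply the query count by $O(g)$, giving exactly~\eqref{eq:query_complexity}; each QLSA query to the $\LL$-block-encoding unfolds into one $U_\A$ query plus $O(k+\poly\log(mk))$ other gates, so the gate complexity exceeds the query complexity by the factor~\eqref{eq:gate_complexity}, with $m\le\lceil\|\A T\|_2\rceil$ as required.

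I expect the main obstacle to be the joint error/probability propagation in steps (iii)--(v): one must verify that a QLSA error $\epsilon_{\mathrm{QLSA}}$ in the unnormalized solution vector, together with the discretization error, still leaves the post-selected, renormalized value register $\epsilon$-close to $\ket{\x(T)}$ \emph{and} keeps $\mathbb P_{succ}=\Omega(1/g^2)$ (so that the amplification overhead is genuinely $O(g)$), while carrying constants precise enough to reproduce the stated logarithmic argument. A secondary nuisance is presenting the two regimes --- general $\A$ with $\|\A h\|_2\le 1$ versus Hermitian negative semi-definite $\A$ with the relaxed step size --- under a single bound, for which the ``in summary'' lines of \cref{thm:cond} and \cref{thm:succ_prob} are the key unifying reductions.
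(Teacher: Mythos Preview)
Your proposal is correct and follows essentially the same pipeline as the paper: the same parameter choices for $k,m,p,\delta$, the same verification of condition~\eqref{eq:cond_cond} via $(\I+\G)^i-\I$, the same use of \cref{thm:block_encoding}, \cref{thm:cond}, and \cref{thm:succ_prob} in the modified QLSA, and amplitude amplification with $O(g)$ rounds. The one place where the paper is more explicit than your sketch is the post-selection error propagation you flag as the main obstacle: the paper handles it by writing the exact and QLSA-perturbed solution states as $\sum_l\gamma_l\ket{l}\ket{\x_l}$ and invoking Lemmas~\ref{lem:berry14} and~\ref{lem:berry15} with the \emph{single-index} amplitude $\gamma_l\ge 1/(c\sqrt m\,g)$ (not $\sqrt{\mathbb P_{succ}}$), which directly gives $\epsilon'=\Theta\big(\epsilon/(\sqrt{m(h^2+1)}\,g)\big)$ and the stated logarithmic argument; your formula $\epsilon_{\mathrm{QLSA}}=\Theta(\epsilon\sqrt{\mathbb P_{succ}}/(C/\|\widehat\x\|_2))$ double-counts this amplitude, but the discrepancy sits inside a logarithm and does not change the final bound.
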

\begin{proof}
    Note that the choice of $k$ ensures that 
    \begin{equation}\label{cond:0}
        \frac{k! k!}{(2k)!(2k + 1)!} \le \frac1M,
    \end{equation}
    and $k\log k = \mathcal O(\log M)$. Next, let $\delta$ satisfy
    \[\frac{100}{M} \le \delta \le  \frac{\epsilon}{4T \left(\|\A\|_2 + \frac{\|\bb\|_2}{\|\x(T)\|_2}\right)} < \frac{1}{\|\A T\|_2},\]
    such that condition \cref{eq:cond_k} is satisfied. With the choice of $\delta$, we have 
    \[\begin{aligned}
        \quad \delta T \cdot \left(\left\|\A\right\|_2 + \frac{\|\bb\|_2}{ \|\x(T)\|_2}\right)\le \frac\epsilon 4 < \frac18.
    \end{aligned}\]
    Using \Cref{lem:sol_err0}, we have 
    \begin{equation}\label{eq:pri}
        \|\widehat \x(ih) - \x(ih)\|_2 \le \delta T \left(\|\A\|_2 \|\x(ih)\|_2 + \|\bb\|_2\right),\quad \forall i = 1, \dots, m,
    \end{equation}
    and 
    \begin{equation}\label{eq:pri2}
        \|\I - e^{-i\A h}R_{kk}^i(\A h)\|_2 \le \delta \|\A T\|_2 < 1, \quad \forall i =  1, \dots, m.
    \end{equation}
    Therefore, the condition \eqref{eq:cond_cond} of \Cref{thm:cond} is satisfied. We then set $p =  \left\lceil 6m\left(1 + h^2 \right)\right\rceil$ and construct the linear system \cref{eq:full_W}. Choosing $i = m$ in \eqref{eq:pri}, we obtain
    \[\frac{\left\|\widehat \x(T) - \x(T)\right\|_2}{\|\x(T)\|_2} < \frac\epsilon 4,\]
    where $\widehat \x(T)$ is the vectors corresponding to $\widehat \x$ in \cref{eq:full_W}. Using \Cref{lem:berry13}, we then have
    \begin{equation}\label{eq:err_noQLSA}
        \left\|\left|\x(T)\right\rangle - \left|\widehat \x(T)\right\rangle\right\|_2\le \frac\epsilon 2.
    \end{equation}
    
    We use the QLSA proposed in \cite{costa2022optimal} to solve the linear system, which may introduce errors. This method requires the block-encoding of the linear operator $\LL_{m,k,p}(\A h)$, and in \Cref{sec:block_encode}, we propose a way to block-encode it using a single query to $U_{\A}$. Additionally, the right-hand-side vector in the linear system \cref{eq:full_W} can be formed with a constant number of calls to $ O_{\x}$ and $O_{\bb}$, as shown in~\cite{berry2017quantum,krovi2023improved}. According to \Cref{thm:cond}, \Cref{thm:block_encoding}, and \Cref{thm:qlsa}, to obtain an $\epsilon'$-close solution to the linear system \cref{eq:full_W}, the query complexity of the QLSA is given by
    \begin{equation}\label{eq:complex_eps'}
        \mathcal O\left( \alpha h \cdot C(\A) \cdot  (m+p)\cdot \sqrt{k \log k} \cdot \log \frac{1}{\epsilon'}\right)= \mathcal O\left((h^2 + 1)\cdot \alpha T \cdot C(\A)\cdot \sqrt{k \log k} \cdot \log \frac{1}{\epsilon'} \right),
    \end{equation}
    where we set $p =  \left\lceil 6 m\left(1 + h^2 \right)\right\rceil$ and $h = T/m$ to arrive at the right-hand side. After solving the linear system, we measure the index register of the quantum state \cref{eq:sol} in the standard basis. Conditioned on the outcome being in 
    \[S:= \left\{m(k+1), m(k+1) + 1, \cdots, m(k+1) + p -1\right\},\]
    we output the state of the value register. We will show that it is sufficient to choose $\epsilon' = \mathcal O\left(\frac\epsilon{\sqrt{m(h^2 + 1)} g}\right)$, such that the probability of this event occurring is $\Omega(1/g^2)$, and the output state is $\epsilon / 2$-close to the state $|\widehat \x(T)\rangle$.
    
    \Cref{eq:sol} gives the normalized exact solution to the linear system \cref{eq:full_W}. Let $d = m (k+1) + p$, and define $\x_l = \z_j^{(i)}$ for $l = i(k+1) + j$. The normalized exact solution can then be expressed as
    \[|\x\rangle = \sum_{l=0}^d \gamma_l |l\rangle |\x_l\rangle,\]
    where $\gamma_l = \frac{\|\x_l\|_2}{\|\x\|_2}$. Note that $|\x_l\rangle = |\widehat \x(T)\rangle$ for any $l\in S$, and using \Cref{thm:succ_prob}, we have 
    \[\gamma_l = \frac{\|\widehat \x\|_2}{\|\x\|_2} \ge \frac{1}{c\sqrt m g}, \quad \forall l\in S,\]
    where $c = \sqrt{420(h^2 + 1)}$. Now suppose the QLSA outputs the state 
    \[|\x'\rangle = \sum_{l=0}^d \gamma_l' |l\rangle |\x_l'\rangle \]
    which satisfies
    \[\left\||\x\rangle - |\x'\rangle \right\|_2 \le \epsilon'. \]
    Then, for any $l\in S$, by \Cref{lem:berry14}, we have 
    \[\||\x_l\rangle - |\x_l'\rangle \|_2 \le \frac{2\epsilon'}{\gamma_l - \epsilon'}. \]
    Choosing $\epsilon' = \frac{\epsilon}{5c \sqrt m g}$, we get 
    \[ \frac{2\epsilon'}{\gamma_l - \epsilon'} \le \frac{\frac{2\epsilon}{5c \sqrt m g}}{\frac{1}{c \sqrt m g} - \frac{\epsilon}{5c \sqrt m g}} = \frac{2\epsilon}{5 - \epsilon} < \frac\epsilon 2.\]
    Thus, we have 
    \[ \||\x_l'\rangle - |\x(T)\rangle \|_2 \le \left\|\left|\x(T)\right\rangle - \left|\x_l\right\rangle\right\|_2 + \||\x_l\rangle - |\x_l'\rangle \|_2 < \epsilon, \quad \forall l \in S .\]
    Furthermore, by \Cref{lem:berry15}, we have 
    \[\gamma_l' \ge \gamma_l - \epsilon' \ge \frac{9}{10c\sqrt m g}.\]
    Therefore, if we measure the index register of $|\x'\rangle$ in the standard basis, the probability of getting outcome $l\in S$ is 
    \[\sum_{l\in S}|\gamma_l'|^2 \ge \frac{81p}{100c^2 m g^2} = \frac{81}{7000 g^2},\]
    and when this occurs, the state of the value register becomes $|\x_l'\rangle$, which is $\epsilon$-close to the desired state $|\x(T)\rangle$ in $l^2$ norm. Using amplitude amplification \cite{Brassard_2002}, we can raise this probability to $\Omega(1)$ with $\mathcal O(g)$ repetitions of the above procedure. 
    
    In summary, the query complexity is
    \[\mathcal O\left((h^2 + 1)\cdot \alpha T \cdot C(\A) \cdot g \cdot \sqrt{k \log k} \cdot \log \frac{\sqrt{m(h^2+1)} g}{\epsilon}\right).\]
    Finally, the total gate complexity is multiplied by the query complexity along with the gate complexity for block-encoding, as given in~\Cref{thm:block_encoding}. 
\end{proof}
\begin{cor}\label{cor}
    Suppose $\A\in \mathbb C^{n\times n}$ is Hermitian and negative semi-definite, and an $(\alpha, \mathfrak d)$-block-encoding of $\A$, denoted by $U_{\A}$, is available. Let $\x_0$ and $\bb$ be $n$-dimensional vectors with known norms, and assume access to two controlled oracles, $O_{\x}$ and $O_{\bb}$, which prepare quantum states proportional to $\x_0$ and $\bb$, respectively. Let $\x(t)$ evolve according to the differential equation
    \begin{equation}
        \frac{\diff \x(t)}{\diff t} = \A \x(t) + \bb, \quad  t\in[0, T],
    \end{equation}
    with the initial condition $\x(0)=\x_0$. Without loss of generality, assume that $\|\A T\|_2 \ge 1$, and let the parameter $m$ be chosen to satisfy the condition \eqref{eq:cond_f} in \Cref{lem:sol_err}. Under these conditions, the proposed algorithm produces a quantum state that is $\epsilon$-close to $\x(T)/\|\x(T)\|_2$ in $l^2$ norm, with a success probability of $\Omega (1)$, and includes a flag indicating success. The algorithm makes \begin{equation}\label{eq:query_complexity2}
        \mathcal O\left((h^2 + 1)\cdot \alpha T \cdot g \cdot \sqrt{k \log k} \cdot \log \frac{\sqrt {m(h^2 + 1)} g}{\epsilon}\right)
    \end{equation}
    queries to $U_{\A}$, $O_{\x}$, and $O_{\bb}$, where the parameters $g$ and $k$ are defined as in \Cref{thm:main}. The gate complexity is larger than its query complexity by a factor of
    \begin{equation}\label{eq:gate_complexity2}
        \mathcal O\left( k + \poly\log \left(m\cdot k\right)\right).
    \end{equation}
\end{cor}
\begin{rem}
    The difference between \Cref{thm:main} and \Cref{cor} lies in the assumption on $\A$. In \Cref{thm:main}, $\A$ is a general matrix where as in \Cref{cor}, $\A$ is a Hermitian and negative semi-definite matrix. When $\A$ is a general matrix, we choose $m = \lceil \|\A T\|_2\rceil$ such that $\|\A h\|_2 = \left\|\A \frac{T}{m}\right\|_2 \le 1$. When $\A$ is a Hermitian and negative semi-definite matrix, we apply \Cref{lem:sol_err} to choose $m \le \lceil \|\A T\|_2 \rceil$ such that the bounds in \eqref{eq:pri} and \eqref{eq:pri2} remain valid. By proceeding with the same construction and applying the results tailored to this special case, we obtain the same asymptotic complexity as in the general case, with a potentially smaller constant prefactor. Notably, in this setting $C(\A) = 1$, and thus it does not appear in the final query complexity.
\end{rem}

\section{Comparison with previous methods}\label{sec:NumExp}
In this section, we compare our method with two other approaches: the method based on Taylor approximation~\cite{berry2017quantum,krovi2023improved,dong2024investigationquantumalgorithmlinear} and the method based on linear combination of Hamiltonian simulation (LCHS)~\cite{An2023QuantumAF}, which achieves near-optimal dependence on all parameters. First, we compare the theoretical query complexity of the three methods. Then, we conduct numerical comparisons between our method and the Taylor approximation-based method in two distinct scenarios.
\begin{itemize}
    \item \textbf{Scenario I:} For a given desired precision $\epsilon$ and a fixed approximation order $k$, we compare the smallest values of $m$ required by both methods to achieve the condition $\frac{\|\widehat \x - \x(T)\|_2}{\|\x(T)\|_2} < \epsilon$.
    \item \textbf{Scenario II:} For a given desired precision $\epsilon$ and $m = \left\lceil \left\|\A T\right\|_2\right\rceil$, we compare the smallest approximation order $k$ required by both methods to achieve the condition $\frac{\|\widehat \x - \x(T)\|_2}{\|\x(T)\|_2} < \epsilon$.
\end{itemize}
\subsection{Theoretical comparison}
We use the result from \cite{dong2024investigationquantumalgorithmlinear} as the theoretical query complexity for the method based on Taylor approximation. In \cite{dong2024investigationquantumalgorithmlinear}, an oracle $O_{\A}$ is used to compute the non-zero entries of $\A$. However, in our setting, we assume the availability of a $(\alpha, \mathfrak d)$-block-encoding of $\A$, denoted as $\U_{\A}$. To make their results and ours comparable, we modify their setting accordingly. In this modified setting, their algorithm requires
\begin{equation}\label{eq:old_query}
    \mathcal O\left(\alpha T \cdot C(\A) \cdot g \cdot k\cdot \log \frac{1}{\delta}\right),
\end{equation}
queries to $U_{\A}$, $O_{\x}$ and $O_{\bb}$, where the parameters are defined as
\[\delta = \frac{\epsilon}{25\sqrt{m} g}, \quad k = \left\lfloor \frac{2\log\Omega}{\log\log \Omega}\right\rfloor, \quad \Omega = \frac{2e^3T}{\delta} \left(\|\A\|_2 + \frac{e^2 \|\A T\|_2 \|\bb\|_2}{\|\x(T)\|_2}\right),\quad  m = \left\lceil\|\A T\|_2\right\rceil,\]
and $g = \frac{\max\left\{\max_{0\le t\le T}\|\x(t)\|_2, \|\bb\|_2\right\}}{\|\x(T)\|_2}$.\footnote{Note that the definition of $g$ in this paper differs slightly from that in~\cite{dong2024investigationquantumalgorithmlinear} and~\cite{krovi2023improved}. Specifically, equation (4.22) of \cite{krovi2023improved} should include an additional term, $\sum_{i=0}^m|i,1,h\bb\rangle$, and equation (5.87) should have an extra term, $mh^2\|\bb\|_2^2$, in the denominator. These necessary adjustments lead to the definition of $g$ adopted in this work.} By comparing the query complexity in \cref{eq:old_query} with that in \cref{eq:query_complexity}, our approach provides the following improvements.
\begin{itemize}
    \item The dependence on the approximation order $k$ is improved. This improvement arises from the factor $\frac1{\sqrt{k+1}}$ introduced in the linear system \cref{eq:full_W}. Due to the special structure of the linear system, incorporating this factor reduces the spectral norm of $\LL_{m,k,p}(\A h)$ by a factor of $\frac{1}{\sqrt{k+1}}$, while only increases the spectral norm of $\LL_{m,k,p}(\A h)^{-1}$ by a constant factor. Since $k = \widetilde{\mathcal O}\left(\log\left(\frac1\epsilon\right)\right)$, this improvement leads to a better dependence on precision, which will be later shown in \Cref{tab:comparison}.
    
    \item The parameter $\Omega$ is larger than the corresponding parameter $M$ in our analysis by a factor of $\sqrt m g$, as we use \cref{eq:err_noQLSA} instead of the condition $\left\|\left|\x(T)\right\rangle - \left|\widehat \x\right\rangle\right\|_2\le \delta$ (equation (118) in \cite{berry2017quantum}), where $\delta = \mathcal O\left(\frac{\epsilon}{\sqrt m g}\right)$. Additionally, $\Omega$ includes an extra term $\|\A T\|_2$ compared to $M$. We eliminate this term by employing an optimized upper bound on $\|\left(\I - e^{-i\A h}R^i_{kk}(\A h)\right)\A^{-1}\|_2$, which is applied in \Cref{lem:sol_err} and \Cref{lem:sol_err0}. The corresponding result in \cite{krovi2023improved} is Lemma 10.
    
    \item Even if we assume $M = \Omega$, the choice of $k$ in the original method is twice as large as our approach. This difference arises because the remainder of the Pad\'e approximation includes a factor of $\frac{k!k!}{(2k)!(2k+1)!}$, whereas the remainder of Taylor approximation only has a factor of $\frac1{(k+1)!}$. This improvement directly reduces the gate complexity, as the gate complexity for block-encoding $\LL_{m,k,p}(\A h)$ (or $\C_{m,k,p}(\A h)$) scales linearly with $k$.

    \item If matrix $\A$ is Hermitian and negative semi-definite, we can choose $m < \left\lceil \|\A T\|_2\right\rceil$ while still maintaining a theoretical guarantee for our algorithm. This advantage arises from the special structure of $\W_k(\A h)$. \Cref{lem:W_inv_norm} provides an upper bound on the spectral norm of $\W_k(\A h)^{-1}$ that is independent of $\|\A h\|_2$, whereas the spectral norm of $\M_k(\A h)$ may depend exponentially on $\|\A h\|_2$. A smaller $m$ directly reduces the total query complexity and gate complexity. More importantly, the numerical results indicate that a smaller $m$ is also associated with a lower condition number and a higher success probability $\mathbb P_{succ}$, which can significantly decrease the overall query complexity.
\end{itemize}
\newcolumntype{C}[1]{>{\centering}p{#1}}
\setlength{\parindent}{15pt}
\begin{table}[t]
    \renewcommand{\arraystretch}{2.5}
    \centering
    \begin{tabular}{m{3cm}<{\centering}|m{6cm}<{\centering}|m{4cm}<{\centering}}\hline\hline
         \textbf{Method} & \textbf{Queries to $\U_{\A}$} & \textbf{Queries to $O_{\x}$ and $O_{\bb}$} \\\hline 
        
         Taylor approximation~\cite{berry2017quantum,krovi2023improved,dong2024investigationquantumalgorithmlinear} & 
        \multicolumn{2}{c}{\large$\widetilde {\mathcal O}\left(\frac{\max\left\{\max_{0\le t\le T}\|\x(t)\|_2, \|\bb\|_2\right\}}{\|\x(T)\|_2} C(\A)\alpha T\left(\log\left(\frac1\epsilon\right)\right)^2\right)$}\\\hline
        
        LCHS~\cite{PhysRevLett.131.150603,An2023QuantumAF}  &  \large 
        $\widetilde {\mathcal O}\left(\frac{\|\x(0)\|_2 + \|\bb\|_2 T}{\|\x(T)\|_2} \alpha T \left(\log\left(\frac1{\epsilon}\right)\right)^{1/\beta}\right)$ & \large 
        $\mathcal O\left(\frac{\|\x(0)\|_2 + \|\bb\|_2 T}{\|\x(T)\|_2}\right)$ \\\hline
        
        Pad\'e approximation & 
        \multicolumn{2}{c}{\large $\widetilde{\mathcal O}\left(\frac{\max\left\{\max_{0\le t\le T}\|\x(t)\|_2, \|\bb\|_2\right\}}{\|\x(T)\|_2}C(\A)\alpha T\left(\log\left(\frac1\epsilon\right)\right)^{1.5}\right)$}
        \\\hline\hline
    \end{tabular}
    \caption{Comparison of the query complexity. These complexities apply to ODEs with time-independent $\A$ and $\bb$, with logarithm terms omitted except for $1/\epsilon$. Here, $\alpha \ge \|\A\|_2$, $T$ is the evolution time, $\epsilon$ is the desired accuracy, $\beta\in (0,1)$ is a chosen constant, and $C(\A) = \max_{0\le t\le T}\|\exp(\A t)\|_2$.}
    \label{tab:comparison}
\end{table}

\Cref{tab:comparison} compares our method with the other two methods in terms of query complexity, with logarithm terms omitted except for $1/\epsilon$. The method based on LCHS achieves the best query complexity to $O_{\x}$ and $O_{\bb}$. Regarding the query complexity to $\U_{\A}$, our method has a fixed dependence on precision, whereas the LCHS method's dependence is determined by the parameter $\beta$. Additionally, the query complexity of our method depends on the quantities $\max_{0\le t\le T}\|\x(t)\|_2$ and $C(\A)$, while the LCHS method depends on $\|\bb\|_2 T$. When $\beta = 2/3$, both methods exhibit the same dependence on precision, but the overall complexity is determined by the specific choice of $\A$, $\bb$, and $\x_0$.

\subsection{Numerical comparison}
In this section, we present numerical experiments to compare our method with the method based on Taylor approximation.
\subsubsection{Numerical result of scenario I}
\noindent \textbf{Experiment 1:} Let $T = 30$, $k = 9$, $p = 1$,
\[ \A = \begin{bmatrix}-2&1&0&0&0\\1&-2&1&0&0\\0&1&-2&1&0\\0&0&1&-2&1\\0&0&0&1&-2\end{bmatrix}, \quad\text{and }\quad  \x_0 = \bb = \begin{bmatrix}1\\1\\1\\1\\1\end{bmatrix}.\]
We construct $\C_{m,k,1}(\A h)$ and $\LL_{m,k,1}(\A h)$, and solve the corresponding linear systems for different values of $m$. The results, including relative solution error, condition number, and the success probability $\mathbb P_{succ}$, are shown in \Cref{fig:exp1a}. 
\begin{figure}
    \centering
    \includegraphics[width=1\linewidth]{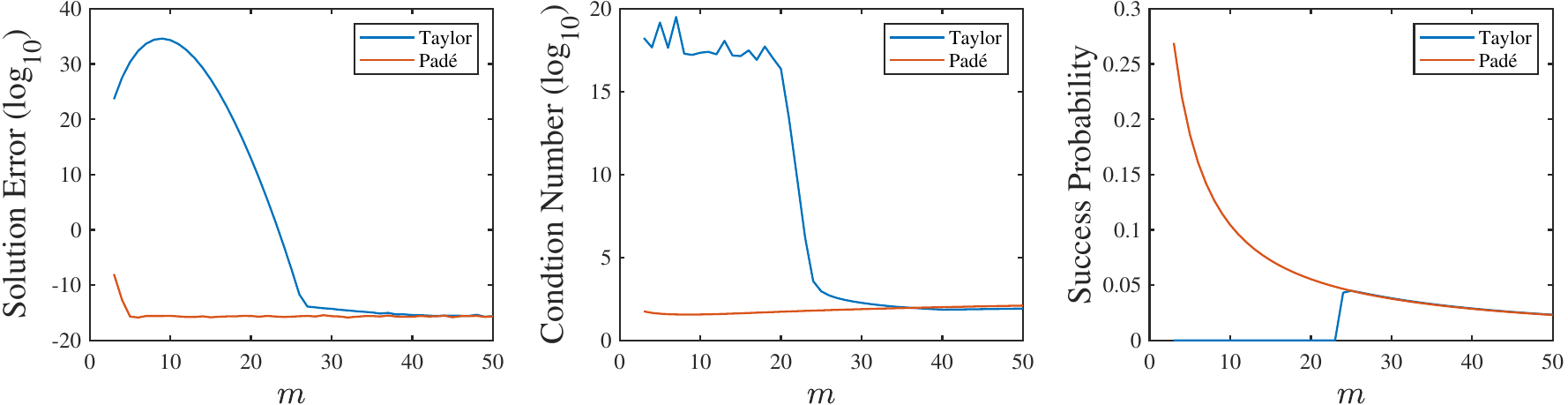}
    \caption{Comparison of the relative solution error, condition number, and success probability $\mathbb P_{succ}$ as $m$ increases, between the method based on Pad\'e approximation and the method based on Taylor approximation.}
    \label{fig:exp1a}
\end{figure}

The left figure demonstrates that the method based on Pad\'e approximation can achieve high relative accuracy with a relatively small $m$, while the method based on Taylor approximation requires a much larger $m$ to achieve similar accuracy. The middle figure reveals that the condition number of our newly proposed method is at a low level for all values of $m$, while the condition number of the Taylor approximation-based method can become extremely large if $m$ is not large enough. This observation aligns with our theoretical analysis. The right figure compares the success probability $\mathbb P_{succ}$. The method based on Pad\'e approximation achieves high success probabilities for smaller values of $m$, while the success probabilities of both methods converge to nearly the same value as $m$ increases sufficiently.

In summary, the first experiment demonstrates that the method based on Taylor approximation becomes unstable when $m$ is not large enough, while the method based on Pad\'e approximation provides satisfactory results even for relatively small $m$.

\vskip 1em
\noindent \textbf{Experiment 2:} The advantage of our newly proposed method becomes more pronounced as the time interval length $T$ increases. To demonstrate this, we fix the accuracy $\epsilon = 10^{-10}$ and the approximation order $k = 9$, and investigate how the optimal choice $m^*$ (the smallest $m$ required to achieve the desired accuracy) changes with $T$ for both methods. With the parameters $k = 9, p = 1$, and $m = m^*$, we construct and solve the corresponding linear systems. We also compute the condition number and success probability $\mathbb P_{succ}$ for these linear systems. The results are shown in \Cref{fig:exp1b}.
\begin{figure}
    \centering
    \includegraphics[width=1.0\linewidth]{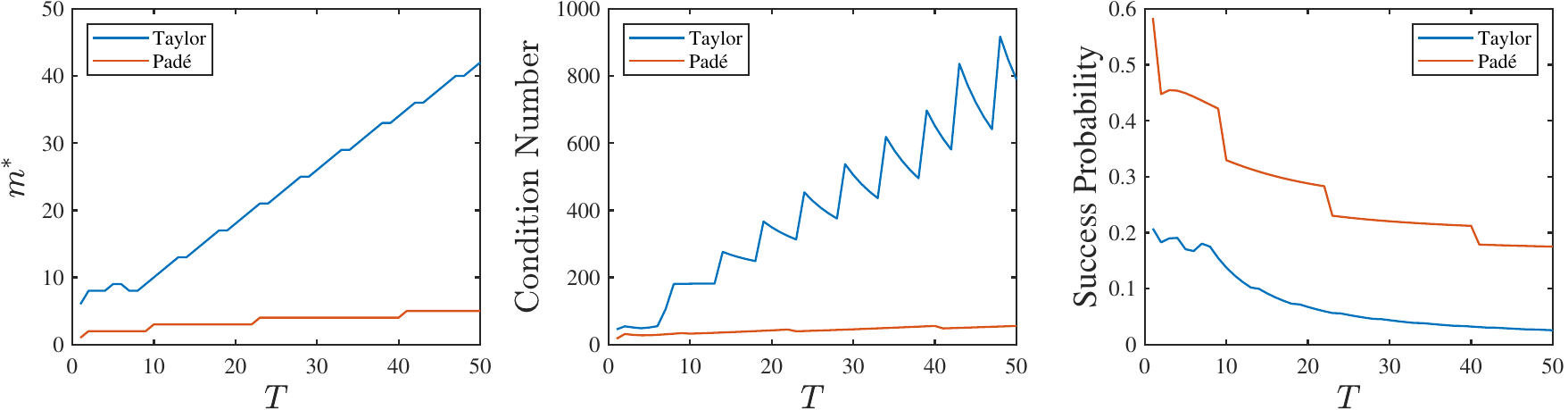}
    \caption{Comparison of the minimum choice of $m$ for both methods under the same accuracy requirement, with varying time interval length $T$. The condition number and success probability $\mathbb P_{succ}$ of the corresponding linear system at these values of $m$ are also compared.}
    \label{fig:exp1b}
\end{figure}

From the results, we observe that the method based on Pad\'e approximation consistently outperforms the method based on Taylor approximation, with the gap growing more significant as $T$ increases. A smaller $m^*$ directly benefits the block-encoding of the linear operator, since the gate complexity in \cref{eq:gate_comp} depends on $m$. Moreover, the query complexity of the QLSA is linearly dependent on the condition number of the corresponding linear system, so a smaller condition number significantly improves the solving process. In this experiment, we set $p = 1$. However, to further boost the success probability $\mathbb P_{succ}$, a suitable choice of $p > 1$ could be employed for both methods. A larger success probability in the $p=1$ case implies that the corresponding method requires a smaller $p$ to achieve the desired success probability. Furthermore, the condition number of $\LL_{m,k,p}(\A h)$ and $\C_{m,k,p}(\A h)$ depends linearly on the parameter $p$, and thus the efficiency of QLSA is thus also linearly dependent on $p$.

The first two experiments compare the two methods using a specific choice of matrix $\A$. However, since all Hermitian matrices are unitarily diagonalizable, the results from these experiments capture the behavior of all Hermitian and negative semi-definite matrices. These results support our theoretical analysis for this class of matrices: the condition number and success probability of the proposed method remain well-controlled, even when $m \ll \|\A T\|_2$. 

\vskip 1em
\noindent \textbf{Experiment 3:} The set of Hermitian and negative semi-definite matrices is a subset of the broader class of matrices whose eigenvalues have negative real parts. Within this broader class, the method based on Pad\'e approximation is also expected to outperform the method based on Taylor approximation. However, providing a theoretical analysis for this more general case is challenging. To investigate this further, we conduct a numerical comparison between the two methods on matrices from this broader class.

Let $\epsilon = 10^{-10}$, $k = 9$, $p = 1$ and $\x_0 = \bb = [1,1,1,1,1]^T$. We generate one hundred matrices $\A \in \mathbb C^{5\times 5}$ randomly, where each matrix's eigenvalues have negative real parts. For each matrix $\A$ and each time interval $T= 1,\dots, 50$, we determine the smallest value of $m$ that achieves the desired accuracy $\epsilon$, denoted by $m^*$. With the parameters $k = 9$, $p = 1$, and $m = m^*$, we construct the corresponding linear systems and compute the condition number and success probability for each linear system. Finally, we calculate the mean value and standard deviation over the one hundred samples. The results are plotted in \Cref{fig:exp2}.
\begin{figure}
    \centering
    \includegraphics[width=1.0\linewidth]{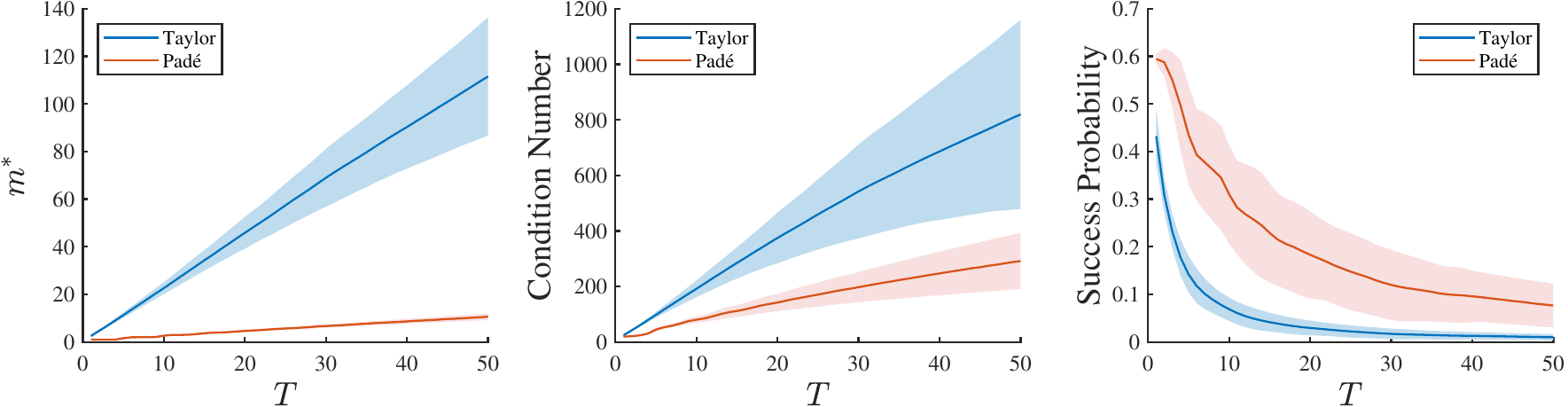}
    \caption{For varying time interval length $T$, and with the same accuracy requirement $\epsilon$ and approximation order $k$, the leftmost plot compares the average minimum choice of $m$ for the two methods. The middle plot compares the average condition number, and the rightmost plot compares the average success probability. The shaded area represents the mean value plus or minus the standard deviation.}
    \label{fig:exp2}
\end{figure}

The results in \Cref{fig:exp2} demonstrate that the Pad\'e approximation-based method consistently outperforms the Taylor approximation-based method across all three compared metrics, on average. Notably, the gap between the two methods widens as $T$ increases. This experiment highlights that the proposed method remains more efficient for a broader class of matrices.
\subsubsection{Numerical result of scenario II}
\noindent \textbf{Experiment 4:} In scenario II, we always choose $m = \left\lceil \|\A T\|_2\right\rceil$. Therefore, it suffices to compare the two methods for the case where $T = 1$ and $\|\A\|_2 = 1$. In this experiment, we set $m = p = 1$, $\x_0 = \bb = [1,1,1,1,1]^T$, and generate one hundred matrices $\A\in \mathbb C^{5\times 5}$ with eigenvalues having negative real parts and spectral norm equal $1$. For each $\A$ and a required precision $\epsilon$, we determine the smallest $k$ that achieves the required precision, denoted by $k^*$. With parameters $m=p=1$ and $k = k^*$, we construct the corresponding linear systems. We then evaluate the condition number of each linear system and calculate the mean value and standard deviation across the one hundred samples. The results are shown in \Cref{fig:exp3a}.
\begin{figure}
    \centering
    \includegraphics[width=0.7\linewidth]{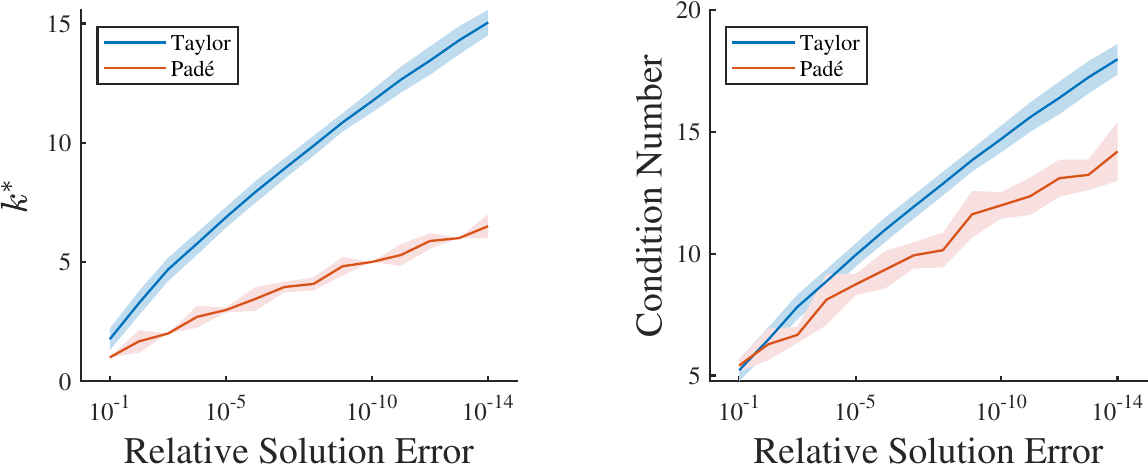}
    \caption{Comparison of the two methods for different precision requirements. The left plot compares the smallest $k$ needed to achieve the required precision for both methods, on average. The right plot compares the average condition number of the linear systems derived from both methods with the smallest choice of $k$ that achieves the required precision. The shaded area represents the mean value plus or minus the standard deviation.}
    \label{fig:exp3a}
\end{figure}

From the left plot of \Cref{fig:exp3a}, we observe that the average magnitude of $k^*$ required by the Pad\'e approximation-based method is roughly half that of the Taylor approximation-based method. This aligns with the third point in our theoretical comparison. The smaller $k$ benefits both the total query complexity (see \cref{eq:query_complexity}) and the gate complexity (as shown in \cref{eq:gate_complexity}). Additionally, the average condition number of the linear system based on the Pad\'e approximation is smaller, further supporting the efficiency of the Pad\'e approximation-based method.

\section{Conclusion}
\label{sec:conclusion}

We propose a quantum algorithm to solve linear autonomous ordinary differential equations (ODEs) of the form~\eqref{eq:main} using the Pad\'e approximation. By discretizing the time step-by-step with step size $h$, the solution to~\eqref{eq:main} can be expressed in terms of $e^{\A h}$. The proposed method then uses the Pad\'e approximation, a rational function of $\A h$, to approximate $e^{\A h}$. The matrix rational function is encoded into a large linear system $\LL_{m,k,p}(\A h)$, as shown in~\eqref{eq:full_W}, which does not explicitly involve matrix inversion. The matrix $\LL_{m,k,p}(\A h)$ has a sparse block structure, allowing it to be efficiently block encoded using an oracle query to $\A$. Details of the quantum circuit for encoding $\LL_{m,k,p}(\A h)$ can be found in \Cref{sec:block_encode}. By applying this block-encoding in QLSA, we obtain the overall quantum algorithm for solving the linear ODEs.

The complexity analysis of the proposed quantum algorithm is conducted in detail. The core of the analysis focuses on deriving asymptotic upper bounds for the condition number of $\LL_{m,k,p}(\A h)$. Unlike existing work, when the matrix $\A$ is negative semi-definite, our upper bound for the condition number is independent of $\|\A h\|_2$, which relaxes the constraint $\|\A h\|_2 < 1$ in the final complexity analysis. Combining the condition number analysis with the complexity analysis of QLSA and the success probability analysis, we obtain the overall complexity analysis, which includes both the oracle query complexity and the gate complexity. Compared to other algorithms of a similar nature~\cite{berry2017quantum,krovi2023improved,dong2024investigationquantumalgorithmlinear}, the proposed algorithm shows improved dependence on the approximation order $k$. Furthermore, thanks to the Pad\'e approximation, the approximating order $k$ can be chosen much smaller than that in the case of using Taylor approximation. Numerical experiments are conducted to validate the theoretical results and illustrate the complexity comparison of the proposed method with other algorithms of the same type.

Several straightforward future directions emerge from this work. First, we could combine the proposed Pad\'e approximation-based algorithm with ODE linearization techniques to tackle nonlinear ODEs. Additionally, by carefully selecting the time step $h$, the proposed quantum algorithm could be extended to handle non-autonomous ODEs; however, the corresponding complexity analysis would need to be revisited. Furthermore, Pad\'e approximation could be applied to other quantum ODE algorithms to enhance their performance, potentially achieving optimal dependency on all parameters and improving the prefactors.

\appendix
\section{Construction and condition number analysis of the linear system}
\subsection{Proof of \Cref{obs:zz}}\label{pf:lem3.2}
\begin{proof}
    From \cref{eq:backfor}, we can observe that 
    \[ \begin{cases}
        \z_1^{(s)} + \beta_1 \A h \z_0^{(s)} = -d_1 h\bb = -\frac12 h\bb,\\
        \z^{(s)}_{j+1} + \beta_{j+1}  \A h \z_{j}^{(s)} = \boldsymbol 0, \quad j = 1,\dots, k-1.
    \end{cases} \]
    Again by \cref{eq:backfor}, we obtain
    \[ \begin{cases}
        \widetilde \z_1^{(s)} - \alpha_1 \A h \z_0^{(s)}= n_1 h\bb = \frac12 h\bb ,\\
        \widetilde \z^{(s)}_{j+1} - \alpha_{j+1} \A h \widetilde  \z_{j}^{(s)} = \boldsymbol 0, \quad j = 1, \dots, k-1.
    \end{cases} \]
    Since $\alpha_j = \beta_j$ for all $j = 1, \dots, k$, we conclude that 
    \[\z_j^{(s)} = (-1)^j \widetilde \z_j^{(s)}, \forall j = 1, \dots, k.\]
\end{proof}
\subsection{Proof of \Cref{lem:invert}}\label{sec:construction}
\begin{proof}
    For matrix $\W_k(\A)$, we have the following decomposition:
    \begin{equation}\label{eq:LU}
        \begin{aligned}
            \begin{bmatrix}
                \I_n & \I_n & \I_n & \cdots & \I_n\\
                \I_n & \beta_k \A\\
                & \I_n & \beta_{k-1} \A\\
                & & \ddots & \ddots\\
                &&& \I_n & \beta_1\A 
            \end{bmatrix} &= \begin{bmatrix}
                \I_n & \I_n & \I_n - \beta_k \A & \cdots & \I_n + \sum_{j=2}^k \beta_2\cdots \beta_j (-\A)^{j-1}\\
                & \I_n \\ && \I_n\\ &&& \ddots \\ &&&& \I_n
            \end{bmatrix} \\
            & \quad \times \begin{bmatrix}
                0 & 0 & \cdots & 0 & \I_n + \sum_{j=1}^k \beta_1\cdots \beta_j (-\A)^{j}\\
                \I_n & \beta_k \A\\
                & \I_n & \beta_{k-1}\A\\
                && \ddots & \ddots \\
                &&& \I_n & \beta_1\A
            \end{bmatrix},
        \end{aligned}
    \end{equation}
    where 
    \[\I_n + \sum_{j=1}^k \beta_1\cdots \beta_j (-\A)^{j} = \sum_{j=0}^k d_j (-\A)^j = D_{kk}(\A).\]
    Therefore, matrix $\W_k(\A)$ is non-singular if and only if $D_{kk}(\A)$ is non-singular.
\end{proof}
\subsection{Proofs of \Cref{lem:W_inv_norm} and \Cref{lem:le1_W_inv_norm}}\label{pf:lem3.45}
\begin{lem}\label{lem:matrix_inv}
    Let $\A\in \mathbb R^{n\times n}$ be an invertible matrix and partitioned as
    \[\A = \begin{bmatrix}
        \A_{11}&\A_{12}\\\A_{21}&\A_{22}
    \end{bmatrix},\]
    where $\A_{11}\in \mathbb R^{n_1\times n_1}$, $\A_{22}\in \mathbb R^{n_2\times n_2}$, and $\A_{11}$ is assumed to be invertible. In this case, define $\boldsymbol S = \A_{22} - \A_{21}\A_{11}^{-1}\A_{12}$. Then, $\boldsymbol S$ is also invertible, and the inverse of $\A$ is given by 
    \[\A^{-1} = \begin{bmatrix}
        \A_{11}^{-1} + \A_{11}^{-1} \A_{12} {\boldsymbol S}^{-1} \A_{21} \A_{11}^{-1} & -\A_{11}^{-1} \A_{12} {\boldsymbol S}^{-1}\\
        -{\boldsymbol S}^{-1} \A_{21} \A_{11}^{-1} & {\boldsymbol S}^{-1}
    \end{bmatrix}.\]
\end{lem}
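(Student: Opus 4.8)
The plan is to prove Lemma~\ref{lem:matrix_inv} via the block LDU (Schur complement) factorization. First I would write
\[
    \A = \begin{bmatrix}\I_{n_1} & \0\\ \A_{21}\A_{11}^{-1} & \I_{n_2}\end{bmatrix}\begin{bmatrix}\A_{11} & \0\\ \0 & \boldsymbol S\end{bmatrix}\begin{bmatrix}\I_{n_1} & \A_{11}^{-1}\A_{12}\\ \0 & \I_{n_2}\end{bmatrix},
\]
which one checks by direct multiplication using only the definition $\boldsymbol S = \A_{22} - \A_{21}\A_{11}^{-1}\A_{12}$ and the invertibility of $\A_{11}$. The two triangular factors have unit block-diagonal, hence determinant $1$, so $\det\A = \det\A_{11}\cdot\det\boldsymbol S$. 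Since $\A$ is invertible and $\A_{11}$ is invertible, this forces $\det\boldsymbol S\neq 0$, establishing that $\boldsymbol S$ is invertible.

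Next I would invert the factorization. The outer triangular factors invert by negating their off-diagonal block, and the middle factor inverts block-entrywise, giving
\[
    \A^{-1} = \begin{bmatrix}\I_{n_1} & -\A_{11}^{-1}\A_{12}\\ \0 & \I_{n_2}\end{bmatrix}\begin{bmatrix}\A_{11}^{-1} & \0\\ \0 & \boldsymbol S^{-1}\end{bmatrix}\begin{bmatrix}\I_{n_1} & \0\\ -\A_{21}\A_{11}^{-1} & \I_{n_2}\end{bmatrix}.
\]
Multiplying the three factors out (first the last two, then left-multiplying by the first) yields exactly the claimed four blocks: the $(2,2)$ block is $\boldsymbol S^{-1}$, the $(2,1)$ block is $-\boldsymbol S^{-1}\A_{21}\A_{11}^{-1}$, the $(1,2)$ block is $-\A_{11}^{-1}\A_{12}\boldsymbol S^{-1}$, and the $(1,1)$ block is $\A_{11}^{-1} + \A_{11}^{-1}\A_{12}\boldsymbol S^{-1}\A_{21}\A_{11}^{-1}$. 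An alternative, even more elementary route is to take the displayed formula as an ansatz and verify directly that its product with $\A$ (in both orders) equals $\I_n$; I would mention this as a remark but carry out the factorization argument as the main line since it simultaneously produces the invertibility of $\boldsymbol S$.

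There is no real obstacle here: the statement is the classical Schur-complement inversion identity and the argument is a routine verification. The only point requiring a little care is the logical order---one must deduce the invertibility of $\boldsymbol S$ (from the determinant factorization, or equivalently from the fact that an invertible matrix with an invertible leading block admits a unique block LDU factorization with invertible middle factor) \emph{before} writing down $\boldsymbol S^{-1}$ in the inverse formula. Everything else is bookkeeping of block products.
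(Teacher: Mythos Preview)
Your argument is correct and is the standard textbook proof via the block LDU factorization; the logical ordering (deduce invertibility of $\boldsymbol S$ from $\det\A=\det\A_{11}\cdot\det\boldsymbol S$ before using $\boldsymbol S^{-1}$) is handled properly. The paper itself states this lemma without proof, treating it as a well-known identity, so there is nothing to compare against and your proposal would serve as a complete justification.
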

\begin{lem}\label{lem:block_norm}
    Let $\M$ be a matrix of the form 
    \[\M = \begin{bmatrix}\M_{11}&\cdots& \M_{1n}\\\vdots & & \vdots\\\M_{m1}&\cdots &\M_{mn}\end{bmatrix},\]
    where each $\M_{ij}$ is a matrix block. Define the matrix $\widetilde \M$ such that $\widetilde \M_{ij} \ge \|\M_{ij}\|_2$ for each $i, j$,. Then, we have the inequality
    \[\|\M\|_2 \le \|\widetilde \M\|_2.\]
\end{lem}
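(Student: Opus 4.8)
The plan is to reduce the block statement to the elementary operator-norm inequality for the scalar matrix $\widetilde\M$, by comparing an arbitrary vector with the nonnegative vector built from the $2$-norms of its blocks. First I would fix an arbitrary vector $\x$ and partition it conformally with the block columns of $\M$, writing $\x = (\x_1,\dots,\x_n)$, so that the $i$-th block of $\M\x$ is $(\M\x)_i = \sum_{j=1}^n \M_{ij}\x_j$. Applying the triangle inequality and submultiplicativity of the spectral norm block row by block row gives
\[ \|(\M\x)_i\|_2 \le \sum_{j=1}^n \|\M_{ij}\|_2\,\|\x_j\|_2 \le \sum_{j=1}^n \widetilde\M_{ij}\,\|\x_j\|_2, \]
where the last inequality uses the hypothesis $\widetilde\M_{ij}\ge\|\M_{ij}\|_2\ge 0$.

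Next I would introduce the vector $\boldsymbol u\in\mathbb R^n$ with nonnegative entries $u_j := \|\x_j\|_2$. The bound above reads exactly $\|(\M\x)_i\|_2 \le (\widetilde\M\boldsymbol u)_i$ for every $i$; since $\widetilde\M$ and $\boldsymbol u$ are entrywise nonnegative, the right-hand side is nonnegative, so squaring preserves the inequality. Summing over $i$ yields
\[ \|\M\x\|_2^2 = \sum_{i=1}^m \|(\M\x)_i\|_2^2 \le \sum_{i=1}^m (\widetilde\M\boldsymbol u)_i^2 = \|\widetilde\M\boldsymbol u\|_2^2 \le \|\widetilde\M\|_2^2\,\|\boldsymbol u\|_2^2. \]
Finally, since $\|\boldsymbol u\|_2^2 = \sum_{j=1}^n \|\x_j\|_2^2 = \|\x\|_2^2$, we obtain $\|\M\x\|_2 \le \|\widetilde\M\|_2\,\|\x\|_2$, and taking the supremum over unit vectors $\x$ gives $\|\M\|_2 \le \|\widetilde\M\|_2$, as claimed.

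The argument is essentially routine, so I do not expect a genuine obstacle; the two points that need a little care are (i) choosing a partition of $\x$ that is conformal with the (possibly rectangular, possibly unequally sized) blocks $\M_{ij}$, so that the identity $(\M\x)_i = \sum_j \M_{ij}\x_j$ is well defined, and (ii) the nonnegativity of the entries of $\widetilde\M$ — implicit in the assumption $\widetilde\M_{ij}\ge\|\M_{ij}\|_2$ — which is exactly what licenses passing from the blockwise norm bound to a bound through the scalar matrix $\widetilde\M$.
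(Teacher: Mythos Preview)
Your proposal is correct and is essentially the paper's own proof: partition a unit vector conformally, bound each block of $\M\x$ via the triangle inequality and submultiplicativity, and recognize the resulting expression as $\|\widetilde\M\boldsymbol u\|_2$ with $u_j=\|\x_j\|_2$. The only cosmetic difference is that the paper starts directly with a unit vector rather than dividing out $\|\x\|_2$ at the end.
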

\begin{proof}
    Let $\x^T=[\x_1^T, \cdots, \x_n^T]$ be an arbitrary unit vector. Then, we have 
    \[\begin{aligned}
        \left\|\M \x\right\|_2 &= \left\|\begin{bmatrix}\sum_{j=1}^n\M_{1j}\x_j\\\vdots \\\sum_{j=1}^n \M_{mj}\x_j\end{bmatrix}\right\|_2 = \sqrt{\sum_{i=1}^m \left\|\sum_{j=1}^n \M_{ij}\x_j\right\|_2^2}\\
        &\le \sqrt{\sum_{i=1}^m \left(\sum_{j=1}^n \left\|\M_{ij}\right\|_2\left\|\x_j\right\|_2\right)^2}\\
        &\le \left\|\widetilde \M \begin{bmatrix}\|\x_1\|_2\\\vdots\\\|\x_n\|_2\end{bmatrix}\right\|_2 \le \|\widetilde \M\|_2.
    \end{aligned}\]
    By the definition of the matrix $2$-norm, we have $\|\M\|_2 \le \|\widetilde \M\|_2$.
\end{proof}
\begin{lem}\label{lem:Dkk-1}
    Let $D_{kk}(\cdot)$ be the denominator of the $(k,k)$ Pad\'e approximation given by \eqref{eq:pade}. Then, we have 
    \[D_{kk}(-1)\le \sqrt{e}\quad \text{for all}\quad k\in \mathbb N_+.\]
\end{lem}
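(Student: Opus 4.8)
The plan is to use the closed form of the coefficients $d_j$ of $D_{kk}$ from \eqref{eq:pade_D} with $p=q=k$ and evaluate at $\A = -1$, so that $(-\A)^j = 1$ for every $j$ and therefore $D_{kk}(-1) = \sum_{j=0}^k d_j$ with $d_j = \frac{(2k-j)!\,k!}{(2k)!\,j!\,(k-j)!}$. First I would rewrite the sum in a way that makes the tail decay transparent. A natural step is to compare consecutive terms: from the recursion $\beta_{j+1} = d_{j+1}/d_j = \frac{k-j}{(j+1)(2k-j)}$ (already recorded in \eqref{eq:alpha_beta}), one sees $d_{j+1}/d_j = \frac{k-j}{(j+1)(2k-j)} \le \frac{1}{j+1}\cdot\frac{k-j}{k} \le \frac{1}{j+1}$ for $0\le j\le k-1$. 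Since $d_0 = 1$, induction then gives $d_j \le \frac{1}{j!}$ for all $j$, and hence
\[
D_{kk}(-1) = \sum_{j=0}^k d_j \le \sum_{j=0}^k \frac{1}{j!} \le \sum_{j=0}^\infty \frac{1}{j!} = e.
\]
This already yields the bound $D_{kk}(-1)\le e$, which is weaker than the claimed $\sqrt e$, so the main work is to sharpen it.

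To get the factor $\sqrt e$ I would sharpen the ratio estimate using the $2k-j$ in the denominator more carefully. Write $\frac{d_{j+1}}{d_j} = \frac{1}{j+1}\cdot\frac{k-j}{2k-j}$; for $0 \le j \le k-1$ the factor $\frac{k-j}{2k-j}$ is at most $\tfrac12$, and in fact it is a decreasing function of $j$ on this range, equal to $\tfrac12$ only at $j=0$. Thus $d_{j+1} \le \tfrac12\cdot\frac{d_j}{j+1}$, which by induction with $d_0=1$ gives $d_j \le \frac{1}{2^j\, j!}$ for $j\ge 1$ — but this overshoots, because it would force $d_1 \le \tfrac12$ whereas $d_1 = \tfrac12$ exactly, and more importantly the product of the factors $\frac{k-j}{2k-j}$ telescopes to something like $\frac{k!\,(2k-j)!}{(2k)!\,(k-j)!}$, which is \emph{smaller} than $2^{-j}$ for $j\ge 2$. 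So the cleanest route is: bound $D_{kk}(-1) = 1 + \sum_{j=1}^k d_j$ and use $d_j = \prod_{i=0}^{j-1}\frac{d_{i+1}}{d_i} = \prod_{i=0}^{j-1}\frac{k-i}{(i+1)(2k-i)} = \frac{1}{j!}\prod_{i=0}^{j-1}\frac{k-i}{2k-i}$. Since each factor $\frac{k-i}{2k-i}\le\frac12$, we get $d_j \le \frac{1}{2^j j!}$, hence
\[
D_{kk}(-1) \le \sum_{j=0}^\infty \frac{1}{2^j\, j!} = e^{1/2} = \sqrt e,
\]
which is exactly the claimed bound, uniformly in $k\in\mathbb N_+$.

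The only nontrivial point — and the step I would be most careful about — is verifying $\frac{k-i}{2k-i}\le\frac12$ for all integers $0\le i\le k-1$, i.e. that $2(k-i)\le 2k-i$, equivalently $-2i\le -i$, equivalently $i\ge 0$, which holds trivially; so in fact there is no real obstacle, just the bookkeeping of the telescoping product and the recognition that the geometric-type damping $2^{-j}$ is what upgrades $e$ to $\sqrt e$. I would also double-check the edge cases $k=1$ ($D_{11}(-1) = 1 + \tfrac12 = \tfrac32 \le \sqrt e \approx 1.6487$, fine) and note the bound is not tight, which is acceptable since the lemma only asserts $\le\sqrt e$.
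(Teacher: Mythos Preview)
Your proof is correct and is essentially the same as the paper's: the paper directly asserts the inequality $\frac{(2k-j)!\,k!}{(2k)!\,(k-j)!}\le 2^{-j}$ (which is exactly your telescoping product $\prod_{i=0}^{j-1}\frac{k-i}{2k-i}\le 2^{-j}$) and then bounds $D_{kk}(-1)=\sum_{j=0}^k d_j\le\sum_{j=0}^\infty\frac{1}{2^j j!}=\sqrt e$. The initial detour through the weaker bound $e$ is unnecessary but harmless.
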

\begin{proof}
    Using the fact that 
    \[\frac{(2k-j)! k!}{(2k)! (k - j)!} \le\left(\frac12\right)^j,\]
    we obtain 
    \[D_{kk}(-1) = \sum_{j=0}^k \frac{(2k-j)!k!}{(2k)!(k-j)!} \frac{1}{j!}\le \sum_{j=0}^k \left(\frac{1}{2}\right)^j\frac{1}{j!} \le \sqrt e.\]
\end{proof}
\begin{lem}\label{lem:1d_W_inv_norm}
    Let $\lambda > 0$. Then, we have the following inequality 
    \[\left\|\W_k(-\lambda)^{-1}\right\|_2 \le \sqrt{4(k+1)\log(k+1) + 1}.\]
\end{lem}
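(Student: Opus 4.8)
The plan is to reduce to the scalar linear system $\W_k(-\lambda)\vv = \y$ and extract a closed form for $\W_k(-\lambda)^{-1}$. Writing the $(k+1)\times(k+1)$ matrix out explicitly (first row $\tfrac{1}{\sqrt{k+1}}\boldsymbol 1^{T}$, then the lower bidiagonal rows with entries $1$ and $-\beta_{k+1-i}\lambda$) and passing to the reversed coordinates $w_j := v_{k+1-j}$, the bidiagonal equations become $w_j = \beta_j\lambda\,w_{j-1} + y_{k+2-j}$ for $j=1,\dots,k$. Dividing by $a_j := \beta_1\cdots\beta_j\,\lambda^{j} = d_j\lambda^{j}$ and using $a_j = \beta_j\lambda\,a_{j-1}$, this telescopes to $w_j = a_j\bigl(w_0 + Z_j\bigr)$ with $Z_j := \sum_{r=1}^{j} y_{k+2-r}/a_r$ (this is exactly the triangular factorization used to prove \cref{lem:invert}). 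The normalization row then forces $w_0 = \bigl(\sqrt{k+1}\,y_1 - \sum_j a_j Z_j\bigr)/D$, where $D := \sum_{j=0}^{k} a_j = D_{kk}(-\lambda) \ge a_0 = 1$. Specializing $\y = \e_s$ gives a closed form for the $s$-th column of $\W_k(-\lambda)^{-1}$ in terms of the prefix and suffix sums $G_r := \sum_{l<r} a_l$ and $E_r := \sum_{l\ge r} a_l$ of $D$.

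With the columns in hand I would bound $\bigl\|\W_k(-\lambda)^{-1}\bigr\|_2 \le \bigl\|\W_k(-\lambda)^{-1}\bigr\|_F$. The first column equals $\tfrac{\sqrt{k+1}}{D}(a_0,\dots,a_k)^{T}$, of squared norm $\tfrac{k+1}{D^{2}}\sum_j a_j^{2} \le k+1$ (using $\sum_j a_j^{2}\le(\sum_j a_j)^{2}=D^{2}$, which is where slack enters). For $s\ge 2$, setting $\rho := k+2-s$, the closed form collapses to entries proportional to $a_l\,(E_\rho/a_\rho)/D$ for $l<\rho$ and to $a_j G_\rho/(a_\rho D)$ for $j\ge\rho$; a short computation then bounds that column's squared norm by $2\phi_\rho^{2}$ with $\phi_\rho := G_\rho E_\rho/(a_\rho D)$. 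Thus the whole bound reduces to showing $\sum_{r=1}^{k}\phi_r^{2} = \mathcal O\bigl(k\log k\bigr)$, after which one collects constants to reach $4(k+1)\log(k+1)+1$.

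The remaining combinatorial estimate is the crux. Writing $\phi_r$ as the harmonic mean $x_r y_r/(x_r+y_r)$ of $x_r := G_r/a_r$ and $y_r := E_r/a_r$, one has $\phi_r \le \min(x_r,y_r)$, so it suffices to control $\sum_r \min(x_r,y_r)^{2}$. Here I would exploit that $(a_j)_{j=0}^{k}$ is log-concave — its ratios $a_j/a_{j-1} = \beta_j\lambda$ decrease in $j$ — hence unimodal: $G_r$ dominates the far increasing tail and $E_r$ dominates the far decreasing tail, while near the mode both $x_r$ and $y_r$ are of the order of the width of the peak. Combining this with the explicit coefficient bound $d_j \le 1/(2^{j} j!)$ (the inequality underlying \cref{lem:Dkk-1}) produces the claimed estimate, the logarithm arising from a harmonic sum $\sum_r 1/r \asymp \log k$ reflecting the near-linear growth $1/\beta_j \asymp 2j$ of the reciprocal Padé coefficients. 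The main obstacle is that the bound must hold uniformly in $\lambda>0$: the location and shape of the mode of $(a_j)$ shift with $\lambda$, so no single regime can be fixed, and the proof must be organized so that the contributions of the increasing tail, of the peak, and of the decreasing tail are each controlled robustly for every $\lambda$.
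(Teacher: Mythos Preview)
Your setup is correct: the closed-form entries $w_l=-a_l E_\rho/(a_\rho D)$ for $l<\rho$ and $w_l=a_l G_\rho/(a_\rho D)$ for $l\ge\rho$ match the expressions underlying the paper's proof, and the first column contributes at most $k+1$ to $\|\W_k(-\lambda)^{-1}\|_F^{2}$. The paper likewise passes through the Frobenius norm.

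The genuine gap is the reduction to $\sum_{\rho}\phi_\rho^{2}=\mathcal O(k\log k)$. Your column bound $2\phi_\rho^{2}$ relies on $\sum_{l<\rho}a_l^{2}\le G_\rho^{2}$ and $\sum_{l\ge\rho}a_l^{2}\le E_\rho^{2}$, which are tight only when a single $a_l$ dominates each partial sum. But for $\lambda$ of order $k$ the log-concave sequence $(a_j)$ has a peak of width $\sigma\sim\sqrt{k}$ around some mode $r^{*}$ (this is exactly your own unimodality observation, since $\frac{d}{dj}\log(\beta_j\lambda)$ is of order $1/k$ there). Near that mode $a_\rho\approx a_{r^{*}}$, $G_\rho\approx E_\rho\approx D/2$, and $D\approx\sigma\, a_{r^{*}}$, so $\phi_\rho\approx D/(4a_{r^{*}})\sim\sqrt{k}$ for roughly $\sqrt{k}$ consecutive values of $\rho$. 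This already forces $\sum_\rho\phi_\rho^{2}=\Omega(k^{3/2})$, so the claimed $\mathcal O(k\log k)$ bound on $\sum_\rho\phi_\rho^{2}$ is simply false uniformly in $\lambda$, and the reduction cannot close.

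What the paper does instead is first prove an \emph{entrywise} bound: every $a_l E_\rho/(a_\rho D)$ and every $a_l G_\rho/(a_\rho D)$ is at most $1$ in absolute value, established by comparing products of the $\beta_j$'s through their monotonicity (the very log-concavity you invoke, but applied entry by entry rather than to the whole column). With all entries in $[-1,1]$ one has
\[
\sum_{j\le k}\sum_{i}|\widetilde\W_{ij}^{-1}|^{2}\ \le\ \sum_{j\le k}\sum_{i}|\widetilde\W_{ij}^{-1}|\ =\ 2\sum_{\rho=1}^{k}\phi_\rho,
\]
and it is this \emph{linear} sum in $\phi$ that the paper bounds by $4(k+1)\log(k+1)$ via the harmonic-sum manipulation you anticipated. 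So the missing ingredient is the uniform entry bound $\le 1$; it is precisely what converts the feasible estimate $\sum_\rho\phi_\rho=\mathcal O(k\log k)$ into the Frobenius bound, whereas squaring $\phi_\rho$ first loses an extra factor of order $\sqrt{k}$.
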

\begin{proof}
    There is a permutation matrix $P$ such that 
    \[\widetilde \W := P\W_k(-\lambda) = \left[\begin{array}{cccc|c}
        1 & -\beta_k \lambda&&\\
        & \ddots & \ddots&\\
        & & 1 & -\beta_{2} \lambda&\\
        &&& 1 & -\beta_1\lambda\\
        \hline
        \frac1{\sqrt{k+1}} & \cdots & \frac1{\sqrt{k+1}} & \frac1{\sqrt{k+1}} & \frac1{\sqrt{k+1}}
    \end{array}\right]=:\left[\begin{array}{c|c}B&-\beta_1\lambda e_k\\ \hline \frac1{\sqrt{k+1}} \boldsymbol 1^T & \frac1{\sqrt{k+1}}
    \end{array}\right].\]
    It suffices to prove that 
    \[\|\widetilde \W^{-1}\|_2 \le \sqrt{4(k+1)\log(k+1) + 1}.\]
    By \Cref{lem:matrix_inv}, the matrix $\widetilde \W^{-1}$ is given by
    \begin{equation}\label{eq:tilde_W_inv}
        \widetilde \W^{-1} = \begin{bmatrix}B^{-1} - \frac{\beta_1 \lambda B^{-1}e_k \boldsymbol 1^T B^{-1}}{1 + \beta_1 \lambda \boldsymbol 1^T B^{-1} e_k}&\frac{\sqrt{k+1}\beta_1 \lambda B^{-1} e_k}{1 + \beta_1 \lambda \boldsymbol 1^T B^{-1} e_k}\\ \frac{-\boldsymbol 1^T B^{-1} }{1 + \beta_1 \lambda \boldsymbol 1^T B^{-1} e_k} & \frac{\sqrt{k+1}}{1 + \beta_1 \lambda \boldsymbol 1^T B^{-1} e_k}\end{bmatrix}, \quad \text{with } B^{-1} = \begin{bmatrix}1&\beta_k \lambda & \cdots & (\beta_2\cdots \beta_k)\lambda^{k-1}\\
            & 1& \ddots & \vdots \\&& \ddots& \beta_2\lambda \\ &&& 1\end{bmatrix}.
    \end{equation}
    It is easy to check that
    \[\begin{aligned}
        \boldsymbol 1^T B^{-1} &= [1, \frac{\sum_{s=k-1}^k\beta_{k-1}\cdots \beta_s \lambda^{s-k+2}}{\beta_{k-1}\lambda}, \cdots, \frac{\sum_{s=1}^k\beta_{1}\cdots \beta_s \lambda^{s}}{\beta_1 \lambda}]\\
        B^{-1}e_k &= [(\beta_2\cdots \beta_k)\lambda^{k-1}, \cdots, \beta_2\lambda, 1]^T,\\
        1 + \beta_1& \lambda \boldsymbol 1^T B^{-1} e_k = 1 + \sum_{s=1}^{k}\beta_1\cdots \beta_s \lambda^s = D_{kk}(-\lambda).
    \end{aligned}\]
    Therefore, 
    \[\widetilde \W^{-1} = \frac{1}{D_{kk}(-\lambda)}\begin{bmatrix}D_{kk}(-\lambda) B^{-1} - \beta_1 \lambda B^{-1}e_k \boldsymbol 1^T B^{-1}&\sqrt{k+1}\beta_1 \lambda B^{-1} e_k\\ -\boldsymbol 1^T B^{-1} & \sqrt{k+1}\end{bmatrix}.\]
    Next, we compute the $(i,j)$-elements of $\widetilde \W^{-1}$ for $1\le i,j\le k$.
    \[\widetilde \W^{-1}_{ij} = e_i^T B^{-1} e_j - \beta_1 \lambda \frac{ e_i^T B^{-1}e_k \boldsymbol 1^T B^{-1} e_j}{D_{kk}(-\lambda)}.\]
    \begin{itemize}
        \item If $i > j$, the first term $e_i^T B^{-1} e_j = 0$, so we have 
        \[\begin{aligned}
            \widetilde \W_{ij}^{-1} &= -\beta_1\lambda \frac{ e_i^T B^{-1}e_k \boldsymbol 1^T B^{-1} e_j}{D_{kk}(-\lambda)}\\
            &= -\frac1{D_{kk}(-\lambda)}(\beta_1\cdots \beta_{k-i+1})\lambda^{k-i+1}\cdot  \frac{\sum_{s=k-j+1}^k\beta_{k-j+1}\cdots \beta_s \lambda^{s+j-k}}{\beta_{k-j+1}\lambda}\\
            &= -\frac{\beta_1\cdots \beta_{k-i+1}}{\beta_1\cdots \beta_{k-j+1}}\frac{\sum_{s=k-j+1}^k \beta_1\cdots \beta_s \lambda^{s+j-i}}{D_{kk}(-\lambda)}.
        \end{aligned}\]
        Here, we use the fact that
        \[\frac{\beta_1\cdots \beta_{k-i+1}}{\beta_1\cdots \beta_{k-j+1}} \le \frac{\beta_1\cdots \beta_{k-i+1}\beta_{k-i+2}\cdots \beta_{s+j-i}}{\beta_1\cdots \beta_{k-j+1}\beta_{k-j+2}\cdots \beta_s} = \frac{\beta_1\cdots \beta_{s+j-i}}{\beta_1\cdots \beta_{s}},\]
        which follows from the monotonicity of the $\beta_i$'s. Therefore, we obtain the inequality
        \[\left|\widetilde \W_{ij}^{-1}\right|\le \frac{\sum_{s=k-j+1}^k \beta_1\cdots \beta_{s+j-i}\lambda^{s+j-i}}{D_{kk}(-\lambda)}\le 1.\]
        \item If $i\le j$, the first term $e_i^T B^{-1} e_j = \beta_{k-j+2}\cdots \beta_{k-i+1}\lambda^{j-i} $, and thus we have
        \[\begin{aligned}
            \widetilde \W^{-1}_{ij} &= \beta_{k-j+2}\cdots \beta_{k-i+1}\lambda^{j-i} - \frac{\beta_1\cdots \beta_{k-i+1}}{\beta_1\cdots \beta_{k-j+1}}\frac{\sum_{s=k-j+1}^k \beta_1\cdots \beta_s \lambda^{s+j-i}}{D_{kk}(-\lambda)}\\
            &= \frac{\beta_1\cdots \beta_{k-i+1}}{\beta_1\cdots \beta_{k-j+1}} \cdot \left(\lambda^{j-i} - \frac{\sum_{s=k-j+1}^k \beta_1\cdots \beta_s \lambda^{s+j-i}}{D_{kk}(-\lambda)} \right)\\
            &= \frac{\beta_1\cdots \beta_{k-i+1}}{\beta_1\cdots \beta_{k-j+1}} \cdot  \frac{\sum_{s=0}^k \beta_1\cdots \beta_s \lambda^{s+j-i} - \sum_{s=k-j+1}^k \beta_1\cdots \beta_s \lambda^{s+j-i}}{D_{kk}(-\lambda)}\\
            &= \frac{\beta_1\cdots \beta_{k-i+1}}{\beta_1\cdots \beta_{k-j+1}} \cdot  \frac{\sum_{s=0}^{k-j} \beta_1\cdots \beta_s \lambda^{s+j-i}}{D_{kk}(-\lambda)}.
        \end{aligned}\]
        Here, 
        \[\frac{\beta_1\cdots \beta_{k-i+1}}{\beta_1\cdots \beta_{k-j+1}} = \frac{\beta_1\cdots \beta_{s+j-i}\beta_{s+j-i+1}\cdots \beta_{k-i+1}}{\beta_1 \cdots \beta_s \beta_{s+1}\cdots \beta_{k-j+1}}\le\frac{\beta_1\cdots \beta_{s+j-i}}{\beta_1 \cdots \beta_s}, \]
        which follows from the monotonicity of the $\beta_i$'s. Therefore, we get 
        \[\left|\widetilde \W_{ij}^{-1}\right| \le \frac{\sum_{s=0}^{k-j} \beta_1\cdots \beta_{s+j-i} \lambda^{s+j-i}}{D_{kk}(-\lambda)}\le 1.\]
    \end{itemize}
    Moreover, it is easy to verify that $|\widetilde \W^{-1}_{ij}|\le 1$ for $i=k+1$ and $j \le k$. Next, we consider an upper bound for the summation 
    \begin{equation}\label{eq:0}
        \sum_{j=1}^{k} \sum_{i=1}^{k+1} \left|\widetilde \W_{ij}^{-1}\right|.
    \end{equation}
    From the previous discussion, we know that $\widetilde \W^{-1}_{ij} < 0$ if and only if $i > j$, and we also have $\boldsymbol 1 ^T \widetilde \W^{-1} e_j = e_{k+1}^T \widetilde \W \widetilde \W^{-1} e_j = 0$ for all $j = 1,\dots, k$. This implies that
    \begin{equation}\label{eq:1}
        \begin{aligned}
            \sum_{i=1}^{k+1}\left|\widetilde \W^{-1}_{ij}\right| = 2\sum_{i=1}^j \widetilde \W^{-1}_{ij}&=\frac{2}{D_{kk}(-\lambda)}\sum_{i=1}^j \sum_{s=0}^{k-j} \frac{\beta_1\cdots \beta_{k-i+1}\cdot \beta_1\cdots \beta_s }{\beta_1\cdots \beta_{k-j+1}}\lambda^{s+j-i}\\
            &= \frac{2}{D_{kk}(-\lambda)}\sum_{l=0}^{j-1} \sum_{s=0}^{k-j} \frac{\beta_1\cdots \beta_{k-j + l+1}\cdot \beta_1\cdots \beta_s }{\beta_1\cdots \beta_{k-j+1}}\lambda^{s+l}\\
            &= \frac{2}{D_{kk}(-\lambda)}\sum_{l=0}^{j-1} \sum_{s=0}^{k-j} \frac{\beta_{s+l+1}\cdots \beta_{k-j+l+1} }{\beta_{s+1}\cdots \beta_{k-j+1}}\beta_{1}\cdots \beta_{s+l} \lambda^{s+l}\\
            &= \frac{2}{D_{kk}(-\lambda)}\sum_{r=0}^{k-1}\sum_{l=\max\{0, r+j-k\}}^{\min\{j-1, r\}} \frac{\beta_{r+1}\cdots \beta_{k-j+l+1} }{\beta_{r-l+1}\cdots \beta_{k-j+1}} \beta_1\cdots \beta_r \lambda^r,
        \end{aligned}
    \end{equation}
    where we make the variable substitution $l = j-i$ in the third equality, and $r = s+l$ in the last equality. Thus, we have 
    \begin{equation}\label{eq:3}
        \sum_{j=1}^{k} \sum_{i=1}^{k+1} \left|\widetilde \W_{ij}^{-1}\right|= \frac{2}{D_{kk}(-\lambda)} \sum_{r=0}^{k-1}\left(\sum_{j=1}^k \sum_{l=\max\{0, r+j-k\}}^{\min\{j-1, r\}} \frac{\beta_{r+1}\cdots \beta_{k-j+l+1} }{\beta_{r-l+1}\cdots \beta_{k-j+1}}\right) \beta_1\cdots \beta_r \lambda^r
    \end{equation}
    Next, we give an upper bound for the inner summation, which can be written as
    \begin{equation}\label{eq:sum}
        \begin{aligned}
            \sum_{j=1}^{k-r}\sum_{l=0}^{\min\{j-1, r\}}\frac{\beta_{r+1}\cdots \beta_{k-j+l+1} }{\beta_{r-l+1}\cdots \beta_{k-j+1}} + \sum_{j=k-r+1}^k \sum_{l=r+j-k}^{\min\{j-1, r\}}\frac{\beta_{r+1}\cdots \beta_{k-j+l+1} }{\beta_{r-l+1}\cdots \beta_{k-j+1}}.
        \end{aligned}
    \end{equation}
    For the first term in \eqref{eq:sum}, we have $r\le k-j$, and thus 
    \[\begin{aligned}
        \frac{\beta_{r+1}\cdots \beta_{k-j+l+1} }{\beta_{r-l+1}\cdots \beta_{k-j+1}} &= \frac{\beta_{r+1}\cdots \beta_{k-j+1} \cdots \beta_{k-j+l+1}}{\beta_{r-l+1}\cdots \beta_{r+1}\cdots \beta_{k-j+1}}\le \frac{\beta_{k-j+2}\cdots \beta_{k-j+l+1}}{\beta_{r-l+1}\cdots \beta_{r}}\\
        &\le \left(\frac{\beta_{k-j+2}}{\beta_r}\right)^l \le \left(\frac{r}{k-j+2}\right)^l,\quad \forall l\ge 1,
    \end{aligned}\]
    where we use the assumption that $\beta_j / \beta_i \le i/j\le 1$ for all $1\le i\le j\le k$ in the last two inequalities. Note that the inequality also holds for $l = 0$. Therefore, we obtain 
    \[\sum_{l=0}^{\min\{j-1, r\}}\frac{\beta_{r+1}\cdots \beta_{k-j+l+1} }{\beta_{r-l+1}\cdots \beta_{k-j+1}} \le \sum_{l=0}^{\min\{j-1, r\}} \left(\frac{r}{k-j+2}\right)^l \le \sum_{l=0}^{\infty} \left(\frac{r}{k-j+2}\right)^l = \frac{k-j+2}{k-j-r+2}. \]
    For the second term in \eqref{eq:sum}, we have $r > k-j$. Using the assumption that $\beta_j / \beta_i \le i/j\le 1$ for all $1\le i\le j\le k$ again, we get
    \[\frac{\beta_{r+1}\cdots \beta_{k-j+l+1} }{\beta_{r-l+1}\cdots \beta_{k-j+1}}\le \left(\frac{\beta_{r+1}}{\beta_{k-j+1}}\right)^{k-j+l-r+1} \le \left(\frac{k-j+1}{r+1}\right)^{k-j+l-r+1}.\]
    Thus, we have
    \[\begin{aligned}
        \sum_{l=r+j-k}^{\min\{j-1, r\}}\frac{\beta_{r+1}\cdots \beta_{k-j+l+1} }{\beta_{r-l+1}\cdots \beta_{k-j+1}} &\le \sum_{l=r+j-k}^{\min\{j-1, r\}} \left(\frac{k-j+1}{r+1}\right)^{k-j+l-r+1} = \sum_{l=0}^{\min\{j-1,r\}+k-j-r}\left(\frac{k-j+1}{r+1}\right)^{l+1}\\
        &\le \sum_{l=0}^{\infty} \left(\frac{k-j+1}{r+1}\right)^{l+1}= \frac{k-j+1}{r+j-k}.
    \end{aligned}\]
    Then, the summation in \eqref{eq:sum} is upper bounded by 
    \begin{equation}
    \begin{aligned}
        \sum_{j=1}^{k-r} \frac{k-j+2}{k-j-r+2} + \sum_{j=k-r+1}^{k} \frac{k-j +1}{r+j-k}\le& (k+1)\sum_{j=1}^{k-r} \frac{1}{k-j-r+2} + r\sum_{j=k-r+1}^{k} \frac{1}{r+j-k}\\
        =& (k+1)\left(\frac12 + \cdots + \frac{1}{k-r+1}\right) + r\left(1 + \cdots + \frac{1}{r}\right)\\
        \le& (k+1)\log(k-r+1) + r(1 + \log r)\\
        \le& (k+1) \left(\log\left(r(k-r+1)\right)+1\right)\\
        \le& (k+1)\left(\log\left(\frac{k+1}{2}\right)^2 + 1\right)\\
        \le& 2(k+1) \log(k+1).
    \end{aligned}
    \end{equation}
    Substituting this bound into \eqref{eq:3}, we obtain
    \[\sum_{j=1}^{k} \sum_{i=1}^{k+1} \left|\widetilde \W_{ij}^{-1}\right|\le  4(k+1)\log(k+1)\frac{\sum_{r=0}^{k-1}d_r\lambda^r}{D_{kk}(-\lambda)}\le 4(k+1)\log(k+1).\]
    Finally, using the fact that 
    \[\sum_{i=1}^{k+1}|\widetilde \W_{i,k+1}^{-1}| = \sum_{i=1}^{k+1}\widetilde \W_{i,k+1}^{-1} = \boldsymbol 1^T \widetilde \W^{-1}e_{k+1} = \sqrt{k+1}e_{k+1}^T \widetilde \W \widetilde \W^{-1}e_{k+1} = \sqrt{k+1},\]
    we get
    \[\begin{aligned}
        \|\widetilde \W^{-1}\|_2 & \le \|\widetilde \W^{-1}\|_F = \sqrt{\sum_{j=1}^{k+1} \sum_{i=1}^{k+1} \left|\widetilde \W_{ij}^{-1}\right|^2} = \sqrt{\sum_{j=1}^{k} \sum_{i=1}^{k+1} \left|\widetilde \W_{ij}^{-1}\right|^2 + \sum_{i=1}^{k+1} \left|\widetilde \W_{i,k+1}^{-1}\right|^2}\\ 
        &\le \sqrt{\sum_{j=1}^{k} \sum_{i=1}^{k+1} \left|\widetilde \W_{ij}^{-1}\right| +\left(\sum_{i=1}^{k+1} \left|\widetilde \W_{i,k+1}^{-1}\right|\right)^2} \\
        & \le \sqrt{4(k+1)\log(k+1) + k + 1} = \sqrt{(k+1)(4\log(k+1) + 1)}.
    \end{aligned}\]
    Here, we use the fact that $\left|\widetilde \W^{-1}_{ij}\right| \le 1$ for all $1\le i\le k+1$ and $1\le j \le k$ in the second inequality.
\end{proof}
\begin{proof}[Proof of \Cref{lem:W_inv_norm}]
    Since $\A$ is Hermitian and negative semi-definite, by unitary similarity transformation, $\W_k(\A)$ is unitarily similar to a block diagonal matrix. Each diagonal block has the same structure as $\W_k(\A)$ but with scalar blocks. Therefore, without loss of generality, we assume $\A = -\lambda$, where $\lambda \ge 0$. Next, applying \Cref{lem:1d_W_inv_norm}, we obtain the bound 
    \[\left\|\W_k(\A)^{-1}\right\|_2 \le \sqrt{(k+1)(4\log(k+1) + 1)}.\]
\end{proof}
\begin{lem}\label{lem:W_inv}
    For any matrix $\A \in \mathbb C^{n\times n}$, if the matrix $\W_k(\A)$ defined by \cref{eq:WT} is invertible, then the inverse is given by 
    \[ \begin{aligned}
        &\W_k(\A)^{-1} = \left[\I_{k+1}\otimes \left(D_{kk}(\A)\right)^{-1}\right] \times \\        
        &{\small \begin{bmatrix}
            d_k\sqrt{k+1} (-\A)^k& \frac{d_k}{d_k}\sum_{j=0}^{k-1} d_j (-\A)^j& \cdots& \frac{d_k}{d_2} \sum_{j=0}^1 d_j (-\A)^{j+k-2}& \frac{d_k}{d_1}d_0 (-\A)^{k-1}  \\
            
            d_{k-1}\sqrt{k+1}(-\A)^{k-1}& -\frac{d_{k-1}}{d_k}\sum_{j=k}^k d_j (-\A)^{j-1} & \ddots & \vdots & \vdots  \\
            
            \vdots & \vdots& \ddots& \frac{d_2}{d_2} \sum_{j=0}^1 d_j (-\A)^j& \frac{d_2}{d_1} d_0 (-\A) \\
            
            d_1\sqrt{k+1}(-\A)&-\frac{d_1}{d_k}\sum_{j=k}^k d_j (-\A)^{j-k+1}& \cdots& -\frac{d_1}{d_2}\sum_{j=2}^k d_j (-\A)^{j-1}& \frac{d_1}{d_1}d_0\I_n\\
            
            d_0\sqrt{k+1}\I_n & -\frac{d_0}{d_k} \sum_{j=k}^k d_j (-\A)^{j-k} & \cdots &-\frac{d_0}{d_2}\sum_{j=2}^k d_j (-\A)^{j-2} & -\frac{d_0}{d_1}\sum_{j=1}^k d_j (-\A)^{j-1}
        \end{bmatrix}}.
    \end{aligned}\]
\end{lem}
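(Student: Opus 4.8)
The plan is to read $\W_k(\A)^{-1}$ off the block factorization already established in the proof of \cref{lem:invert}. Pulling the scalar $1/\sqrt{k+1}$ out of the first block row, write $\W_k(\A)=\D_k\,\widetilde\W_k(\A)$ with $\D_k=\operatorname{diag}(\tfrac1{\sqrt{k+1}},1,\dots,1)\otimes\I_n$ and $\widetilde\W_k(\A)$ the matrix on the left of \cref{eq:LU}. That equation gives $\widetilde\W_k(\A)=\mathcal L\mathcal R$, where $\mathcal L$ is block upper triangular with identity diagonal and only its first block row nontrivial, while $\mathcal R$ has first block row $[\0,\dots,\0,D_{kk}(\A)]$ and block rows $2,\dots,k+1$ in the bidiagonal pattern $\I_n$ on the sub-diagonal, $\beta_{k-i+2}\A$ on the diagonal. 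Hence $\W_k(\A)^{-1}=\mathcal R^{-1}\mathcal L^{-1}\D_k^{-1}$, and it suffices to invert the three factors and multiply.

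First, $\mathcal L-\I$ is supported in the first block row (columns $2,\dots,k+1$), so $(\mathcal L-\I)^2=\0$ and $\mathcal L^{-1}=2\I-\mathcal L$, which stays nontrivial only in its first block row. Second, $\mathcal R$ is inverted by block back-substitution: its first block row forces the last block of the solution to be $D_{kk}(\A)^{-1}$ times the first block of the right-hand side — exactly where the nonsingularity of $D_{kk}(\A)$ enters, in agreement with \cref{lem:invert} — and block rows $k+1,k,\dots,2$ then express every remaining block as an alternating sum of $\A$-monomials with coefficients that are products of consecutive $\beta_j$. Third, $\D_k^{-1}=\operatorname{diag}(\sqrt{k+1},1,\dots,1)\otimes\I_n$ only rescales the first block column by $\sqrt{k+1}$, which is precisely where $\sqrt{k+1}$ appears in the claimed formula.

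Multiplying $\mathcal R^{-1}\mathcal L^{-1}\D_k^{-1}$ out block by block, two elementary identities do all the simplification: $\beta_{m+1}\beta_{m+2}\cdots\beta_{\ell}=d_{\ell}/d_m$, immediate from $\beta_j=d_j/d_{j-1}$ in \cref{eq:alpha_beta}, collapses every product of consecutive $\beta$'s into a ratio $d_\ell/d_m$; and $D_{kk}(\A)=\sum_{r=0}^{k}d_r(-\A)^r$ folds the alternating $\A$-sums produced by $\mathcal R^{-1}$ into the truncated sums $\sum_j d_j(-\A)^j$ appearing in the displayed matrix — the first block column reducing to $D_{kk}(\A)^{-1}d_{k-i+1}\sqrt{k+1}(-\A)^{k-i+1}$ in block row $i$. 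Alternatively, and perhaps more transparently, one may take the displayed formula as given and verify it directly: since $D_{kk}(\A)$ commutes with every block of $\W_k(\A)$, it is enough to check $\W_k(\A)M=\I_{k+1}\otimes D_{kk}(\A)$ with $M$ the bracketed matrix, which splits into the two-term recursion $M_{i-1,j}+\beta_{k-i+2}\A M_{i,j}=D_{kk}(\A)\delta_{ij}$ (from block rows $i\ge2$) and the block column-sum condition $\sum_{l=1}^{k+1}M_{l,j}=\sqrt{k+1}\,D_{kk}(\A)\delta_{1j}$ (from the first block row); both follow from the same two identities together with a telescoping cancellation.

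The genuine difficulty is purely the bookkeeping. The entries of $M$ come in two qualitatively different shapes — a first column of clean $\A$-monomials, versus signed and row-dependently shifted truncations of $D_{kk}(\A)$ in the other columns — and matching the product (or checking the recursion and the column sums) means tracking carefully which consecutive-$\beta$ product collapses to which $d_\ell/d_m$, where each truncated range $\sum_{j=j_0}^{j_1}d_j(-\A)^j$ originates, and how the sign flips across the main block diagonal; one must also carry the $\sqrt{k+1}$ scaling of the first block row and column separately throughout, since \cref{eq:LU} is written for the unscaled $\widetilde\W_k(\A)$. No individual step is hard, but the index management is unforgiving, and that is where I would expect to spend the effort.
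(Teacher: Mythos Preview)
Your approach is correct, but it differs from the paper's. The paper's proof is a single sentence: it points back to the scalar computation carried out in \cref{lem:1d_W_inv_norm}, where the inverse of $\W_k(-\lambda)$ is obtained by first permuting the rows and then applying the $2\times 2$ block Schur-complement formula of \cref{lem:matrix_inv}; since every block in that formula is a polynomial in $\A$ and hence commutes with $D_{kk}(\A)$, the scalar expressions lift verbatim to the matrix case. Your route instead exploits the factorization \cref{eq:LU} from \cref{lem:invert}, inverting the three factors $\D_k$, $\mathcal L$, and $\mathcal R$ separately and then multiplying. Both arguments ultimately rely on the same commutativity and the same identity $\beta_{m+1}\cdots\beta_\ell=d_\ell/d_m$, and both produce the displayed matrix; the paper's version is more economical because the entrywise bookkeeping has already been done once in the scalar lemma, whereas your factorization approach (and your direct-verification alternative) is more self-contained but, as you correctly anticipate, front-loads the index management.
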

\begin{proof}
    Applying the discussion in \Cref{lem:1d_W_inv_norm}, we can directly write each block of $\W_k(\A)^{-1}$.
\end{proof}
\begin{proof}[Proof of \Cref{lem:le1_W_inv_norm}]
    First, using \Cref{lem:D_inv_norm}, we have 
    \[\left\|D_{kk}(\A)^{-1}\right\|_2 \le \frac{2}{3 - e}.\]
    Next, we construct a $(k+1)\times (k+1)$ matrix $\widehat W$ satisfying
    \[\widehat \W_{ls} = \begin{cases}
        \frac{d_{l-1}}{d_t}\sum_{j=0}^{t-1}d_j, & s \ge 2, t+1\le l \le k+1,\\
        \frac{d_{l-1}}{d_t}\sum_{j=t}^k d_j, & s\ge 2, 1\le l \le t,\\
        \sqrt{k+1} d_{k+1-l}, & s=1,
    \end{cases}\]
    where we define $t = k+2-s$ for simplicity. Then, by \Cref{lem:W_inv} and \Cref{lem:block_norm}, we have 
    \[\left\|\W_k(\A)^{-1}\right\|_2 \le \left\|D_{kk}(\A)^{-1}\right\|_2\cdot \left\|\widehat \W\right\|_2\le \frac{2}{3-e}\left\|\widehat \W\right\|_2.\]
    Now, we focus on bounding the spectral norm of matrix $\widehat \W$, which is indeed a special case of the discussion in \Cref{lem:1d_W_inv_norm}. Setting $\lambda = 1$ in \Cref{lem:1d_W_inv_norm}, and get 
    \[\|\widehat \W\|_2 \le D_{kk}(-1) \cdot \sqrt{(k+1)(4\log(k+1) + 1)},\]
    where $D_{kk}(-1) \le \sqrt e$ by \Cref{lem:Dkk-1}. Thus, we obtain 
    \[\left\|\W_k(\A)^{-1}\right\|_2 \le \frac{2\sqrt e}{3-e}\sqrt{(k+1)(4\log(k+1) + 1)}.\]
\end{proof}
\subsection{Proof of \Cref{thm:cond}}\label{sec:cond}
\begin{lem}\label{obs:2.3}
    Suppose that $\alpha$ and $\beta$ are defined as in \cref{eq:alpha_beta}. For indexes $i > j$, it holds that $\alpha_i < \alpha_j$ and $\beta_i < \beta_j$.
\end{lem}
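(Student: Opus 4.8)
The plan is to show that each of the finite sequences $(\alpha_j)$ and $(\beta_j)$ is \emph{strictly decreasing} in its index; the statement for an arbitrary pair $i>j$ then follows at once by transitivity. Hence it is enough to prove the consecutive inequalities $\alpha_{j+1}<\alpha_j$ and $\beta_{j+1}<\beta_j$ for every admissible $j$.

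First I would record a uniform closed form for the ratios. Starting from $n_j=\frac{(p+q-j)!\,p!}{(p+q)!\,j!\,(p-j)!}$, an immediate cancellation gives $\frac{n_{j+1}}{n_j}=\frac{p-j}{(j+1)(p+q-j)}$, and this expression also equals $\frac{n_1}{n_0}=\frac{p}{p+q}$ at $j=0$; so
\[
 \alpha_{j+1}=\frac{p-j}{(j+1)(p+q-j)}\quad (0\le j\le p-1),\qquad
 \alpha_j=\frac{p-j+1}{j(p+q-j+1)}\quad (1\le j\le p).
\]
The same manipulation applied to $d_j$ shows $\beta_{j+1}=\frac{q-j}{(j+1)(p+q-j)}$ for $0\le j\le q-1$, i.e.\ the $\alpha$-formula with $p$ and $q$ interchanged, so the $\beta$-case will follow from the $\alpha$-case by symmetry.

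It then remains to verify a single elementary inequality. For $1\le j\le p-1$, clearing the positive denominators turns $\alpha_{j+1}<\alpha_j$ into
\[
 j\,(p-j)\,(p+q-j+1)\;<\;(j+1)\,(p-j+1)\,(p+q-j).
\]
Setting $u:=p-j\ge 1$ and $v:=p+q-j$, so that $v-u=q\ge 1$, the right-hand side minus the left-hand side expands to $j(v-u)+v(u+1)=jq+(p+q-j)(p-j+1)$, which is strictly positive on this range of $j$; the boundary case $j=0$ is trivial since $\alpha_1=\frac{p}{p+q}<1=\alpha_0$. This proves $(\alpha_j)$ strictly decreasing, and the verbatim computation with $u:=q-j$, $v:=p+q-j$ (now $v-u=p$) gives the gap $jp+(p+q-j)(q-j+1)>0$ together with $\beta_1=\frac{q}{p+q}<1$, so $(\beta_j)$ is strictly decreasing too.

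I do not expect any real obstacle: the entire argument is a one-line cross-multiplication followed by an expansion. The only bookkeeping that needs care is (i) checking that the closed form for $\alpha_{j+1}$ (and $\beta_{j+1}$) is still valid at the boundary index $j=0$, so the chain of strict inequalities connects all the way down to $\alpha_0=\beta_0=1$, and (ii) keeping the index ranges straight so that every factor in the cross-multiplied inequality is positive. I would also note in passing that, for $j\ge 1$, the very same expansion yields the sharper consecutive bound $\frac{\alpha_{j+1}}{\alpha_j}\le\frac{j}{j+1}$ (which reduces to $q\ge 0$) and likewise for $\beta$, which telescopes to $\frac{\beta_j}{\beta_i}\le\frac{i}{j}$ for $1\le i\le j$ — the form actually invoked in \cref{lem:1d_W_inv_norm}.
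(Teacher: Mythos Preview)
Your proposal is correct and takes essentially the same approach as the paper: both reduce to the consecutive inequality $\alpha_{j+1}<\alpha_j$ and verify it by a direct algebraic manipulation---the paper factors the ratio $\alpha_j/\alpha_{j+1}=(1+\tfrac1j)\bigl(1+\tfrac{q}{(p-j)(p+q-j+1)}\bigr)>1$, while you cross-multiply and expand the difference. Your treatment is in fact slightly more careful, since you explicitly cover the boundary $j=0$ (which the paper's ratio factorization silently skips) and you also record the telescoping bound $\beta_j/\beta_i\le i/j$ that the paper later invokes in \cref{lem:1d_W_inv_norm} without proof.
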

\begin{proof}
    We examine the quantity $\alpha_{j} / \alpha_{j+1}$
    \[ \begin{aligned}
        \frac{\alpha_j}{\alpha_{j+1}} = \frac{(j+1)(p+q-j)(p-j+1)}{j(p-j)(p+q-j+1)} = \left(1 + \frac1j\right)\left(1 + \frac{q}{(p-j)(p+q-j+1)}\right) > 1.
    \end{aligned} \]
    By induction, we conclude that $\alpha_i < \alpha_j$ for $i > j$. Similarly, we can show that $\beta_i < \beta_j$ for $i > j$.
\end{proof}
\begin{lem}\label{lem:1W_inv_norm}
    Suppose $\A \in \mathbb C^{n\times n}$ is Hermitian and negative semi-definite, and $\W_k(\A)$ is defined by \cref{eq:WT}. Then, we have 
    \begin{equation}\label{eq:1W_inv_norm}
        \left\|\widetilde \E^T \W_k(\A)^{-1}\right\|_2 \le \sqrt{5k + 1},
    \end{equation}
    where $\widetilde \E = \widetilde {\boldsymbol 1} \otimes \I_n$, and $\widetilde {\boldsymbol 1}$ is defined in \cref{eq:zy1}.
\end{lem}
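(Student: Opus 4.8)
The plan is to reduce the matrix bound to a one–parameter family of scalar bounds and then to solve the resulting $(k+1)\times(k+1)$ system by hand.

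\emph{Reduction to scalars.} Exactly as in the proof of \cref{lem:W_inv_norm}, write $\A=\U\LLambda\U^\dagger$ with $\U$ unitary and $\LLambda=\mathrm{diag}(-\lambda_1,\dots,-\lambda_n)$, $\lambda_i\ge 0$. Using the decomposition $\W_k(\A)=\M_1\otimes\I_n+\M_2\otimes\I_n+\M_3\otimes\A$ from \cref{sec:block_encode} one gets $\W_k(\A)=(\I_{k+1}\otimes\U)\,\W_k(\LLambda)\,(\I_{k+1}\otimes\U^\dagger)$, and after swapping the two tensor factors $\W_k(\LLambda)$ becomes block diagonal with the scalar matrices $\W_k(-\lambda_i)$ on the diagonal. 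Since $\widetilde\E^T=\widetilde{\boldsymbol 1}^T\otimes\I_n$ and all these conjugations are unitary, $\|\widetilde\E^T\W_k(\A)^{-1}\|_2=\max_i\|\widetilde{\boldsymbol 1}^T\W_k(-\lambda_i)^{-1}\|_2$; so it suffices to show $\|\widetilde{\boldsymbol 1}^T\W_k(-\lambda)^{-1}\|_2\le\sqrt{5k+1}$ for every fixed $\lambda\ge 0$, the left side being the Euclidean length of that row vector.

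\emph{Solving the scalar system.} With the cyclic permutation $P$ of \cref{lem:1d_W_inv_norm}, $\widetilde\W:=P\W_k(-\lambda)$ has the $2\times2$ block form used there; as $P$ is orthogonal, $\|\widetilde{\boldsymbol 1}^T\W_k(-\lambda)^{-1}\|_2=\|\boldsymbol c\|_2$ where $\boldsymbol c^T:=\widetilde{\boldsymbol 1}^T\widetilde\W^{-1}$, i.e.\ $\widetilde\W^T\boldsymbol c=\widetilde{\boldsymbol 1}$. Back-substituting with the explicit $B^{-1}$ and the Schur complement (\cref{lem:matrix_inv}), and using $\sum_j d_j\lambda^j=D_{kk}(-\lambda)$, $\sum_j d_j(-\lambda)^j=D_{kk}(\lambda)$, I would obtain closed forms: the last coordinate equals $-\sqrt{k+1}\,\rho$ with $\rho:=D_{kk}(\lambda)/D_{kk}(-\lambda)=R_{kk}(-\lambda)$, and for $j=1,\dots,k$ (writing $r=k-j$)
\[ c_j=\frac{\rho\,T_j^- - T_j^+}{d_{r+1}\lambda^{r+1}},\qquad T_j^\pm:=\sum_{m=r+1}^k d_m(\pm\lambda)^m, \]
which, using $\rho\,D_{kk}(-\lambda)=D_{kk}(\lambda)$, telescopes to the recursion $c_1=\rho-(-1)^k$ and $c_{j+1}=\beta_{k-j+1}\lambda\,c_j+\rho-(-1)^{k-j}$.

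\emph{Estimating $\|\boldsymbol c\|_2$.} First, $|\rho|\le 1$ because $|D_{kk}(\lambda)|=\bigl|\sum_j d_j(-\lambda)^j\bigr|\le\sum_j d_j\lambda^j=D_{kk}(-\lambda)$, so the last coordinate contributes at most $k+1$ to $\|\boldsymbol c\|_2^2$. It then remains to prove $\sum_{j=1}^k c_j^2\le 4k$, which would follow from the per-coordinate bound $|c_j|\le 2$. Writing $1\pm\rho=2/D_{kk}(-\lambda)$ times the even- (resp.\ odd-) degree part of $\sum_j d_j\lambda^j$, a short manipulation turns $c_j$ into $2\bigl(E_{\le r}O_{>r}-O_{\le r}E_{>r}\bigr)\big/\bigl(D_{kk}(-\lambda)\,d_{r+1}\lambda^{r+1}\bigr)$, where $E_{\le r},O_{\le r}$ (resp.\ $E_{>r},O_{>r}$) are the even/odd parts of $\sum_j d_j\lambda^j$ restricted to degrees $\le r$ (resp.\ $>r$); one then bounds this by $2$ by splitting on the size of $\lambda$: for $\lambda\le 1$ the truncated tails are dominated by geometric series in the $\beta_m$'s (decreasing and $\le\tfrac12$, cf.\ \cref{obs:2.3}), while for $\lambda>1$ one uses $\sum_m d_m\le\sqrt e$ (\cref{lem:Dkk-1}) together with $|\rho|\le 1$ and the telescoped identity (which, for instance, already gives $c_1=\rho-(-1)^k$ exactly). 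Combining, $\|\boldsymbol c\|_2^2\le 4k+(k+1)=5k+1$.

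\emph{Main obstacle.} The only delicate point is the uniform-in-$\lambda$ bound $|c_j|\le 2$ for the ``middle'' indices: the naive triangle inequality on the polynomial coefficients of $\rho T_j^- - T_j^+$ discards the essential cancellation and diverges as $\lambda\to\infty$ (and also misbehaves near real zeros of $D_{kk}(\lambda)$). The estimate must instead exploit the exact sign/telescoping structure of the recursion $c_{j+1}=\beta_{k-j+1}\lambda\,c_j+\rho-(-1)^{k-j}$, combined with $\sum_m d_m\le\sqrt e$ and the monotonicity of the $\beta_m$.
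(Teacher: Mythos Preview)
Your reduction to the scalar problem, the explicit solution of $\W_k(-\lambda)^T\x=\widetilde{\boldsymbol 1}$, the identification $x_{k+1}=-\sqrt{k+1}\,\rho$ with $\rho=D_{kk}(\lambda)/D_{kk}(-\lambda)$, and the recursion
\[
x_j=(-1)^{j+1}+\rho+\beta_{j+1}\lambda\,x_{j+1}
\]
are all exactly what the paper uses. The gap is in the step that matters: the uniform bound $|c_j|\le 2$ (equivalently $|x_j|\le 2$) for all $\lambda\ge 0$.

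Your case split does not close this. For $\lambda\le 1$ the geometric-series estimate on $T_j^+/(d_{r+1}\lambda^{r+1})$ gives at best $\le 2$, and then $|\rho T_j^--T_j^+|/(d_{r+1}\lambda^{r+1})\le 2|\rho|+2\le 4$, not $2$. For $\lambda>1$ the tools you list ($\sum_m d_m\le\sqrt e$, $|\rho|\le 1$, the recursion) cannot control the term $\beta_{j+1}\lambda\,x_{j+1}$ in the recursion once $\beta_{j+1}\lambda>1$; without sign information that term can dominate, and iterating the recursion naively gives growth like $\prod_m\beta_m\lambda^{k}$, which is exactly the divergence you flag in your ``Main obstacle'' paragraph. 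Your $E/O$ rewriting also appears to have an algebraic slip: a direct computation gives $\rho T_j^--T_j^+=-2(EO_{>r}+OE_{>r})/D_{kk}(-\lambda)$, not $2(E_{\le r}O_{>r}-O_{\le r}E_{>r})$, so the cancellation you hope to exploit is not there in that form.

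The paper resolves the obstacle by proving the \emph{sign} property $(-1)^{j+1}x_j\ge 0$ for every $j=1,\dots,k$ and every $\lambda\ge 0$. This is the real content of the lemma: it is obtained by expanding the closed form for $x_j$ and reducing to showing that, for each degree $r$, the alternating sum $\sum_i(-1)^i d_{j+i}d_{r-i}$ is nonnegative, which follows from the monotonicity of the $\beta_m$ (pairing consecutive terms and, when $j<r$, a symmetry that makes the initial segment vanish). Once the sign is known, the recursion gives immediately
\[
|x_j|=(-1)^{j+1}x_j=1-(-1)^{j+1}\tfrac{x_{k+1}}{\sqrt{k+1}}+(-1)^{j+1}\beta_{j+1}\lambda\,x_{j+1}\le 1+|\rho|\le 2,
\]
because the last term is $-\beta_{j+1}\lambda\cdot(-1)^{j+2}x_{j+1}\le 0$. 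Your proposal has all the surrounding scaffolding but is missing this sign argument, which is precisely what converts the unstable recursion into the sharp bound.
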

\begin{proof}
    For simplicity, we write $\W = \W_k(\A)$ in this proof. Similar to the proof of \Cref{lem:W_inv_norm}, we only need to consider the case $\A = -\lambda \le 0$. At first, we consider solving the linear system
    \begin{equation}\label{eq:A.1}
        \W^T \x = \widetilde {\boldsymbol 1},
    \end{equation}
    where $\x = (x_{k+1}, \dots, x_1)^T$. We aim to show that the solution to this linear system \cref{eq:A.1} is given by 
    \begin{equation}\label{eq:A.3}
        \begin{cases}
            x_{k+1} = -\sqrt{k+1} \frac{D(\lambda)}{D(-\lambda)},\\
            x_j = \frac{(-1)^{j+1}}{D(-\lambda)} \sum_{i=0}^{k-j} \frac{d_{j+i}}{d_j} (-\lambda)^i P_{j+i+1}(\lambda), 1\le j \le k.
        \end{cases}
    \end{equation}
    Here, the notations are defined as follows
    \[ \begin{cases}
        d_0 = 1,\\
        d_j = d_0\beta_1\cdots \beta_j, & 1\le j \le k,
    \end{cases} \quad D(\lambda) = \sum_{j=0}^k d_j (-\lambda)^j, \quad P_j(\lambda) = D(-\lambda) +(-1)^{j}D(\lambda). \]
    Using the explicit expression for $\W^{-1}$ from \Cref{lem:W_inv}, we can verify that $x_{k+1} = -\sqrt{k+1}\frac{D(\lambda)}{D(-\lambda)}$. Moreover, by the first row of the linear system \eqref{eq:A.1}, we have 
    \[ x_{k} = (-1)^{k+1} - \frac1{\sqrt{k+1}}x_{k+1} = \frac{(-1)^{k+1}}{D(-\lambda)}P_{k+1}(\lambda) \]
    which satisfies the expression \eqref{eq:A.3}. 
    
    Next, we verify for general $j$ by induction. Using the recurrence relation 
    \begin{equation}\label{eq:recu}
         \frac1{\sqrt{k+1}} x_{k+1} - \beta_{j+1}\lambda x_{j+1} + x_{j} = (-1)^{j+1},\quad \forall j = 0, \dots, k-1,
    \end{equation}
    we obtain 
    \[ \begin{aligned}
        x_{j} &= (-1)^{j+1} - \frac1{\sqrt{k+1}}x_{k+1} + \beta_{j+1}\lambda x_{j+1} \\
        &= \frac{(-1)^{j+1}}{D(-\lambda)}\left(P_{j+1}(\lambda) + (-1)^{j+1} D(-\lambda) \beta_{j+1} \lambda x_{j+1}\right)\\
        &= \frac{(-1)^{j+1}}{D(-\lambda)}\left(P_{j+1}(\lambda) + (-\beta_{j+1} \lambda) \sum_{i=0}^{k-j-1} \frac{d_{j+i+1}}{d_{j+1}} (-\lambda)^i P_{j+i+2}(\lambda)\right)\\
        &= \frac{(-1)^{j+1}}{D(-\lambda)} \sum_{i=0}^{k-j} \frac{d_{j+i}}{d_{j}} (-\lambda)^i P_{j+i+1}(\lambda).
    \end{aligned} \]
    This shows that $x_{j}$ also satisfies the expression in \eqref{eq:A.3}. 
    
    The next step is to show that $(-1)^{j+1} x_{j}\ge 0$. For the summation in the expression of $x_j$, we have
    \begin{equation}\label{eq:A.5}
        \begin{aligned}
            & \sum_{i=0}^{k-j} \frac{d_{j+i}}{d_j} (-\lambda)^i \left(\sum_{l=0}^k d_l \lambda^l \left(1 + (-1)^{j+i+1+l}\right)\right) \\=&\frac1{d_j}\sum_{i=0}^{k-j} \sum_{l=0}^k \left((-1)^i + (-1)^{j+l+1}\right) d_{j+i}d_l \lambda^{l+i} \\
            =& \frac1{d_j}\sum_{r=0}^{2k-j}\sum_{i=\max\{0, r-k\}}^{\min{\{k-j, r\}}} \left((-1)^i + (-1)^{j+r-i+1}\right) d_{j+i}d_{r-i} \lambda^{r}\\
            =& \frac1{d_j}\sum_{r=0}^{2k-j}\left(1+(-1)^{j+r+1}\right)\left(\sum_{i=\max\{0, r-k\}}^{\min{\{k-j, r\}}}(-1)^i  d_{j+i}d_{r-i}\right) \lambda^{r},
        \end{aligned}
    \end{equation} 
    where we performed the variable substitution $r = l+i$ in the second equality. In the last line of \eqref{eq:A.5}, all terms are obviously non-negative except for the inner summation term:
    \begin{equation}\label{eq:inner_sum}
        \sum_{i=\underline i}^{\overline i}(-1)^i  d_{j+i}d_{r-i},
    \end{equation}
    where $\underline i = \max\{0, r-k\}$ and $\overline i = \min\{k-j, r\}$. 
    
    We now show that this summation is also non-negative under the condition that $j+r+1$ is an even number. This condition is reasonable because the coefficient $1 + (-1)^{j+r+1}$ vanishes when $j+r+1$ is odd. First, we consider the summation 
    \begin{equation}\label{eq:gene_sum}
        \sum_{i=0}^{s} (-1)^i d_{j+i}d_{r-i}.
    \end{equation}
    \begin{itemize}
        \item If $j>r$, we have 
        \[\begin{aligned}
            \sum_{i=0}^{s} (-1)^{i} d_{j+i}d_{r-i} &= \sum_{i=0}^{\lceil s/2\rceil - 1}\left(d_{j+2i}d_{r-2i} - d_{j+2i+1}d_{r-2i-1}\right) + d_{j+s}d_{r-s} \delta_{\{s\text{ is even}\}}\\
            &\ge \sum_{i=0}^{\lceil s/2\rceil - 1}d_{j+2i}d_{r-2i} \left( 1- \frac{\beta_{j+2i+1}}{\beta_{r-2i}}\right)\ge 0,
        \end{aligned}\]
        where we use the decreasing property of the sequence $\beta_i$.
        \item If $j < r$ but $j + s \ge r$, we can find $i^*\in[0, s]$ such that $j+i^* = r$. The integer $i^*$ is odd, and we have 
        \[\begin{aligned}
            \sum_{i=0}^{i^*} (-1)^i d_{j+i} d_{r-i} &= \sum_{l=0}^{i^*} (-1)^{i^* - l} d_{j+i^* -l} d_{r-i^* + l} = -\sum_{l=0}^{i^*} (-1)^{l} d_{r -l} d_{j + l} = 0,
        \end{aligned}\]
        where we reverse the summation order and make the variable substitution $l = i^* - i$ in the first equality. Taking the common factor $(-1)^{i^*} = -1$ outside the summation, we see that the sum equals zero. Therefore, we obtain
        \[\sum_{i=0}^{s} (-1)^{i} d_{j+i}d_{r-i} = \sum_{i=i^* + 1}^s (-1)^{i} d_{j+i}d_{r-i} = \sum_{i=0}^{s-i^* -1} (-1)^{i+i^*+1}d_{j+i^*+1+i} d_{r-i^* -1 - i}.\]
        Let $s' = s-i^* -1$, $j' = j+i^*+1$, and $r' = r - i^* -1$. Then, we have $r'+j'+1$ as an even number, with $j' > r'$, and thus the above summation is non-negative as in the first case.
    \end{itemize}
    In summary, the summation \eqref{eq:gene_sum} is non-negative if $j+s \ge r$. Specifically, it equals zero when $j+s = r$. We now rewrite the summation \eqref{eq:inner_sum} as
    \begin{equation}\label{eq:mod_inner_sum}
        \sum_{i=\underline i}^{\overline i} (-1)^i d_{j+i}d_{r-i} = (-1)^{\underline i}\sum_{i=0}^{\overline i - \underline i} (-1)^{i} d_{j+\underline i+i}d_{r-\underline i -i}.
    \end{equation}
    Defining $s' = \overline i - \underline i$, $j' = j+\underline i$, and $r' = r-\underline i$, we observe $r'+j'+1$ is an even number. Moreover, we compute  
    \[j'+s'-r' = j + \min\{k-j, r\} + \max\{0, r-k\} - r = \min\{k, r+j\} - \min\{k, r\} \ge 0.\]
    Thus, the sign of \eqref{eq:inner_sum} is determined by 
    \[(-1)^{\underline i} = (-1)^{\max\{0, r-k\}}.\]
    Notably, if $r\ge k$, we have $j'+s'=r'$, leading to the summation equaling zero. Consequently, we conclude that the summation \eqref{eq:inner_sum} is always non-negative. 
    
    Hence, we have established that $(-1)^{j+1} x_j \ge 0$. Moreover, from \eqref{eq:recu}, we obtain 
    \[ 0\le (-1)^{j+1}x_j = 1 - (-1)^{j+1} \frac{x_{k+1}}{\sqrt{k+1}} + (-1)^{j+1} \beta_{j+1}\lambda x_{j+1}. \]
    Using the fact that $|x_{k+1}|< \sqrt{k+1}$, we conclude that
    \[1 - (-1)^{j+1} \frac{x_{k+1}}{\sqrt{k+1}} \ge 0, \quad \text{while }\quad (-1)^{j+1} \beta_{j+1}\lambda x_{j+1} \le 0.\]
    Thus, we obtain
    \[ \begin{aligned}
        |x_j| &= (-1)^{j+1} x_j \le 1 - (-1)^{j+1} \frac{x_{k+1}}{\sqrt{k+1}} \le 2.
    \end{aligned} \]
    This implies the final bound
    \[ \|\widetilde {\boldsymbol 1}^T \W^{-1}\|_2 \le \sqrt{5k+1}. \]
\end{proof}
\begin{lem}\label{lem:le11W_inv_norm}
    Suppose $\A\in \mathbb C^{n\times n}$ satisfies $\|\A\|_2 \le 1$, and let $\W_k(\A)$ be defined as in \cref{eq:WT}. Then, we have
    \[\left\|\widetilde \E ^T \W_k(\A)^{-1}\right\|_2 \le \left(\sqrt e + \frac{2 e}{3 - e}\right)\sqrt{2k+1}.\]
    Here, $\widetilde \E = \widetilde {\boldsymbol 1} \otimes \I_n$, where $\widetilde {\boldsymbol 1}$ is define in \cref{eq:zy1}.
\end{lem}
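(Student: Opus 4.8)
The plan is to follow the line of the proof of \cref{lem:1W_inv_norm}, but to replace its diagonalization step by the explicit inverse of \cref{lem:W_inv}, so that the only hypothesis used on $\A$ is $\|\A\|_2\le 1$ (which, via \cref{lem:D_inv_norm}, also makes $D_{kk}(\A)$, hence $\W_k(\A)$ by \cref{lem:invert}, invertible). Write the block row $\widetilde\E^T\W_k(\A)^{-1}=[\boldsymbol z_1,\dots,\boldsymbol z_{k+1}]$ with $n\times n$ blocks, and compute the $\boldsymbol z_m$ explicitly by contracting the alternating vector $\widetilde{\boldsymbol 1}^T$ against the columns of the matrix in \cref{lem:W_inv} — equivalently, by solving $\boldsymbol z\,\W_k(\A)=\widetilde\E^T$ through the same recurrence used in the proof of \cref{lem:1W_inv_norm} and invoking uniqueness. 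The telescoping in that computation collapses the first block column to
\[
\boldsymbol z_1=-\sqrt{k+1}\,R_{kk}(\A),
\]
while each remaining block can be put in the shape $\boldsymbol z_m=\pm\,P_j(\A)\pm R_{kk}(\A)\,\widetilde P_j(\A)$ with $j=k+2-m\in\{1,\dots,k\}$, where $P_j$ and $\widetilde P_j$ are polynomials in $\A$ of degree $k-j$ whose coefficients are the positive numbers $d_{j+i}/d_j=\beta_{j+1}\cdots\beta_{j+i}$, $i=0,\dots,k-j$ (the parity-dependent signs are immaterial for the norm estimate).

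The next step is to bound each block. For $R_{kk}(\A)=D_{kk}(\A)^{-1}N_{kk}(\A)$ I would combine $\|D_{kk}(\A)^{-1}\|_2\le\frac{2}{3-e}$ (\cref{lem:D_inv_norm}) with $\|N_{kk}(\A)\|_2\le\sum_j n_j=\sum_j d_j=D_{kk}(-1)\le\sqrt e$, using $n_j=d_j$ for the diagonal approximant, $\|\A\|_2\le 1$, and \cref{lem:Dkk-1}; this gives $\|R_{kk}(\A)\|_2\le\frac{2\sqrt e}{3-e}$ and hence $\|\boldsymbol z_1\|_2\le\frac{2\sqrt e}{3-e}\sqrt{k+1}$. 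Since $\|\A\|_2\le 1$, both $\|P_j(\A)\|_2$ and $\|\widetilde P_j(\A)\|_2$ are at most $\sum_{i=0}^{k-j}\frac{d_{j+i}}{d_j}$; by the monotonicity of the $\beta_l$'s this sum is at most $\sum_{i\ge 0}\beta_{j+1}^i=\frac{1}{1-\beta_{j+1}}$, and for $j\ge 1$ one has $\beta_{j+1}\le\beta_2=\frac{k-1}{2(2k-1)}\le\frac14$, so the sum is at most $\frac43$. Consequently $\|\boldsymbol z_m\|_2\le\frac43\bigl(1+\frac{2\sqrt e}{3-e}\bigr)$ for every $m\in\{2,\dots,k+1\}$.

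Finally I would assemble the estimate with the block-row inequality $\bigl\|[\boldsymbol z_1,\dots,\boldsymbol z_{k+1}]\bigr\|_2^2\le\sum_m\|\boldsymbol z_m\|_2^2$ (the $1\times(k+1)$ instance of \cref{lem:block_norm}), which yields
\[
\bigl\|\widetilde\E^T\W_k(\A)^{-1}\bigr\|_2^2\le\frac{4e}{(3-e)^2}\,(k+1)+\frac{16}{9}\Bigl(1+\frac{2\sqrt e}{3-e}\Bigr)^2 k ,
\]
and then close with an elementary constant check: $\frac{2\sqrt e}{3-e}\le\sqrt e+\frac{2e}{3-e}$ (immediate, since $2\sqrt e-2e<0$) and $\frac43\bigl(1+\frac{2\sqrt e}{3-e}\bigr)\le\sqrt e+\frac{2e}{3-e}$ (equivalently $\sqrt e-\frac43+\frac{6e-8\sqrt e}{3(3-e)}\ge 0$, which holds because $\sqrt e>\frac43$ and $3\sqrt e>4$). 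Squaring these two inequalities and weighting them by $k+1$ and $k$ respectively bounds the right-hand side by $\bigl(\sqrt e+\frac{2e}{3-e}\bigr)^2(2k+1)$, which is the claim.

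The part I expect to be fiddly is the first step — transcribing the scalar identity of \cref{lem:1W_inv_norm} into matrix form while keeping track of the side on which $\A$ acts and of the signs — together with the constant verification in the last step; small $k$, where $\beta_2$ may be undefined or the geometric bound on $\sum d_{j+i}/d_j$ must be checked by hand, can be dealt with separately.
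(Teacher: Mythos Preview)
Your proposal is correct and follows essentially the same route as the paper: write out the blocks of $\widetilde\E^T\W_k(\A)^{-1}$ via the recurrence from \cref{lem:1W_inv_norm}, identify the first block as $-\sqrt{k+1}\,R_{kk}(\A)$ and the others as $\pm P_j(\A)\pm R_{kk}(\A)\widetilde P_j(\A)$, bound $\|R_{kk}(\A)\|_2\le\frac{2\sqrt e}{3-e}$, and assemble with \cref{lem:block_norm}. The only substantive difference is how you bound $\sum_{i=0}^{k-j}\frac{d_{j+i}}{d_j}$: you use the geometric-series estimate $\le\frac{1}{1-\beta_{j+1}}\le\frac43$, whereas the paper uses $\frac{d_{j+i}}{d_j}=\beta_{j+1}\cdots\beta_{j+i}\le\beta_1\cdots\beta_i=d_i$ to get $\sum_{i=0}^{k-j}\frac{d_{j+i}}{d_j}\le\sum_{i=0}^k d_i=D_{kk}(-1)\le\sqrt e$. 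The paper's choice has the small advantage that $\sqrt e\bigl(1+\frac{2\sqrt e}{3-e}\bigr)=\sqrt e+\frac{2e}{3-e}$ exactly, so the stated constant falls out without the numerical verification you flag at the end; your $\frac43$ is tighter but requires that extra check.
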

\begin{proof}
    As in \Cref{lem:1W_inv_norm}, we consider solving the linear system
    \[\begin{bmatrix}
        \frac{\I_n}{\sqrt{k+1}}&\I_n\\
        \frac{\I_n}{\sqrt{k+1}}& \beta_k \A&\I_n\\
        \vdots && \ddots & \ddots \\
        \frac{\I_n}{\sqrt{k+1}}&&&\beta_2\A&\I_n\\
        \frac{\I_n}{\sqrt{k+1}}&&&&\beta_1\A 
    \end{bmatrix}\begin{bmatrix}\X_{k+1}    \\ \X_k\\ \vdots \\ \X_2\\ \X_1 \end{bmatrix}= \begin{bmatrix}(-\I_n)^{k+1}\\(-\I_n)^k\\\vdots \\ (-\I_n)^2\\ -\I_n\end{bmatrix}.\]
    Using the notation in \Cref{lem:1W_inv_norm}, we obtain 
    \[\begin{cases}
        \X_{k+1} = -\sqrt{k+1}D_{kk}(\A)^{-1} D_{kk}(-\A),\\
        \X_j = (-1)^{j+1}\sum_{i=0}^{k-j} \frac{d_{j+i}}{d_j} \A^i \left(\I_n + (-1)^{j+i+1}D_{kk}(\A)^{-1} D_{kk}(-\A)\right),\quad  1\le j \le k.
    \end{cases}\]
    Given the assumption $\|\A\|_2\le 1$, along with \Cref{lem:Dkk-1} and \Cref{lem:D_inv_norm}, we obtain 
    \[\left\|D_{kk}(\A)^{-1} D_{kk}(-\A)\right\|_2 \le \frac{2}{3-e}D_{kk}(-1) \le \frac{2\sqrt e}{3 - e}.\]
    Thus, we have $\|\X_{k+1}\|_2 \le \frac{2\sqrt e}{3 - e} \sqrt{k+1}$, and 
    \[\begin{aligned}
        \left\|\X_j\right\|_2 &\le \left(1 + \frac{2\sqrt e}{3 - e}\right)\sum_{i=0}^{k-j}\frac{d_{j+i}}{d_j} \le \left(1 + \frac{2\sqrt e}{3 - e}\right)\sum_{i=0}^{k}d_{i}\\
        &\le \sqrt e + \frac{2 e}{3 - e}.
    \end{aligned}\]
    Finally, applying \Cref{lem:block_norm}, we obtain
    \[\left\|\widetilde \E ^T \W_k(\A)^{-1}\right\|_2 \le \left(\sqrt e + \frac{2 e}{3 - e}\right)\sqrt{2k+1}.\]
\end{proof}
\begin{proof}[Proof of \Cref{thm:cond}]
    For simplicity, we use $\LL$ and $\W$ to denote the matrices $\LL_{m,k,p}(\A h)$ and $\W_{k}(\A h)$, respectively. By decomposing $\LL$ into the sum of its block diagonal and block sub-diagonal parts, we obtain 
    \[ \|\LL\|_2 \le \max\left\{\|\W\|_2, 1\right\} + \| \widetilde \T\|_2, \]
    where $\| \widetilde \T\|_2 = 1$. Furthermore, we decompose $\W$ as follows 
    \[ \W = \begin{bmatrix} \frac1{\sqrt{k+1}}\I_n & \frac1{\sqrt{k+1}}\I_n &\cdots &\frac1{\sqrt{k+1}}\I_n\\ \0 & \0\\ &\ddots&\ddots \\ && \0&\0 \end{bmatrix} + \begin{bmatrix} \0 & \0 &\cdots &\0 \\ \I_n & \0\\ &\ddots&\ddots \\ && \I_n&\0 \end{bmatrix} + \begin{bmatrix} \0 & \0 &\cdots &\0 \\ \0 & \beta_k\A h\\ &\ddots&\ddots \\ && \0&\beta_1\A h  \end{bmatrix}.\]
    From this decomposition, a reasonable bound on $\|\W\|_2$ is 
    \[ \|\W\|_2 \le \beta_1 h \|\A\|_2 + 2. \] 
    Thus, we obtain 
    \[ \|\LL\|_2\le \beta_1 h \|\A\|_2 + 3. \]
    To bound the spectral norm of $\LL^{-1}$, we decompose it as
    \[ \begin{aligned}
        \LL &= \footnotesize \begin{bmatrix}
            \W\\ & \ddots \\ && \W \\&&& \frac{1}{\sqrt{k+1}}\I_n\\
            &&&& \I_n\\ &&&&& \ddots \\ &&&&&& \I_n
        \end{bmatrix} \begin{bmatrix}\I\\ \W^{-1} \widetilde \T & \I \\&\ddots & \ddots \\ &&\W^{-1} \widetilde \T & \I  \\&&& \widetilde {\boldsymbol 1}^T\otimes \I_n & \I_n    \\
            &&&& -\I_n & \I_n \\ &&&&& \ddots & \ddots \\ &&&&&& -\I_n & \I_n\end{bmatrix}\\
        &=: \D\left(\I - \N\right),
    \end{aligned} \]
    where $\N$ is a nilpotent matrix. Moreover, defining $\widetilde \V = - \frac{1}{\sqrt{k+1}}\W^{-1}\left(\e_1 \otimes \I_n\right)$, and $\widetilde \E = \widetilde {\boldsymbol 1} \otimes \I_n$, we obtain
    \[ -\W^{-1} \widetilde \T =  \left(-\frac{1}{\sqrt{k+1}}\W^{-1} \left(\e_1 \otimes \I_n \right) \right)\left(\widetilde {\boldsymbol 1}^T \otimes \I_n\right) = \widetilde \V \widetilde \E^T.\]
    With these notations, we can express the inverse of $\LL$ as
    \[ \LL^{-1} = \sum_{j=0}^{m+p} \N^{j} \D^{-1}, \]
    where $\N^j$ for all $j \ge 1$ is a $j$-th block sub-diagonal matrix. The blocks in this sub-diagonal are given by
    \[\begin{cases}
        \left(\widetilde \V \widetilde \E^T\right)^{j}, \cdots, \left(\widetilde \V \widetilde \E^T\right)^{j}, -\widetilde \E^T  \left(\widetilde \V \widetilde \E^T\right)^{j-1}, \cdots , -\widetilde \E^T, \I_n,\cdots, \I_n, & 1\le j \le m\\
        -\widetilde \E^T  \left(\widetilde \V \widetilde \E^T\right)^{m}, \cdots , -\widetilde \E^T, \I_n,\cdots, \I_n, & j > m.
    \end{cases}\]
    The spectral norm of $\LL^{-1}$ can be bounded as follows:
    \[ \left\|\LL^{-1}\right\|_2 \le \sum_{j=0}^{m+p} \left\|\N^j \D^{-1}\right\|_2, \]
    where 
    \[ \left\|\N^j \D^{-1}\right\|_2 \le \max \left\{\left\|\left(\widetilde \V \widetilde \E^T\right)^{l}\W^{-1}\right\|_2, \left\|\widetilde \E^T \left(\widetilde \V \widetilde \E^T\right)^{l}\W^{-1}\right\|_2, \sqrt{k+1}\bigg |0\le l \le m \right\}, \]
    for all $0\le j\le m+p$. Since $\widetilde \V$ is proportional to the first block column of $\W^{-1}$, we obtain 
    \[ \widetilde \E^T \widetilde \V = \left[D_{kk}(\A h)\right]^{-1}\left(\sum_{j=0}^k d_j (\A h)^j\right) = \left[D_{kk}(\A h)\right]^{-1} N_{kk}(\A h)=R_{kk}(\A h).\]
    Substituting this into $\N^j$, we derive
    \[ \left(\widetilde \V \widetilde \E^T\right)^{j} = \widetilde \V \left(\widetilde \E^T \widetilde \V\right)^{j-1} \widetilde \E^T = \left(\I_{k+1}\otimes \left[R_{kk}(\A h )\right]^{j-1}\right) \widetilde \V \widetilde \E^T, \]
    where, in the second equality, we interchange the multiplication order of $R_{kk}(\A )$ and $\widetilde \V$. For $l\ge 1$, we obtain the following bounds
    \[ \begin{aligned}
        \left\|\left(\widetilde \V \widetilde     \E^T\right)^{l}\W^{-1}\right\|_2 &\le \left\|\left[R_{kk}(\A h)\right]^{l-1}\right\|_2 \left\|\widetilde \V \widetilde \E^T \W^{-1}\right\|_2, \\
        \left\|\widetilde \E^T \left(\widetilde \V     \widetilde \E^T\right)^{l-1}\W^{-1}\right\|_2 & \le \left\|\left[R_{kk}(\A h)\right]^{l-1}\right\|_2 \left\|\widetilde \E^T \W^{-1}\right\|_2.
    \end{aligned} \]
    Since $m$ and $k$ satisfy the condition $\left\|\I - R_{kk}^l(\A h)\exp(-l\A h)\right\|_2 \le 1$ for all $1\le l \le m$, there is a uniform bound 
    \[ \left\|\left[R_{kk}(\A h)\right]^{l}\right\|_2 \le 2\cdot C(\A),\quad \forall l = 1,\dots, m,\]
    where $C(\A) = \sup_{t\in[0, T]}\left\|\exp(\A t)\right\|_2$. Thus, we derive the bound for $\|\LL^{-1}\|_2$: 
    \begin{equation}
        \|\LL^{-1}\|_2 \le 2\cdot C(\A) (m+p) \max\left\{\|\W^{-1}\|_2, \left\|\widetilde \V \widetilde \E^T \W^{-1}\right\|_2, \left\|\widetilde \E^T \W^{-1}\right\|_2, \sqrt{k+1}\right\}.
    \end{equation}
    Up to this point, no special properties of $\A$ have been assumed. 
    \begin{itemize}
        \item If we assume $\A$ is Hermitian and negative semi-definite, we can apply \Cref{lem:W_inv_norm} and \Cref{lem:1W_inv_norm} to obtain a quite tight bound on the terms within the $\max\{\cdot\}$ operator. We have  
        \[ \begin{aligned}
            &\left\|\W^{-1}\right\|_2 \le \sqrt{(k+1)(4\log(k+1) + 1)},\\
            &\left\|\widetilde \E^T \W^{-1}\right\|_2 \le \sqrt{5k + 1}.
        \end{aligned} \] 
        Since $\|\widetilde \V\|_2 \le 1$, we can further bound
        \[\left\|\widetilde \V \widetilde \E^T \W^{-1}\right\|_2 \le \left\|\widetilde \V \right\|_2 \left\|\widetilde \E^T \W^{-1}\right\|_2 \le \left\|\widetilde \E^T \W^{-1}\right\|_2.\]
        For $k\ge 3$, we obtain
        \[\max\left\{\|\W^{-1}\|_2, \left\|\widetilde \V \widetilde \E^T \W^{-1}\right\|_2, \left\|\widetilde \E^T \W^{-1}\right\|_2, \sqrt{k+1}\right\}\le 3\sqrt{k\log k}.\]
        Moreover, in the case where $\A$ is Hermitian and negative semi-definite, we have $C(\A) = 1$, leading to  
        \[ \left\|\LL^{-1}\right\|_2 \le  6(m+p)\sqrt {k\log  k}.\]
        Finally, the condition number satisfies
        \[\kappa \le 3(m+p) \sqrt{k\log  k}\left(6 + \|\A h\|_2 \right).\]
        \item If $\A$ is an arbitrary matrix, but we choose $m = \left\lceil \|\A T\|_2\right\rceil$ such that $\|\A h\|_2 \le 1$, we can apply \Cref{lem:le1_W_inv_norm} and \Cref{lem:le11W_inv_norm} to obtain the following bounds 
        \[\begin{aligned}
            \left\|\W^{-1}\right\|_2&\le \frac{2\sqrt e}{3-e}\sqrt{(k+1)(4\log(k+1) + 1)}\\
            \left\|\widetilde \E ^T \W^{-1}\right\|_2 &\le \left(\sqrt e + \frac{2 e}{3 - e}\right)\sqrt{2k+1}.
        \end{aligned}\]
        Since $\|\widetilde \V\|_2$ can also be bounded by a constant, we conclude 
        \[ \left\|\LL^{-1}\right\|_2 = \mathcal O\left(C(\A)(m+p)\sqrt {k\log  k}\right). \]
        Finally, the condition number satisfies
        \[\kappa = \mathcal O\left(C(\A)(m+p) \sqrt{k\log  k}\right).\]
    \end{itemize}
\end{proof}
\section{Definition and Lemmas for block-encoding}\label{sec:lem_block_encoding}
\begin{defn}[block-encoding]
    Given an $n$-qubit matrix $\A\in \mathbb C^{N\times N}$ with $N = 2^n$, if we can find $\alpha, \epsilon\in \mathbb R_+$, and an $(m+n)$-qubit unitary matrix $U_{\A}$ such that
    \[ \left\|\A - \alpha\left(\langle 0^m| \otimes I_N\right) U_{\A} \left(|0^m\rangle \otimes I_N\right)\right\|_2 \le \epsilon, \]
    then $U_{\A}$ is called an $(\alpha, m, \epsilon)$-block-encoding of $\A$. In particular, if the block-encoding is exact, i.e., $\epsilon = 0$, then $U_{\A}$ is referred to as an $(\alpha, m)$-block-encoding of $\A$. 
\end{defn}
\begin{lem}\label{lem:2.1}
    Suppose $\A\in \mathbb C^{N_1\times N_1}$ , and $\B\in \mathbb C^{N_2\times N_2}$, where $U_{\A}$ is an $(\alpha_1, m_1)$-block-encoding of $\A$ and $U_{\B}$ is an $(\alpha_2, m_2)$-block-encoding of $\B$. Then, the tensor product $U_{\A}\otimes U_{\B}$ is an $(\alpha_1\alpha_2, m_1 + m_2)$-block-encoding of $\A\otimes \B$. 
\end{lem}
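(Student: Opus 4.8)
The plan is to verify the two defining properties of a block-encoding directly, exploiting the mixed-product property of the Kronecker product, namely $(X\otimes Y)(Z\otimes W) = (XZ)\otimes(YW)$. First I would note that $U_{\A}\otimes U_{\B}$ is unitary, being a tensor product of the unitaries $U_{\A}$ and $U_{\B}$; it acts on $(m_1+n_1)+(m_2+n_2)$ qubits, which, after grouping the two ancilla registers together, we read as $m_1+m_2$ ancilla qubits together with $n_1+n_2 = \log_2(N_1N_2)$ system qubits.

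Next I would compute the compressed block. Writing $P_1 := \langle 0^{m_1}|\otimes \I_{N_1}$ and $P_2 := \langle 0^{m_2}|\otimes \I_{N_2}$, with adjoints $P_1^\dagger = |0^{m_1}\rangle\otimes \I_{N_1}$ and $P_2^\dagger = |0^{m_2}\rangle\otimes \I_{N_2}$, the mixed-product property gives
\[
    (P_1\otimes P_2)\,(U_{\A}\otimes U_{\B})\,(P_1^\dagger\otimes P_2^\dagger) = \bigl(P_1 U_{\A} P_1^\dagger\bigr)\otimes\bigl(P_2 U_{\B} P_2^\dagger\bigr).
\]
Since $U_{\A}$ is an exact $(\alpha_1,m_1)$-block-encoding of $\A$, the first factor equals $\A/\alpha_1$; likewise the second equals $\B/\alpha_2$. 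Hence the compressed block is exactly $(\A\otimes\B)/(\alpha_1\alpha_2)$, i.e.\ $\A\otimes\B = \alpha_1\alpha_2\,(P_1\otimes P_2)(U_{\A}\otimes U_{\B})(P_1^\dagger\otimes P_2^\dagger)$ with zero error, which is the required identity once the ancilla registers are read as one block of $m_1+m_2$ qubits.

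The only subtlety — and the closest thing to an obstacle — is purely notational: $P_1\otimes P_2$ is not literally of the form $\langle 0^{m_1+m_2}|\otimes \I_{N_1N_2}$ demanded by the definition, since the qubits are ordered as (ancilla$_1$, system$_1$, ancilla$_2$, system$_2$) rather than (ancilla$_1$, ancilla$_2$, system$_1$, system$_2$). I would dispose of this by conjugating with the fixed qubit-permutation unitary $\Pi$ that reorders the registers into the standard layout: because $\Pi$ is unitary and intertwines the two orderings of the $|0\rangle$-projections, $\Pi(U_{\A}\otimes U_{\B})\Pi^\dagger$ is the asserted $(\alpha_1\alpha_2, m_1+m_2)$-block-encoding of $\A\otimes\B$. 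This register-reordering convention is already used implicitly elsewhere in the paper, so no further ideas are needed and the remaining calculation is routine.
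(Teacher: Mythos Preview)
Your proposal is correct and follows essentially the same approach as the paper's proof: both simply tensor together the two defining identities and invoke the mixed-product property $(XZ)\otimes(YW)=(X\otimes Y)(Z\otimes W)$ to identify the top-left block of $U_{\A}\otimes U_{\B}$ with $(\A\otimes\B)/(\alpha_1\alpha_2)$. You are in fact slightly more careful than the paper, which stops at the expression $\langle 0^{m_1}|\otimes I_{N_1}\otimes\langle 0^{m_2}|\otimes I_{N_2}$ and declares the result without explicitly mentioning the register-permutation unitary $\Pi$ needed to bring this into the canonical $\langle 0^{m_1+m_2}|\otimes I_{N_1N_2}$ form.
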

\begin{proof}
    By definition of block-encoding, we have 
    \[ \begin{aligned}
        \A &= \alpha_1 \left(\langle 0^{m_1}| \otimes I_{N_1}\right) U_{\A} \left(|0^{m_1}\rangle \otimes I_{N_1}\right),\\
        \B &= \alpha_2 \left(\langle 0^{m_2}| \otimes I_{N_2}\right) U_{\B} \left(|0^{m_2}\rangle \otimes I_{N_2}\right).
    \end{aligned} \]
    Thus, their tensor product satisfies
    \[ \begin{aligned}
        \A\otimes \B &= \alpha_1 \alpha_2 \left[\left(\langle 0^{m_1}| \otimes I_{N_1}\right) U_{\A} \left(|0^{m_1}\rangle \otimes I_{N_1}\right)\right]\otimes \left[\left(\langle 0^{m_2}| \otimes I_{N_2}\right) U_{\B} \left(|0^{m_2}\rangle \otimes I_{N_2}\right)\right] \\
        &= \alpha_1 \alpha_2 \left[\langle 0^{m_1}| \otimes I_{N_1} \otimes \langle 0^{m_2}| \otimes I_{N_2}\right] \left[U_{\A}\otimes U_{\B}\right] \left[|0^{m_1}\rangle \otimes I_{N_1}\otimes |0^{m_2}\rangle \otimes I_{N_2} \right].
    \end{aligned}\]
    This confirms that $U_{\A}\otimes U_{\B}$ is an $(\alpha_1\alpha_2, m_1 + m_2)$-block-encoding of $\A\otimes \B$. 
\end{proof}
\begin{lem}[Lemma 53 in \cite{10.1145/3313276.3316366}]\label{lem:prod}
    If $U_{\A}$ is an $(\alpha, a, \delta)$-block-encoding of an $s$-qubit operator $\A$, and $U_{\B}$ is a $(\beta, b, \epsilon)$-block-encoding of an $s$-qubit operator $\B$, then $(I_b\otimes U_{\A})(I_a\otimes U_{\B})$ is an $(\alpha\beta, a + b, a\epsilon + \beta\delta)$-block-encoding of $\A\B$.
\end{lem}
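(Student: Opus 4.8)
The strategy is to verify directly the two conditions in the definition of a block-encoding for the composed unitary $V:=(I_b\otimes U_{\A})(I_a\otimes U_{\B})$, which acts on $R_a\otimes R_b\otimes R_s$, where $R_a$ carries the $a$ ancilla qubits of $U_{\A}$, $R_b$ the $b$ ancilla qubits of $U_{\B}$, and $R_s$ the $s$ system qubits. First I would fix notation: write $\widetilde{\A}:=\alpha\,(\langle 0^a|\otimes I)\,U_{\A}\,(|0^a\rangle\otimes I)$ and $\widetilde{\B}:=\beta\,(\langle 0^b|\otimes I)\,U_{\B}\,(|0^b\rangle\otimes I)$ for the operators that $U_{\A}$ and $U_{\B}$ block-encode exactly up to scaling. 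By hypothesis $\|\A-\widetilde{\A}\|_2\le\delta$ and $\|\B-\widetilde{\B}\|_2\le\epsilon$, and since $\widetilde{\A}/\alpha$ and $\widetilde{\B}/\beta$ are submatrices of unitaries we have $\|\widetilde{\A}\|_2\le\alpha$ and $\|\widetilde{\B}\|_2\le\beta$ (and $\|\B\|_2\le\beta$ under the usual normalization convention). The error parameter appears as $a\epsilon$ in the statement but should read $\alpha\epsilon$, matching Lemma~53 of \cite{10.1145/3313276.3316366}.

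The key step would be to establish the algebraic identity $\alpha\beta\,(\langle 0^{a+b}|\otimes I)\,V\,(|0^{a+b}\rangle\otimes I)=\widetilde{\A}\widetilde{\B}$, read as an equality of operators on $R_s$. Expanding $U_{\A}=\sum_{k,l}|k\rangle\!\langle l|_{R_a}\otimes A_{kl}$ and $U_{\B}=\sum_{i,j}|i\rangle\!\langle j|_{R_b}\otimes B_{ij}$ with $A_{kl},B_{ij}$ operators on $R_s$ and $A_{00}=\widetilde{\A}/\alpha$, $B_{00}=\widetilde{\B}/\beta$, and using that $I_b\otimes U_{\A}$ acts trivially on $R_b$ while $I_a\otimes U_{\B}$ acts trivially on $R_a$, one finds
\[
(\langle 0^{a+b}|\otimes I)\,V\,(|0^{a+b}\rangle\otimes I)=\sum_{k,l,i,j}\langle 0|k\rangle\langle l|0\rangle\,\langle 0|i\rangle\langle j|0\rangle\;A_{kl}B_{ij}=A_{00}B_{00}=\frac{\widetilde{\A}\widetilde{\B}}{\alpha\beta},
\]
all cross terms being annihilated by the ancilla inner products.

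The error estimate is then routine: adding and subtracting $\widetilde{\A}\B$ and using the triangle inequality together with submultiplicativity of the spectral norm gives
\[
\bigl\|\A\B-\widetilde{\A}\widetilde{\B}\bigr\|_2\le\bigl\|(\A-\widetilde{\A})\B\bigr\|_2+\bigl\|\widetilde{\A}(\B-\widetilde{\B})\bigr\|_2\le\delta\,\|\B\|_2+\|\widetilde{\A}\|_2\,\epsilon\le\beta\delta+\alpha\epsilon,
\]
so, combined with the identity above, $V$ is an $(\alpha\beta,\,a+b,\,\alpha\epsilon+\beta\delta)$-block-encoding of $\A\B$. I do not expect any genuine obstacle; the one place that needs care is the tensor-factor bookkeeping in the middle paragraph — keeping straight which operators act on $R_a$, $R_b$, and $R_s$ so that the two block-encoding identities decouple cleanly — after which everything reduces to the triangle inequality.
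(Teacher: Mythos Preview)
The paper does not prove this lemma; it merely quotes it as Lemma~53 of \cite{10.1145/3313276.3316366}. Your direct verification is correct and is essentially the standard argument: compute the $(0^{a+b},0^{a+b})$ block of $(I_b\otimes U_{\A})(I_a\otimes U_{\B})$ to get $\widetilde{\A}\widetilde{\B}/(\alpha\beta)$, then bound $\|\A\B-\widetilde{\A}\widetilde{\B}\|_2$ by the triangle inequality. You also correctly flag the typo in the stated error bound ($a\epsilon$ should be $\alpha\epsilon$). One small caveat: the inequality $\|\B\|_2\le\beta$ does not follow from the block-encoding definition alone when $\epsilon>0$; it is an additional (standard) normalization assumption, or alternatively you can split as $\A\B-\A\widetilde{\B}+\A\widetilde{\B}-\widetilde{\A}\widetilde{\B}$ and use $\|\widetilde{\B}\|_2\le\beta$ together with $\|\A\|_2\le\alpha$, which is the convention used throughout this paper.
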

\begin{lem}[Linear Combination of Unitaries \cite{7354428}]\label{lem:2.2}
    Let $M = \sum_{i}\alpha_i U_i$ be a linear combination of unitaries $U_i$ with $\alpha_i > 0$ for all $i$. Define $V$ as an operator satisfying $V|0^m\rangle := \frac1{\sqrt \alpha}\sum_{i}\sqrt{\alpha_i}|i\rangle$, where $\alpha:= \sum_i \alpha_i$. Then, the operator $W:= V^\dagger U V$ satisfies 
    \[W|0^m\rangle |\psi\rangle = \frac1\alpha |0^m\rangle M|\psi\rangle + |\perp\rangle\]
    for all states $|\psi\rangle$, where $U:= \sum_i|i\rangle \langle i|\otimes U_i$ and $(|0^m\rangle \langle 0^m|\otimes I)|\perp\rangle = 0$.
\end{lem}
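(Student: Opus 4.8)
The plan is to verify the identity by a direct computation, tracking the successive action of the three unitaries on $|0^m\rangle|\psi\rangle$. Throughout, the factors $V$ and $V^\dagger$ in $W = V^\dagger U V$ act on the $m$-qubit ancilla register (i.e.\ stand for $V\otimes I$ and $V^\dagger\otimes I$), while $U = \sum_i|i\rangle\langle i|\otimes U_i$ acts jointly on the ancilla and on the register carrying $|\psi\rangle$. First I would record two preliminary observations. The vector $\tfrac1{\sqrt\alpha}\sum_i\sqrt{\alpha_i}\,|i\rangle$ has squared norm $\tfrac1\alpha\sum_i\alpha_i = 1$, so it can legitimately be taken as $V|0^m\rangle$ for some unitary $V$, and only this action of $V$ on $|0^m\rangle$ will be used. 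Moreover $U$ is unitary, being block diagonal in the orthonormal ancilla basis $\{|i\rangle\}$ with unitary diagonal blocks $U_i$, so $W$ is unitary.

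Then I would carry out the computation in two steps. Applying $V$ followed by $U$ yields
\[
U\,(V\otimes I)\,|0^m\rangle|\psi\rangle \;=\; U\Big(\tfrac1{\sqrt\alpha}\sum_i\sqrt{\alpha_i}\,|i\rangle|\psi\rangle\Big) \;=\; \tfrac1{\sqrt\alpha}\sum_i\sqrt{\alpha_i}\,|i\rangle\,U_i|\psi\rangle .
\]
To extract the $|0^m\rangle$-component of $W|0^m\rangle|\psi\rangle$, I project with $\langle 0^m|\otimes I$ and use $\langle 0^m|V^\dagger = (V|0^m\rangle)^\dagger = \tfrac1{\sqrt\alpha}\sum_j\sqrt{\alpha_j}\langle j|$ (the $\sqrt{\alpha_i}$ are real since $\alpha_i>0$) together with $\langle j|i\rangle = \delta_{ij}$:
\[
(\langle 0^m|\otimes I)\,W\,|0^m\rangle|\psi\rangle \;=\; \tfrac1{\sqrt\alpha}\sum_j\sqrt{\alpha_j}\langle j|\;\cdot\;\tfrac1{\sqrt\alpha}\sum_i\sqrt{\alpha_i}\,|i\rangle\,U_i|\psi\rangle \;=\; \tfrac1\alpha\sum_i\alpha_i\,U_i|\psi\rangle \;=\; \tfrac1\alpha\,M|\psi\rangle .
\]
Defining $|\perp\rangle := W|0^m\rangle|\psi\rangle - \tfrac1\alpha\,|0^m\rangle\otimes M|\psi\rangle = (I - |0^m\rangle\langle 0^m|\otimes I)\,W|0^m\rangle|\psi\rangle$, one has $(|0^m\rangle\langle 0^m|\otimes I)|\perp\rangle = 0$ by construction, so $W|0^m\rangle|\psi\rangle = \tfrac1\alpha|0^m\rangle M|\psi\rangle + |\perp\rangle$, which is exactly the claim.

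There is essentially no obstacle here: this is the standard ``prepare--select--unprepare'' identity underlying Linear Combination of Unitaries, and the proof is a one-line bookkeeping computation. The only things deserving care are keeping track of which tensor factor each of $V$, $U$, $V^\dagger$ acts on, and noticing that $V$ need only be pinned down through its action on $|0^m\rangle$ (the completion to a full unitary is irrelevant, since the argument only ever evaluates $V|0^m\rangle$ and its adjoint). No hypothesis beyond unitarity of the $U_i$ and positivity of the $\alpha_i$ enters the argument.
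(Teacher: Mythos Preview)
The paper does not supply its own proof of this lemma: it is quoted verbatim as a cited result from the literature, so there is nothing to compare against. Your direct ``prepare--select--unprepare'' computation is correct and is exactly the standard argument used to establish the LCU identity.
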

\begin{lem}[Adjust the block-encoding parameters]\label{lem:adjust}
    Suppose $U_{\A}$ is an $(\alpha, m, \epsilon)$-block-encoding of $\A$. Then. there exist a $\theta$ such that $R_Y(\theta)\otimes U_{\A}$ is a $\left(\beta, m+1, \epsilon\right)$-block-encoding of $\A$, where $\beta > \alpha$ and 
    \[R_Y(\theta) = \begin{bmatrix}\cos\left(\frac\theta2\right)&-\sin\left(\frac\theta2\right)\\\sin\left(\frac\theta2\right)&\cos\left(\frac\theta2\right)\end{bmatrix}.\]
\end{lem}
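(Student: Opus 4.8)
The plan is to verify the block-encoding condition directly, exploiting the tensor-product structure so that appending the single-qubit rotation $R_Y(\theta)$ merely rescales the encoded block by $\cos(\theta/2)$. First I would note that, since $R_Y(\theta)$ is an orthogonal $2\times 2$ matrix and $U_{\A}$ is unitary, $R_Y(\theta)\otimes U_{\A}$ is a unitary acting on $1+m$ ancilla qubits together with the $n$ system qubits, so it is a legitimate candidate for an $(m+1)$-ancilla block-encoding of $\A$; it then remains only to identify its top-left block and match the normalization.

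For the computation I would use $|0^{m+1}\rangle = |0\rangle\otimes|0^m\rangle$, with the leading qubit carrying $R_Y(\theta)$ and the remaining $m$ ancilla qubits carrying the ancilla register of $U_{\A}$, to factor
\[
\bigl(\langle 0^{m+1}|\otimes I\bigr)\bigl(R_Y(\theta)\otimes U_{\A}\bigr)\bigl(|0^{m+1}\rangle\otimes I\bigr)
= \bigl(\langle 0|R_Y(\theta)|0\rangle\bigr)\cdot\bigl(\langle 0^m|\otimes I\bigr)U_{\A}\bigl(|0^m\rangle\otimes I\bigr),
\]
and then read off $\langle 0|R_Y(\theta)|0\rangle = \cos(\theta/2)$ from the stated matrix form of $R_Y(\theta)$. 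Given any $\beta>\alpha$, I would set $\theta := 2\arccos(\alpha/\beta)$, which is well defined because $0\le \alpha/\beta<1$, so that $\beta\cos(\theta/2)=\alpha$. Combining this with the hypothesis $\bigl\|\A - \alpha\bigl(\langle 0^m|\otimes I\bigr)U_{\A}\bigl(|0^m\rangle\otimes I\bigr)\bigr\|_2\le\epsilon$ gives
\[
\bigl\|\A - \beta\bigl(\langle 0^{m+1}|\otimes I\bigr)\bigl(R_Y(\theta)\otimes U_{\A}\bigr)\bigl(|0^{m+1}\rangle\otimes I\bigr)\bigr\|_2
= \bigl\|\A - \alpha\bigl(\langle 0^m|\otimes I\bigr)U_{\A}\bigl(|0^m\rangle\otimes I\bigr)\bigr\|_2 \le \epsilon,
\]
which is exactly the assertion that $R_Y(\theta)\otimes U_{\A}$ is a $(\beta,m+1,\epsilon)$-block-encoding of $\A$.

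There is no substantive obstacle in this lemma; it is a routine ``padding'' argument. The only points that deserve a moment of care are (i) checking that the required angle exists, i.e.\ that $\alpha/\beta$ lies in $[0,1)$ — this is exactly the hypothesis $\beta>\alpha\ge 0$, with the degenerate case $\alpha=0$ yielding $\theta=\pi$ — and (ii) being explicit that the factorization of the projected operator is valid because $|0^{m+1}\rangle$ splits as $|0\rangle\otimes|0^m\rangle$ and the two tensor factors $R_Y(\theta)$ and $U_{\A}$ act on disjoint registers. This lemma is used repeatedly in \cref{sec:block_encode} to rescale and align the normalization factors of the constituent block-encodings.
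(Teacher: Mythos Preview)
Your proposal is correct and follows essentially the same approach as the paper's own proof: choose $\theta = 2\arccos(\alpha/\beta)$, use the tensor factorization $|0^{m+1}\rangle = |0\rangle\otimes|0^m\rangle$ to pull out the scalar $\langle 0|R_Y(\theta)|0\rangle = \cos(\theta/2)$, and reduce directly to the original block-encoding inequality. Your write-up is in fact slightly more careful than the paper's, in that you explicitly note unitarity of the tensor product and the well-definedness of $\arccos(\alpha/\beta)$.
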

\begin{proof}
    Let $U_{\A}' = R_Y(\theta)\otimes U_{\A}$, where $\theta = 2\arccos \frac{\alpha}{\beta}$. Then, we have 
    \[ \begin{aligned}
        &\left\|\A - \beta\left(\langle 0^{m+1}| \otimes I_N\right) U_{\A}' \left(|0^{m+1}\rangle \otimes I_N\right)\right\|_2\\
        =& \left\|\A - \beta\cos \frac\theta2 \left(\langle 0^{m}| \otimes I_N\right) U_{\A} \left(|0^{m}\rangle \otimes I_N\right)\right\|_2\\
        =& \left\|\A - \alpha \left(\langle 0^{m}| \otimes I_N\right) U_{\A} \left(|0^{m}\rangle \otimes I_N\right)\right\|_2\le \epsilon. 
    \end{aligned} \]
\end{proof}
\section{Analysis of the approximation accuracy and success probability}
\subsection{Proofs of \Cref{lem:sol_err0} and \Cref{lem:sol_err}}\label{pf:lem4.12}
As a preliminary step for proving \Cref{lem:sol_err0} and \Cref{lem:sol_err}, we introduce two lemmas that are slight modifications of the result found in Appendix A of \cite{moler2003nineteen}.
\begin{lem}\label{lem:D_inv_norm}
    If $\|\A\|_2\le 1$ and $k \ge 1$, then $\left\|D_{kk}^{-1}(\A)\right\|_2 \le \frac{2}{3-e}$.
\end{lem}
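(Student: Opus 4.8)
The plan is to realize $D_{kk}(\A)$ as a perturbation of the identity and apply a Neumann-series estimate. Write $D_{kk}(\A) = \I - \E$ with $\E := \I - D_{kk}(\A) = -\sum_{j=1}^{k} d_j(-\A)^j$. If one can show $\|\E\|_2 < 1$, then the Neumann series $D_{kk}(\A)^{-1} = (\I - \E)^{-1} = \sum_{\ell\ge 0}\E^\ell$ converges and $\bigl\|D_{kk}(\A)^{-1}\bigr\|_2 \le (1 - \|\E\|_2)^{-1}$. So the whole proof reduces to a good numerical bound on $\|\E\|_2$.

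For that bound, submultiplicativity and the triangle inequality together with the hypothesis $\|\A\|_2\le 1$ give $\|\E\|_2 \le \sum_{j=1}^{k} d_j\,\|\A\|_2^{j} \le \sum_{j=1}^{k} d_j$. Now write $d_j = \frac{1}{j!}\prod_{i=0}^{j-1}\frac{k-i}{2k-i}$ (from \eqref{eq:pade_D} with $p=q=k$). Since $\frac{k-i}{2k-i}\le\frac12$ for every $i\ge 0$ (the factor at $i=0$ equals exactly $\frac12$, and $2k-i>0$ throughout the range $0\le i\le j-1\le k-1$), we obtain $d_j \le \frac{1}{2\,j!}$ for all $j\ge 1$. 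Summing, $\sum_{j=1}^{k} d_j \le \frac12\sum_{j\ge 1}\frac1{j!} = \frac{e-1}{2} < 1$. Substituting into the Neumann bound yields $\bigl\|D_{kk}(\A)^{-1}\bigr\|_2 \le \frac{1}{1-(e-1)/2} = \frac{2}{3-e}$, which is the claim. (Alternatively, one may use the sharper estimate $\prod_{i=0}^{j-1}\frac{k-i}{2k-i}\le 2^{-j}$ already invoked in the proof of \cref{lem:Dkk-1}, which gives $\sum_{j\ge1}d_j\le e^{1/2}-1$ and hence the stronger $\bigl\|D_{kk}(\A)^{-1}\bigr\|_2 \le (2-\sqrt e)^{-1}$; this still implies the stated bound since $2-\sqrt e\ge\frac{3-e}{2}$ is equivalent to $(\sqrt e-1)^2\ge 0$.)

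There is no real obstacle in this argument; the only two points that need a line of care are (i) verifying the strict inequality $\sum_{j\ge1}d_j<1$ so that the Neumann series is legitimate, and (ii) the elementary combinatorial inequality $\prod_{i=0}^{j-1}\frac{k-i}{2k-i}\le\frac12$, which is immediate from $\frac{k-i}{2k-i}\le\frac12\iff i\ge 0$. Everything else is the triangle inequality and $\|\A^j\|_2\le\|\A\|_2^j$.
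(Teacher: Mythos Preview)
Your proof is correct and follows essentially the same approach as the paper: write $D_{kk}(\A)$ as a perturbation of the identity, bound the perturbation via $d_j\le \frac{1}{j!}\cdot 2^{-j}$ (or the weaker $\frac{1}{2\,j!}$) to get $\|\E\|_2\le\frac{e-1}{2}$, and apply the Neumann-series estimate. The only cosmetic difference is that the paper writes $D_{kk}(\A)=\I+\F$ rather than $\I-\E$, and bounds $\sum_{j\ge1}(\tfrac12)^j/j!$ slightly differently to reach the same $\frac{e-1}{2}$.
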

\begin{proof}
    From the definition of $D_{pq}$ as in \cref{eq:pade_D}, we have $D_{kk}(\A) = \I + \F$, where
    \[\F = \sum_{j=1}^k \frac{(2k-j)! k!}{(2k)!(k-j)!}\frac{(-\A)^j}{j!}.\]
    Using the fact that 
    \[\frac{(2k-j)! k!}{(2k)!(k-j)!} \le \left(\frac12\right)^j,\]
    we find that
    \[\left\|\F\right\|_2 \le \sum_{j=1}^k \left[\frac12 \left\|\A\right\|_2\right]^j\frac1{j!}\le \frac12\|\A\|_2 (e-1) \le \frac{e-1}{2}.\]
    Therefore,
    \[\left\|D_{kk}(\A)^{-1}\right\|_2=\left\|(\I+\F)^{-1}\right\|_2 \le 1 /(1-\|\F\|_2) \le \frac{2}{3-e}.\]
\end{proof}
\begin{lem}\label{lem:A2}
    If $\|\A\|_2\le 1$ and $k\ge 1$. Define
    \[\G := e^{-\A} R_{kk}(\A) - \I.\]
    Then, the following bounds hold
    \[\|\G\|_2 \le 20\cdot  \frac{k! k!}{(2k)!(2k + 1)!}, \quad \text{and} \quad \|\G\A^{-1}\|_2 \le 20\cdot  \frac{k! k!}{(2k)!(2k + 1)!}.\]
\end{lem}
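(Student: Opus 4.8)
The plan is to build on the classical integral representation of the diagonal Padé error, which is the content of Appendix A of \cite{moler2003nineteen} specialized to $p=q=k$: whenever $D_{kk}(\A)$ is invertible,
\begin{equation*}
    e^{\A} - R_{kk}(\A) = \frac{(-1)^k}{(2k)!}\,\A^{2k+1}\,D_{kk}(\A)^{-1}\int_0^1 e^{(1-u)\A}\,u^k(1-u)^k\,\diff u .
\end{equation*}
Under the hypothesis $\|\A\|_2 \le 1$, \cref{lem:D_inv_norm} guarantees that $D_{kk}(\A)$ is invertible, so this identity applies. I would either cite it directly or give a one-line derivation by repeated integration by parts in the scalar variable; since every matrix appearing (the powers of $\A$, $D_{kk}(\A)^{-1}$, and the exponentials) is a function of $\A$, all of them commute, and the scalar argument transfers verbatim to the matrix case.

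Next I would left-multiply by $e^{-\A}$ and exploit commutativity to collapse the two exponentials, $e^{-\A}e^{(1-u)\A} = e^{-u\A}$, obtaining
\begin{equation*}
    \G = -\bigl(\I - e^{-\A}R_{kk}(\A)\bigr) = \frac{(-1)^{k+1}}{(2k)!}\,\A^{2k+1}\,D_{kk}(\A)^{-1}\int_0^1 e^{-u\A}\,u^k(1-u)^k\,\diff u .
\end{equation*}
This recombination is the crucial step: it leaves only a single exponential factor inside the integral, which is exactly what keeps the final constant below $20$ rather than letting it blow up.

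Then I would take spectral norms termwise using submultiplicativity: $\|\A^{2k+1}\|_2 \le \|\A\|_2^{2k+1} \le 1$, $\|D_{kk}(\A)^{-1}\|_2 \le \frac{2}{3-e}$ by \cref{lem:D_inv_norm}, $\|e^{-u\A}\|_2 \le e^{u\|\A\|_2} \le e$ for $u \in [0,1]$, and $\int_0^1 u^k(1-u)^k\,\diff u = \frac{k!\,k!}{(2k+1)!}$. Multiplying these out gives
\begin{equation*}
    \|\G\|_2 \le \frac{1}{(2k)!}\cdot\frac{2}{3-e}\cdot e\cdot\frac{k!\,k!}{(2k+1)!} = \frac{2e}{3-e}\cdot\frac{k!\,k!}{(2k)!(2k+1)!} \le 20\cdot\frac{k!\,k!}{(2k)!(2k+1)!},
\end{equation*}
since $\frac{2e}{3-e} \approx 19.3 < 20$. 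For the second inequality, the same identity with $\A^{2k+1}\A^{-1} = \A^{2k}$ in place of $\A^{2k+1}$ (read directly off the formula, so invertibility of $\A$ is not even needed) gives $\|\G\A^{-1}\|_2 \le \|\A\|_2^{2k}\cdot\frac{2e}{3-e}\cdot\frac{k!\,k!}{(2k)!(2k+1)!}$, and $\|\A\|_2^{2k} \le 1$ finishes it.

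The only genuine obstacle is pinning down the integral representation precisely (the sign $(-1)^k$, the placement of $D_{kk}(\A)^{-1}$, and the Beta-function kernel $u^k(1-u)^k$) and confirming the matrix version is legitimate; once that is in hand the estimate is a routine chain of submultiplicative bounds, and the one delicate point is the recombination $e^{-\A}e^{(1-u)\A} = e^{-u\A}$, without which the constant would degrade to $\frac{2e^2}{3-e} \approx 52$ and exceed the claimed bound.
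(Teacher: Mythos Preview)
Your proposal is correct and follows essentially the same argument as the paper: the integral remainder formula from \cite{moler2003nineteen}, left-multiplication by $e^{-\A}$ to collapse the exponential to $e^{-u\A}$, then the termwise estimate $\|\A\|_2^{2k+1}\le 1$, $\|D_{kk}(\A)^{-1}\|_2\le \frac{2}{3-e}$, $\|e^{-u\A}\|_2\le e$, and the Beta integral, with the $\G\A^{-1}$ case handled by replacing $\A^{2k+1}$ with $\A^{2k}$. Your sign $(-1)^{k+1}$ is in fact the correct one (the paper drops a sign when passing to $\G$), but of course this is immaterial for the norm bound.
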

\begin{proof}
    Using the remainder theorem for Pad\'{e} approximation \cite{https://doi.org/10.1002/sapm1961401220,moler2003nineteen}, we have 
    \begin{equation}\label{eq:remainder}
        R_{kk}(\A)=e^{\A}-\frac{(-1)^k}{(2k)!} \A^{2k+1} D_{kk}(\A)^{-1} \int_0^1 e^{(1-u) \A } u^k(1-u)^k \diff u.
    \end{equation}
    Therefore, 
    \[\G := e^{-\A}R_{kk}(\A) - \I = \frac{(-1)^k}{(2k)!} \A^{2k+1} D_{kk}(\A)^{-1} \int_0^1 e^{-u \A } u^k(1-u)^k \diff u.\]
    Using \Cref{lem:D_inv_norm}, we obtain 
    \[\begin{aligned}
        \left\|\G\right\|_2 &\le  \frac{2e}{3-e}\frac{1}{(2k)!} \int_0^1 u^k(1-u)^k \diff u
        \le 20 \frac{k! k!}{(2k)!(2k+1)!}.
    \end{aligned}\]
    Similarly, the inequality for $\|\G \A^{-1}\|_2$ can be derived in the same manner.
\end{proof}

\begin{proof}[Proof of \Cref{lem:sol_err0}]
    It is straightforward to verify that 
    \[\widehat \x(ih) = R^i_{kk}(\A h) \x_0 + \left(R^i_{kk}(\A h) - \I \right)\A^{-1}\bb,\quad\forall i = 1, \dots, m.\]
    Since
    \[\x(ih) = e^{i \A h} \x_0 + \left(e^{i \A h} - \I \right)\A^{-1}\bb,\quad\forall i = 1, \dots, m,\]
    we have 
    \[\begin{aligned}
        \x(ih) - \widehat \x(ih) &= \left(\I - e^{-i \A h}R^i_{kk}(\A h)\right)\x(ih) + \left(\I - e^{-i \A h}R^i_{kk}(\A h)\right)\A^{-1}\bb.
    \end{aligned}\]
    Thus, we obtain the following inequality
    \begin{equation}\label{eq:diff}
        \begin{aligned}
            \left\|\x(ih) - \widehat \x(ih)\right\|_2 &\le \left\|\I - e^{-i \A h} R^i_{kk}(\A h) \right\|_2 \left\|\x(ih)\right\|_2 + \left\|\left(\I - e^{-i \A h} R^i_{kk}(\A h)\right)\A^{-1}\right\|_2 \|\bb\|_2.
        \end{aligned}
    \end{equation}
    Let 
    \[\G = e^{-\A h} R_{kk}(\A h) - \I,\]
    then by \Cref{lem:A2} and condition \cref{eq:cond_k}, we have  $\|\G\|_2 \le \frac\delta 5$. Using the conditions $\delta \in \left(0, \frac1m\right)$ and $\|\A T\|_2 \ge 1$, we deduce that 
    \[m = \lceil \|\A T\|_2\rceil \le 2\|\A T\|_2 \quad \text{and} \quad m\|\G\|_2 \le \frac{m\delta}{5} < 1.\]
    Thus, for all $i = 1, \dots, m$,   
    \begin{equation}\label{ieq:G}
        \begin{aligned}
        \left\|\I - e^{-i \A h}R^i_{kk}(\A h) \right\|_2 &= \left\|\I - (\I + \G)^i\right\|_2\\
        &\le \|\G\|_2\cdot \left(1 + (1 + \|\G\|_2) + \cdots (1 + \|\G\|_2)^{m-1}\right)\\
        &= (1 + \|\G\|_2)^m - 1\\
        &\le e^{m \|\G\|_2} - 1\le (e-1) m\|\G\|_2 \\
        &\le \frac{2(e-1)\delta}{5} \|\A T\|_2 \le \delta \|\A T\|.
    \end{aligned}
    \end{equation}
    For the second term in \cref{eq:diff}, we have 
    \[\begin{aligned}
        \left(\I - e^{-i \A h}R^i_{kk}(\A h)\right)\A^{-1} &= \left(\I - e^{-\A h}R_{kk}(\A h)\right)\A^{-1}\cdot \sum_{l=0}^{i-1} e^{-l \A h} R_{kk}^l(\A h)\\
        &= \G \A^{-1}\cdot \sum_{l=0}^{i-1} (\I + \G)^l.
    \end{aligned}\]
    Thus,
    \[\begin{aligned}
        \left\|\left(\I - e^{-i \A h} R^i_{kk}(\A h)\right)\A^{-1}\right\|_2 &\le h\cdot \left\|\G\cdot (\A h)^{-1} \right\|_2 \cdot \left(1 + (1 + \|\G\|_2) + \cdots (1 + \|\G\|_2)^{m-1}\right)  \\
        &\le h \left((1 + \frac\delta 5)^m - 1\right)\le  \delta h \|\A T\|\le \delta T, \quad \forall i = 1,\dots m,
    \end{aligned}\]
    where we use \Cref{lem:A2} again, yielding $\left\|\G\cdot (\A h)^{-1} \right\|_2\le \frac\delta 5$. Finally, we obtain 
    \[\begin{aligned}
        \left\|\x(ih) - \widehat \x(ih)\right\|_2 &\le \delta T \left(\|\A\|_2 \left\|\x(ih)\right\|_2 + \|\bb\|_2\right), \quad \forall i = 1, \dots, m.
    \end{aligned}\]
\end{proof}
\begin{proof}[proof of \Cref{lem:sol_err}]
    As in the proof of \Cref{lem:sol_err0}, we need to bound the terms $\left\|\I - e^{-i \A h} R^i_{kk}(\A h) \right\|_2$ and $\left\|\left(\I - e^{-i \A h}R^i_{kk}(\A h)\right)\A^{-1}\right\|_2$ for all $i = 1, \dots, m$. Let $\G = e^{-\A h} R_{kk}(\A h) - \I$. Then, we have $\|\G\|\le f_k(\theta)$, where $\theta:= \|\A h\|$. 
    
    For the first term, as shown in \cref{ieq:G}, we have 
    \[\begin{aligned}
        \left\|\I - e^{-i \A h} R^i_{kk}(\A h) \right\|_2 &\le e^{m\|\G\|_2} - 1 \le e^{mf_k(\theta)} - 1 \le (e-1) m f_k(\theta) \le \delta \|\A T\|_2, \quad \forall i = 1, \dots, m
    \end{aligned}\]
    where we use the condition $\delta\in \left(0, \frac{1}{\|\A T\|_2}\right)$ and $f_k(\theta) \le \frac{\theta \delta}{e-1}$.
    
    For the second term, we have 
    \[\left\|\G\cdot (\A h)^{-1} \right\|_2 \le \frac{f_k(\theta)}{\theta} \le \frac{\delta}{e - 1},\]
    and thus 
    \[\begin{aligned}
        \left\|\left(\I - e^{-i \A h} R^i_{kk}(\A h)\right)\A^{-1}\right\|_2 &\le h\cdot \left\|\G\cdot (\A h)^{-1} \right\|_2 \cdot \left(1 + (1 + \|\G\|_2) + \cdots (1 + \|\G\|_2)^{m-1}\right)  \\
        &\le \frac h\theta \left(\left(1 + \frac{\delta \theta}{e - 1}\right)^m - 1\right)\\
        &\le  \frac h\theta (e-1) m \frac{\delta \theta}{e - 1} = \delta T, \quad \forall i = 1,\dots m.
    \end{aligned}\]
    In conclusion, we obtain the same error bound as in \Cref{lem:sol_err0}.
\end{proof}

\subsection{Proof of \Cref{thm:succ_prob}}\label{sec:succ_prob}
\begin{lem}\label{obs:dkk}
    Let $D_{kk}(\cdot)$ be the denominator of the $(k,k)$ Pad\'{e} approximation, as defined in \cref{eq:pade}. For any $\lambda \ge 0$, we have the following inequalities: 
    \[D_{kk}(-\lambda) \ge 1, \quad \text{and} \quad 0\le \frac{D_{kk}(-\lambda)-1}{\lambda D_{kk}(-\lambda)} \le 1.\]
\end{lem}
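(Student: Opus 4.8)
The plan is to read off everything from the explicit power-series form of $D_{kk}(-\lambda)$ together with the elementary bound already recorded in \cref{lem:Dkk-1}. For $p=q=k$ the coefficients $d_j$ of \cref{eq:pade_D} are all strictly positive, with $d_0=1$ (and incidentally $d_1=\tfrac12$), so $D_{kk}(-\lambda)=\sum_{j=0}^{k} d_j\lambda^{j}$ is a polynomial in $\lambda$ with non-negative coefficients. Evaluating at $\lambda\ge 0$ gives $D_{kk}(-\lambda)\ge d_0=1$, which is the first claimed inequality; it also shows the denominator $\lambda D_{kk}(-\lambda)$ is positive for $\lambda>0$, while the numerator $D_{kk}(-\lambda)-1=\sum_{j=1}^{k} d_j\lambda^{j}$ is non-negative, so the quotient is $\ge 0$. (At $\lambda=0$ both numerator and denominator of the second expression vanish; it is harmless to interpret the quotient there as its limit $\tfrac12$, or simply to state the substantive inequality in the cleared form $0\le D_{kk}(-\lambda)-1\le\lambda D_{kk}(-\lambda)$ for all $\lambda\ge 0$, where the case $\lambda=0$ reads $0\le 0\le 0$.)

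For the upper bound the key step is $D_{kk}(-\lambda)-1\le\lambda D_{kk}(-\lambda)$, and I would split on the size of $\lambda$. If $\lambda\ge 1$, then using $D_{kk}(-\lambda)\ge 1>0$ we get at once $D_{kk}(-\lambda)-1\le D_{kk}(-\lambda)\le\lambda D_{kk}(-\lambda)$. If $0<\lambda<1$, then $\lambda^{j}\le\lambda$ for every $j\ge 1$, hence $D_{kk}(-\lambda)-1=\sum_{j=1}^{k} d_j\lambda^{j}\le\lambda\sum_{j=1}^{k} d_j=\lambda\bigl(D_{kk}(-1)-1\bigr)$; by \cref{lem:Dkk-1} we have $D_{kk}(-1)\le\sqrt e$, so $\sum_{j=1}^{k} d_j\le\sqrt e-1<1\le D_{kk}(-\lambda)$, and therefore $D_{kk}(-\lambda)-1<\lambda\le\lambda D_{kk}(-\lambda)$. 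Dividing by $\lambda D_{kk}(-\lambda)>0$ in both cases yields $\dfrac{D_{kk}(-\lambda)-1}{\lambda D_{kk}(-\lambda)}\le 1$.

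There is no genuine obstacle here; the proof is a short calculation. The only points that merit a little care are the case distinction at $\lambda=1$ — for large $\lambda$ the bound is controlled trivially by $D_{kk}(-\lambda)-1\le D_{kk}(-\lambda)$, whereas for small $\lambda$ one genuinely needs the summability estimate $\sum_{j\ge1}d_j<1$ that \cref{lem:Dkk-1} provides — and the bookkeeping of the degenerate endpoint $\lambda=0$ so that the stated quotient is well posed.
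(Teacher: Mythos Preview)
Your proof is correct, and it relies on the same two ingredients as the paper's argument: the positivity of the coefficients $d_j$ and the bound $D_{kk}(-1)\le\sqrt e$ from \cref{lem:Dkk-1}. The difference is in packaging. The paper bounds the quotient by $\min\{1/\lambda,\,(D_{kk}(-\lambda)-1)/\lambda\}$, observes that the first function decreases while the second increases, and locates their crossing at some $\lambda^*>1$ to conclude the minimum is at most $1/\lambda^*<1$. You instead clear the denominator, reduce to $D_{kk}(-\lambda)-1\le\lambda D_{kk}(-\lambda)$, and handle the two regimes $\lambda\ge 1$ and $\lambda<1$ directly. Your version is more elementary and avoids the crossing-point discussion; the paper's version yields the slightly sharper (though unused) observation that the bound is strictly below $1$ uniformly. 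Either way the substantive input is $\sum_{j\ge 1}d_j<1$.
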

\begin{proof}
    The first inequality, $D_{kk}(-\lambda) \ge 1$, is straightforward to verify. For the second inequality, we observe that 
    \[\begin{aligned}
        \frac{D_{kk}(-\lambda)-1}{\lambda D_{kk}(-\lambda)} & = \frac1{\lambda + \frac{\lambda}{D_{kk}(-\lambda) - 1}}  \le \min \left\{\frac1\lambda, \frac{D_{kk}(-\lambda) - 1}{\lambda}\right\}.
    \end{aligned}\]
    Now, consider the function
    \[ h(\lambda) := \frac{D_{kk}(-\lambda) - 1}{\lambda} = \sum_{j=1}^k d_j \lambda^{j-1}.\]
    This function is increasing on $[0,+\infty)$, while $1/\lambda$ is decreasing. Furthermore, we have the limits 
    \[\lim_{\lambda \to +\infty }h(\lambda) = +\infty ,\quad \lim_{\lambda \to +\infty}\frac1\lambda = 0.\]
    Additionally, evaluating at $\lambda = 1$,
    \[h(1) = D_{kk}(-1) - 1\le \sqrt e - 1 < 1 = \frac1\lambda \Big|_{\lambda=1},\]
    where the first inequality follows from \Cref{lem:Dkk-1}. Since $h(\lambda)$ is increasing and $1/\lambda$ is decreasing, there exists a unique $\lambda^* \in(1,+\infty)$ such that $h(\lambda^*)=1/\lambda^*$. Thus, for all $\lambda\ge 0$, we conclude that
    \[\frac{D_{kk}(-\lambda)-1}{\lambda D_{kk}(-\lambda)} \le \frac1{\lambda^*} <  1.\]
\end{proof}
\begin{proof}[Proof of \Cref{thm:succ_prob}]
    At first, we consider the norm of $\z^{(1)}$. It is easy to verify that 
    \begin{equation}
        \z_0^{(1)} = D_{kk}^{-1}\left(\A h\right) \x_0 + \left(D_{kk}^{-1}(\A h) - \I\right)\A^{-1} \bb,
    \end{equation}
    since $\z_0^{(1)}$ corresponds to the auxiliary variable $\vv$ in \cref{eq:122}. 
    \begin{itemize}
        \item If $\A$ is Hermitian and negative semi-definite, it admits a spectral decomposition $\A = \U\Lambda \U^\dagger$, where $\Lambda = \text{diag}(\lambda_1,\dots, \lambda_n)$ with $\lambda_1\le\cdots \le \lambda_n\le 0$. Then, we have 
        \[U^\dagger \z_0^{(1)} = D_{kk}^{-1}\left(\Lambda h\right) \left(U^\dagger \x_0\right) + \left(D_{kk}^{-1}(\Lambda h) - \I\right)\Lambda^{-1} \left(U^\dagger \bb\right),\]
        and 
        \[\begin{aligned}
            \left\|\z_0^{(1)}\right\|_2 &\le \left\|D_{kk}^{-1}\left(\Lambda h\right)\right\|_2 \left\|\x_0\right\|_2 + \left\|\left((D_{kk}^{-1}(\Lambda h) - \I\right)(\Lambda h)^{-1}\right\|_2  \left\|h \bb\right\|_2\\
            &\le \left\|\x_0\right\|_2 + h\left\|\bb\right\|_2,
        \end{aligned}\]
        where the second inequality follows from \Cref{obs:dkk}. For $j = 1, \dots, k$, we have 
        \[\begin{aligned}
            \z_j^{(1)} = -d_j D_{kk}^{-1}\left(\A h\right)\cdot \left(-\A h\right)^{j-1}  \left(\A h \x_0 + h\bb\right), \quad j = 1, \dots, k.
        \end{aligned}\]
        This follows from the explicit expression of $\W_k(\A h)^{-1}$. Similarly,
        \[\widehat \z_j^{(1)} = -h d_j D_{kk}^{-1}\left(\Lambda h\right)\cdot \left(-\Lambda h\right)^{j-1}  \left(\Lambda \widehat \x_0 + \widehat \bb\right), \quad j = 1, \dots, k,\]
        where we use the notations
        \[\widehat \z_j^{(1)} = U^\dagger \z_j^{(1)}, \quad \widehat \x_0 = U^\dagger \x_0, \quad \widehat \bb = U^\dagger \bb.\]
        Therefore, we obtain
        \[ \begin{aligned}
            & \quad \sum_{j=1}^k \left\|\z^{(1)}_j\right\|_2^2 = \sum_{j=1}^k \left\|\widehat \z^{(1)}_j\right\|_2^2 = \sum_{j=1}^k \sum_{l=1}^n \left|\widehat \z^{(1)}_{j,l}\right|^2\\
            &= h^2 \sum_{j=1}^k \sum_{l=1}^n d_j^2 \frac{(-\lambda_l h)^{2j-2}}{D_{kk}^2(\lambda_l h)} \left| \lambda_l \widehat \x_{0,l} + \widehat \bb_l\right|^2\\
            &= h^2 \sum_{l=1}^n \left| \lambda_l \widehat \x_{0,l} + \widehat \bb_l\right|^2 \frac{\sum_{j=1}^k d_j^2 (-\lambda_l h)^{2j-2}}{D_{kk}^2(\lambda_l h)}\\
            &= h^2 \sum_{|\lambda_l|\le 1}\left| \lambda_l \widehat \x_{0,l} + \widehat \bb_l\right|^2 \frac{\sum_{j=1}^k d_j^2 (-\lambda_l h)^{2j-2}}{D_{kk}^2(\lambda_l h)} + \sum_{|\lambda_l|> 1}\left|\widehat \x_{0,l} + \lambda_l^{-1}\widehat \bb_l\right|^2 \frac{\sum_{j=1}^k d_j^2 (-\lambda_l h)^{2j}}{D_{kk}^2(\lambda_l h)}\\
            &\le 2h^2 \sum_{|\lambda_l|\le 1}\left(\left| \widehat \x_{0,l}\right|^2 + \left|\widehat \bb_l\right|^2\right) \frac{\left(\sum_{j=1}^k d_j (-\lambda_l h)^{j-1}\right)^2}{D_{kk}^2(\lambda_l h)} + 2\sum_{|\lambda_l|> 1}\left(\left| \widehat \x_{0,l}\right|^2 + \left|\widehat \bb_l\right|^2\right) \frac{\left(\sum_{j=1}^k d_j (-\lambda_l h)^{j}\right)^2}{D_{kk}^2(\lambda_l h)}\\
            & \le 2(h^2 + 1)\left(\|\widehat \x_0\|_2^2 + \|\widehat \bb\|_2^2\right) = 2(h^2 + 1)\left(\|\x_0\|_2^2 + \|\bb\|_2^2\right), 
        \end{aligned} \]
        where the first inequality follows from the facts that $\lambda_l \le 0, \forall l$ and $|a+b|^2 \le 2|a|^2 + 2|b|^2, \forall a, b\in \mathbb C$, while the second equation follows from \Cref{obs:dkk}. Combining these results, we obtain 
        \begin{equation}\label{eq:z_norm}
            \begin{aligned}
                \left\|\z^{(1)}\right\|_2^2 = \sum_{j=0}^k \left\|\z_j^{(1)}\right\|_2^2 &\le \left(\left\|\x_0\right\|_2 + h\left\|\bb\right\|_2\right)^2  + 2(h^2 + 1)\left(\|\x_0\|_2^2 + \|\bb\|_2^2\right)\\
                &\le 2\left((h^2 + 2)\left\|\x_0\right\|_2^2 + (2h^2 + 1) \left\|\bb\right\|_2^2\right).
            \end{aligned}
        \end{equation}
        For $i = 1, \dots, m-1$, we have
        \[\left\|\z^{(i+1)}\right\|_2^2 \le 2\left((h^2 + 2)\left\|\widehat \x(ih)\right\|_2^2 + (2h^2 + 1) \left\|\bb\right\|_2^2\right).\]
        Under the assumptions in \Cref{lem:sol_err}, we obtain inequality \cref{eq:sol_err}. Consequently, we derive 
        \[\begin{aligned}
            \left\|\widehat \x(ih) - \x(ih)\right\|_2 &\le \delta T \cdot \left(\left\|\A\right\|_2\|\x(ih)\|_2 + \|\bb\|_2\right)\\
            &\le \delta T \cdot \left(\left\|\A\right\|_2\max_{0\le t\le T} \|\x(t)\|_2 + \|\bb\|_2\right)\\
            &\le \delta' \cdot \max_{0\le t\le T} \|\x(t)\|_2,
        \end{aligned} \]  
        for all $i = 1, \dots, m$, where 
        \[\delta' =  \delta T \cdot \left(\left\|\A\right\|_2 + \frac{\|\bb\|_2}{\|\x(T)\|_2}\right) < \frac18.\]
        In particular, we obtain $ \left\|\widehat \x(T) - \x(T)\right\|_2 \le \delta' \|\x(T)\|_2$. Thus, we have 
        \[\frac{\left\|\widehat \x(ih)\right\|_2}{\|\widehat \x(T)\|_2} \le \frac{\left\|\x(ih)\right\|_2 + \left\|\x(ih) - \widehat \x(ih)\right\|_2}{\left\|\x(T)\right\|_2 - \left\|\x(T) - \widehat \x(T)\right\|_2} \le \frac{(1 + \delta') \max_{0\le t\le T} \|\x(t)\|_2}{(1 - \delta') \|\x(T)\|_2}.\]
        Finally, we derive
        \[\begin{aligned}
            \mathbb P_{succ} \ge \quad &\frac{p\|\widehat \x(T)\|_2^2}{2 \sum_{i=0}^{m-1}\left((h^2 + 2)\left\|\widehat \x(ih)\right\|_2^2 + (2h^2 + 1) \left\|\bb\right\|_2^2\right) + p\|\widehat \x(T)\|_2^2}\\
            = \quad &  \frac{p}{2 \sum_{i=0}^{m-1} \left((h^2 + 2)\frac{\left\|\widehat \x(ih)\right\|_2^2}{\|\widehat \x(T)\|_2^2} + (2h^2+1) \frac{\left\|\bb\right\|_2^2}{\left\|\widehat \x(T)\right\|_2^2}\right) + p}\\
            \ge \quad & \frac{p}{2 \sum_{i=0}^{m-1} \left((h^2 + 2)\frac{(1 + \delta')^2 \max_{0\le t\le T}\left\|\x(t)\right\|_2^2}{(1 - \delta')^2\|\x(T)\|_2^2} + (2h^2+1) \frac{\left\|\bb\right\|_2^2}{(1-\delta')^2\left\|\x(T)\right\|_2^2}\right) + p}\\
            \ge  \quad & \left(\frac{1-\delta'}{1 + \delta '}\right)^2\cdot \frac{p}{6 m g^2 \left(h^2+1\right) + p}\\
            \ge \quad & \frac{1}{2}\cdot \frac{p}{6 m g^2 \left(h^2+1\right) + p}
        \end{aligned}\]
        where 
        \[g := \frac{\max\left\{\max_{0\le t\le T}\left\|\x(t)\right\|_2, \|\bb\|_2\right\}}{\left\|\x(T)\right\|_2}.\]
        \item If $\A$ is an arbitrary matrix satisfying $\|\A h\|_2\le 1$, then we have 
        \[\begin{aligned}
            \left\|\z_0^{(1)}\right\|_2 &\le  \left\|D_{kk}\left(\A h \right)^{-1}\right\|_2 \left( \left\| \x_0\right\|_2 +\left\|\sum_{j=1}^kd_j (\A h)^{j-1}\right\|_2 h \left\|\bb\right\|_2\right)\\
            &\le \frac{2}{3 - e} \left(\left\| \x_0\right\|_2 +\left(D_{kk}(-1) - 1\right)h \left\|\bb\right\|_2\right)\\
            &\le \frac{2}{3 - e} \left(\left\| \x_0\right\|_2 + h \left\|\bb\right\|_2\right).
        \end{aligned}\]
        The second inequality follows from \Cref{lem:D_inv_norm} and the bound
        \[\left\|\sum_{j=1}^kd_j (\A h)^{j-1}\right\|_2 \le \sum_{j=1}^k d_j \left\|\A h\right\|_2^{j-1} \le  \sum_{j=1}^k d_j = D_{kk}(-1) - 1.\]
        The last inequality is deriving using \Cref{lem:Dkk-1}, which ensures $D_{kk}(-1) - 1\le 1$. Moreover, applying \Cref{lem:D_inv_norm} again, we obtain 
        \[\left\|\z_j^{(1)}\right\|_2 = d_j \left\|D_{kk}^{-1}\left(\A h\right)\right\|_2\cdot \left\|\A h\right\|_2^{j-1}  h \left\|\A \x_0 + \bb\right\|_2 \le \frac{2 h d_j}{3 - e}\left\|\A \x_0 + \bb\right\|_2, \]
        for all $j = 1, \dots, k$. Summing over all $j$, we get 
        \[\begin{aligned}
            \sum_{j=1}^k \left\|\z^{(1)}_j\right\|_2^2 &= \left(\frac{2 h}{3 - e}\left\|\A \x_0 + \bb\right\|_2\right)^2 \sum_{j=1}^k d_j^2 \le \left(\frac{2 h}{3 - e}\right)^2 \left\|\A \x_0 + \bb\right\|_2^2\\
            &\le 2 \left(\frac{2}{3 - e}\right)^2\left(\|\A h\|_2^2 \|\x_0\|_2^2 + h^2 \|\bb\|_2^2\right) \le 2 \left(\frac{2}{3 - e}\right)^2\left(\|\x_0\|_2^2 + h^2\|\bb\|_2^2\right).
        \end{aligned}\] 
        More generally, we obtain
        \[\begin{aligned}
            \sum_{j=0}^k \left\|\z^{(i+1)}_j\right\|_2^2 \le 4 \left(\frac{2}{3 - e}\right)^2\left(\|\widehat \x(ih)\|_2^2 + h^2\|\bb\|_2^2\right),\quad \forall i = 0, \dots, m-1,
        \end{aligned}\] 
        where the constant factor satisfies $\left(\frac{2}{3-e}\right)^2\le 51$. Following a similar argument as in the previous case, we establish 
        \[\begin{aligned}
            \mathbb P_{succ} \ge \quad &\frac{p\|\widehat \x(T)\|_2^2}{204 \sum_{i=0}^{m-1}\left(\left\|\widehat \x(ih)\right\|_2^2 + h^2 \left\|\bb\right\|_2^2\right) + p\|\widehat \x(T)\|_2^2}\\
            \ge \quad &\frac12\cdot \frac{p}{204mg^2(h^2 + 1) + p},
        \end{aligned}\]
        with the aid of \Cref{lem:sol_err0}, where 
        \[ g := \frac{\max\left\{\max_{0\le t\le T}\left\|\x(t)\right\|_2, \|\bb\|_2\right\}}{\left\|\x(T)\right\|_2}. \]
    \end{itemize}
\end{proof}
\section{Lemmas for quantum states}
The following three lemmas from \cite{berry2017quantum} are used to bound the distance between quantum states.
\begin{lem}[Lemma 13 in \cite{berry2017quantum}]\label{lem:berry13}
    Let $\boldsymbol \psi$ and $\boldsymbol \varphi$ be two vectors such that $\|\boldsymbol \psi\|_2\ge \alpha > 0$ and $\|\boldsymbol \psi - \boldsymbol \varphi\|_2\le \beta$. Then
    \[\left\|\frac{\boldsymbol \psi}{\left\|\boldsymbol \psi\right\|_2} - \frac{\boldsymbol \varphi}{\left\|\boldsymbol \varphi\right\|_2}\right\|_2 \le \frac{2\beta}{\alpha}.\]
\end{lem}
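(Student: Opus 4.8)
The plan is to prove \cref{lem:berry13} by the standard two–term triangle inequality estimate that controls how normalization propagates a perturbation; throughout, $\boldsymbol\varphi \neq \boldsymbol 0$ as implicitly required for $\boldsymbol\varphi/\|\boldsymbol\varphi\|_2$ to be defined. One preliminary simplification: if $\beta \ge \alpha$ the claimed bound $2\beta/\alpha \ge 2$ is automatic, since two unit vectors differ by at most $2$ in norm, so I may assume $\beta < \alpha$; then the reverse triangle inequality yields
\[
    \|\boldsymbol\varphi\|_2 \ge \|\boldsymbol\psi\|_2 - \|\boldsymbol\psi - \boldsymbol\varphi\|_2 \ge \alpha - \beta > 0 .
\]

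Next I would split the difference by inserting the intermediate vector $\boldsymbol\varphi/\|\boldsymbol\psi\|_2$:
\[
    \frac{\boldsymbol\psi}{\|\boldsymbol\psi\|_2} - \frac{\boldsymbol\varphi}{\|\boldsymbol\varphi\|_2}
    = \frac{\boldsymbol\psi - \boldsymbol\varphi}{\|\boldsymbol\psi\|_2}
      + \left(\frac{1}{\|\boldsymbol\psi\|_2} - \frac{1}{\|\boldsymbol\varphi\|_2}\right)\boldsymbol\varphi ,
\]
and apply the triangle inequality. The first term is at most $\|\boldsymbol\psi - \boldsymbol\varphi\|_2/\|\boldsymbol\psi\|_2 \le \beta/\alpha$, using $\|\boldsymbol\psi\|_2 \ge \alpha$. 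For the second,
\[
    \left|\frac{1}{\|\boldsymbol\psi\|_2} - \frac{1}{\|\boldsymbol\varphi\|_2}\right|\|\boldsymbol\varphi\|_2
    = \frac{\bigl|\,\|\boldsymbol\varphi\|_2 - \|\boldsymbol\psi\|_2\,\bigr|}{\|\boldsymbol\psi\|_2}
    \le \frac{\|\boldsymbol\psi - \boldsymbol\varphi\|_2}{\|\boldsymbol\psi\|_2}
    \le \frac{\beta}{\alpha},
\]
again by the reverse triangle inequality and $\|\boldsymbol\psi\|_2 \ge \alpha$. Adding the two estimates gives $2\beta/\alpha$, which is the claim.

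I do not expect any real obstacle here: \cref{lem:berry13} is an elementary perturbation lemma taken verbatim from \cite{berry2017quantum}, and the only points requiring a moment of care are the well-definedness of $\boldsymbol\varphi/\|\boldsymbol\varphi\|_2$ and applying the reverse triangle inequality in the correct direction. I would present the argument exactly in the order above: reduce to $\beta<\alpha$, add and subtract $\boldsymbol\varphi/\|\boldsymbol\psi\|_2$, then bound the two resulting terms.
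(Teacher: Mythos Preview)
Your proof is correct and is the standard elementary argument. Note that the paper itself does not supply a proof of \cref{lem:berry13}: it simply quotes the lemma from \cite{berry2017quantum} and uses it as a black box, so there is no ``paper's own proof'' to compare against here.
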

\begin{lem}[Lemma 14 in \cite{berry2017quantum}]\label{lem:berry14}
    Let $|\boldsymbol \psi\rangle = \alpha|0\rangle |\boldsymbol \psi_0\rangle + \sqrt{1-\alpha^2}|1\rangle |\boldsymbol \psi_1\rangle$ and $|\boldsymbol \varphi\rangle = \beta|0\rangle |\boldsymbol \varphi_0\rangle + \sqrt{1-\beta^2}|1\rangle |\boldsymbol \varphi_1\rangle$, where $|\boldsymbol \psi_0\rangle, |\boldsymbol \psi_1\rangle, |\boldsymbol \varphi_0\rangle, |\boldsymbol \varphi_1\rangle$ are unit vectors, and $\alpha, \beta\in [0,1]$. Suppose $\left\||\boldsymbol \psi\rangle - |\boldsymbol \varphi\rangle \right\|_2 \le \delta < \alpha$. Then $\left\||\boldsymbol \psi_0\rangle - |\boldsymbol \varphi_0\rangle \right\|_2 \le \frac{2\delta}{\alpha-\delta}$.
\end{lem}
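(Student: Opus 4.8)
The plan is to reduce the claim to a statement about the $|0\rangle$-component of the first register and then carefully track how the two (possibly different) normalizations $\alpha$ and $\beta$ interact. First I would apply the projection $P := \langle 0|\otimes I$ to the hypothesis. Since $P$ has operator norm $1$ and $P|\boldsymbol\psi\rangle = \alpha|\boldsymbol\psi_0\rangle$, $P|\boldsymbol\varphi\rangle = \beta|\boldsymbol\varphi_0\rangle$, the assumption $\||\boldsymbol\psi\rangle - |\boldsymbol\varphi\rangle\|_2 \le \delta$ immediately gives
\[
    \left\|\alpha|\boldsymbol\psi_0\rangle - \beta|\boldsymbol\varphi_0\rangle\right\|_2 \le \delta .
\]

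From this single inequality I would extract two facts. Taking norms of the two projected vectors (recall $|\boldsymbol\psi_0\rangle$ and $|\boldsymbol\varphi_0\rangle$ are unit vectors and $\alpha,\beta\ge 0$) and using the reverse triangle inequality yields $|\alpha-\beta|\le\delta$, hence $\beta\ge\alpha-\delta>0$ by the hypothesis $\delta<\alpha$. Then I would split
\[
    \alpha|\boldsymbol\psi_0\rangle - \beta|\boldsymbol\varphi_0\rangle = \beta\left(|\boldsymbol\psi_0\rangle - |\boldsymbol\varphi_0\rangle\right) + (\alpha-\beta)|\boldsymbol\psi_0\rangle ,
\]
so that the triangle inequality gives $\beta\,\||\boldsymbol\psi_0\rangle-|\boldsymbol\varphi_0\rangle\|_2 \le \delta + |\alpha-\beta| \le 2\delta$. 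Dividing by $\beta$ and using $\beta\ge\alpha-\delta$ produces $\||\boldsymbol\psi_0\rangle-|\boldsymbol\varphi_0\rangle\|_2 \le 2\delta/(\alpha-\delta)$, which is the claimed bound.

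There is no genuinely hard step in this argument; the only point that needs care is that the splitting should factor out $\beta$ rather than $\alpha$. Factoring out $\alpha$ instead would only yield the weaker bound $2\delta/\alpha$, whereas factoring out $\beta$ and then lower bounding $\beta$ by $\alpha-\delta$ is what produces the stated denominator. I would also note that no global-phase or branch-of-square-root argument is needed, since the statement has already fixed $\alpha,\beta$ to be real and nonnegative, so the projected vectors $P|\boldsymbol\psi\rangle$ and $P|\boldsymbol\varphi\rangle$ are literally $\alpha|\boldsymbol\psi_0\rangle$ and $\beta|\boldsymbol\varphi_0\rangle$.
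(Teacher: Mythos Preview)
Your argument is correct and complete: projecting onto the $|0\rangle$ component, extracting $|\alpha-\beta|\le\delta$ via the reverse triangle inequality, and then splitting off $\beta(|\boldsymbol\psi_0\rangle-|\boldsymbol\varphi_0\rangle)$ all work exactly as you describe. The paper does not supply its own proof of this lemma; it simply quotes the statement from \cite{berry2017quantum}, so there is nothing to compare against here.

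One small correction to your closing remark: factoring out $\alpha$ instead of $\beta$ would give $\alpha\,\||\boldsymbol\psi_0\rangle-|\boldsymbol\varphi_0\rangle\|_2 \le 2\delta$, hence the bound $2\delta/\alpha$, which is \emph{stronger} than $2\delta/(\alpha-\delta)$ since $\alpha>\alpha-\delta$. So either factoring suffices for the stated lemma, and in fact the $\alpha$-factoring is the sharper of the two; your comment has the comparison reversed. This does not affect the validity of your proof.
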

\begin{lem}[Lemma 15 in \cite{berry2017quantum}]\label{lem:berry15}
    Let $|\boldsymbol \psi\rangle = \alpha|0\rangle |\boldsymbol \psi_0\rangle + \sqrt{1-\alpha^2}|1\rangle |\boldsymbol \psi_1\rangle$ and $|\boldsymbol \varphi\rangle = \beta|0\rangle |\boldsymbol \varphi_0\rangle + \sqrt{1-\beta^2}|1\rangle |\boldsymbol \varphi_1\rangle$, where $|\boldsymbol \psi_0\rangle, |\boldsymbol \psi_1\rangle, |\boldsymbol \varphi_0\rangle, |\boldsymbol \varphi_1\rangle$ are unit vectors, and $\alpha, \beta\in [0,1]$. Suppose $\left\||\boldsymbol \psi\rangle - |\boldsymbol \varphi\rangle \right\|_2 \le \delta < \alpha$. Then $\beta \ge \alpha - \delta$.
\end{lem}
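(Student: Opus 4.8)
The plan is to reduce the claim to a single application of the (reverse) triangle inequality after restricting attention to the $|0\rangle$-sector. First I would introduce the orthogonal projector $\Pi_0 := |0\rangle\langle 0|\otimes I$ onto the subspace spanned by states carrying $|0\rangle$ in the first register. Since $\Pi_0$ is an orthogonal projector it has operator norm $1$, hence is a contraction, so $\|\Pi_0(|\boldsymbol\psi\rangle - |\boldsymbol\varphi\rangle)\|_2 \le \||\boldsymbol\psi\rangle - |\boldsymbol\varphi\rangle\|_2 \le \delta$.

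Next I would evaluate the two projected vectors directly from the assumed block decompositions: the orthonormality of $|0\rangle$ and $|1\rangle$ gives $\Pi_0|\boldsymbol\psi\rangle = \alpha|0\rangle|\boldsymbol\psi_0\rangle$ and $\Pi_0|\boldsymbol\varphi\rangle = \beta|0\rangle|\boldsymbol\varphi_0\rangle$, and since $|\boldsymbol\psi_0\rangle$ and $|\boldsymbol\varphi_0\rangle$ are unit vectors these have norms exactly $\alpha$ and $\beta$. The reverse triangle inequality then yields $|\alpha - \beta| = \bigl|\,\|\Pi_0|\boldsymbol\psi\rangle\|_2 - \|\Pi_0|\boldsymbol\varphi\rangle\|_2\,\bigr| \le \|\Pi_0|\boldsymbol\psi\rangle - \Pi_0|\boldsymbol\varphi\rangle\|_2 \le \delta$. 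In particular $\alpha - \beta \le \delta$, i.e.\ $\beta \ge \alpha - \delta$, which is the desired conclusion. The hypothesis $\delta < \alpha$ is not needed for this inequality itself; it only serves to guarantee $\beta \ge \alpha - \delta > 0$, so that the bound is nonvacuous in the way it is used together with Lemmas~\ref{lem:berry13} and~\ref{lem:berry14}.

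I do not anticipate any genuine obstacle: the only point requiring care is that the decomposition in the hypothesis is orthogonal --- the tensor factors $|0\rangle$ and $|1\rangle$ are orthonormal --- which is exactly what makes $\Pi_0$ act as stated and what makes the norms of the $|0\rangle$-components equal $\alpha$ and $\beta$ precisely rather than merely bounded by them. Should one prefer to avoid projectors, the same estimate follows by expanding $\||\boldsymbol\psi\rangle - |\boldsymbol\varphi\rangle\|_2^2 = \|\alpha|\boldsymbol\psi_0\rangle - \beta|\boldsymbol\varphi_0\rangle\|_2^2 + \|\sqrt{1-\alpha^2}\,|\boldsymbol\psi_1\rangle - \sqrt{1-\beta^2}\,|\boldsymbol\varphi_1\rangle\|_2^2$, discarding the second nonnegative term, and applying the reverse triangle inequality to $\|\alpha|\boldsymbol\psi_0\rangle - \beta|\boldsymbol\varphi_0\rangle\|_2$.
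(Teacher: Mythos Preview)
Your argument is correct. Note, however, that the paper does not supply its own proof of this lemma: it is stated in the appendix as a direct citation of Lemma~15 in \cite{berry2017quantum}, so there is no in-paper proof to compare against. Your projector-and-reverse-triangle-inequality argument is a clean, standard way to obtain the bound, and your remark that the hypothesis $\delta<\alpha$ is not needed for the inequality itself (only to make the conclusion nontrivial) is accurate.
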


\bibliography{reference}
\bibliographystyle{plainnat}

\end{document}